\def\doi{8 (3:01) 2012}
\theoremstyle{plain}
\newtheorem{theorem}{Theorem}[section]
\newtheorem{lemma}[theorem]{Lemma}
\newtheorem{corollary}[theorem]{Corollary}
\newtheorem{proposition}[theorem]{Proposition}
\theoremstyle{definition}
\newtheorem{definition}[theorem]{Definition}
\newtheorem{remark}[theorem]{Remark}
\newcommand{\heapvar}{\mathcal{H}}
\newcommand{\permvar}{\mathcal{P}}
\newcommand{\framed}[1]{\mathsf{sframed}(#1)}
\newcommand{\wpCh}[1]{wp_{\mathrm{ch}}(#1)}
\newcommand{\assume}{\texttt{assume}}
\newcommand{\assert}{\texttt{assert}}
\newcommand{\inhale}{\texttt{inhale}}
\newcommand{\exhale}{\texttt{exhale}}
\newcommand{\havoc}{\texttt{havoc}}
\newcommand{\Null}{\texttt{null}}
\newcommand{\assumeSL}{\texttt{assume}\ensuremath{^*}}
\newcommand{\assertSL}{\texttt{assert}\ensuremath{^*}}
\newcommand{\wand}{\mathbin{-\!\!*}}
\newcommand{\full}{\ensuremath{1}}
\newcommand{\none}{\ensuremath{0}}
\newcommand{\Perm}{\ensuremath{P}}
\newcommand{\Heap}{\ensuremath{H}}
\newcommand{\PHeap}{\ensuremath{h}}
\newcommand{\weakenable}{weakening-closed}
\newcommand{\Weakenable}{Weakening-closed}
 \def\code#1{\ensuremath{\mathsf{#1}}}
\newcommand{\agrees}[1]{\stackrel{#1}{\equiv}}
\newcommand{\modelsSL}{\models_{\scriptscriptstyle\textit{SL}}}
\newcommand{\modelsFO}{\models_{\scriptscriptstyle\textit{FO}}}
\newcommand{\modelsISL}{\ensuremath{\@ifnextchar\bgroup {\modelsISLargs}{\modelsISLsymb}}}
\newcommand{\modelsISLsymb}{\models_{\scriptscriptstyle\SLE}}
\newcommand{\modelsISLargs}[4]{\ensuremath{#1, #2, #3 \modelsISLsymb #4}}
\def\modelsSLE{\modelsISL}
\newcommand{\fr}{\@ifnextchar\bgroup {\frargs}{\frsymb}}
\newcommand{\frsymb}{\textit{framed}}
\newcommand{\frargs}[5]{\ensuremath{\frsymb(#2,(#3,#4,#5),#1)}}
\newcommand{\notmodelsISL}{\not\models_{\scriptscriptstyle\SLE}}
\newcommand{\SL}{\ensuremath{\textit{SL}}}
\newcommand{\SLE}{\ensuremath{\textit{TPL}}}
\newcommand{\ESL}{\SLE}
\newcommand{\pointsto}[3]{\ensuremath{{#1}\stackrel{#2}{\mapsto}{#3}}}
\newcommand{\first}[1]{\ensuremath{{\downarrow_1}({#1})}}
\newcommand{\second}[1]{\ensuremath{{\downarrow_2}({#1})}}
\newcommand{\restrict}[2]{\ensuremath{{#1}{\upharpoonright}{#2}}}
\newcommand{\condheap}[3]{\ensuremath{(#1\;?\;#2 : #3)}}
\newcommand{\rds}[1]{\ensuremath{{\textit{rds}}({#1})}}
\newcommand{\entails}{\modelsISL}
\newcommand{\valid}{\modelsISL}
\newcommand{\isequiv}{\equiv_{\SLE}}
\newcommand{\extend}[5]{\ensuremath{(#1, #2, #3)\lhd#4 \modelsISL #5 }}
\def\operator#1{\ensuremath{\@ifnextchar\bgroup {\operatorarg{#1}}{#1}}}
\def\operatorarg#1#2{{#1}{(#2)}}
\newcommand{\eframed}[1]{\mathsf{ExtFrm}(#1)}
\newcommand{\deframed}[1]{\mathsf{DisExtFrm}(#1)}
\newcommand{\globalExts}[1]{\mathsf{globalExts}(#1)}
\newcommand{\globalDisjExts}[1]{\mathsf{globalDisjExts}(#1)}
\newcommand{\localDisjExts}[1]{\mathsf{localDisjExts}(#1)}
\def\dom{\operator{\textit{dom}}}
\def\Env{\operator{\ensuremath{`s}}}
\newcommand{\acc}{\textit{acc}}
\newskip \point \point =1pt
\def \semEl#1{\copy134{#1}\copy135}
\newcommand{\toSO}[1]{\llfloor #1 \rrfloor}
\newcommand{\semSLE}[2]{\llbracket{#1}\rrbracket_{#2}}
\newcommand{\semCE}[2]{\llbracket{#1}\rrbracket_{#2}}
\let\toSOE=\toSO
\newcommand{\SLtoC}[1]{\ensuremath{[#1]}}
\newcommand{\imp}{\rightarrow}
\newcommand{\as}[1]  {{{#1}}}
\newcommand{\newlist}[1]{
   \newcounter{#1}
   \setcounter{#1}{0}
   \expandafter\newcommand\csname listcomm#1\endcsname{}
   \expandafter\newcommand\csname#1list\endcsname%
	{\csname listcomm#1\roman{#1}\endcsname}%
   \expandafter\newcommand\csname add#1\endcsname[1]{
      \addtocounter{#1}{1}
      \expandafter\newcommand\csname listcomm#1\roman{#1}\endcsname%
      { \addtocounter{#1}{-1} \csname listcomm#1\roman{#1}\endcsname \par ##1 \addtocounter{#1}{1}}
   }
}
\newlist{proofs}
\newcommand{\dispproof}[6]{

\medskip\noindent\begin{minipage}{\linewidth}
\begin{#1}[#2]
\label{#3}
#4%
\ifthenelse{\equal{}{#5}}{
\hfill  (See page~\pageref{#3:proof} for proof.)
\end{#1}
\end{minipage}
}{%

\end{#1}
\end{minipage}\medskip

\noindent\begin{minipage}{\linewidth}\emph{Proof sketch:}
  #5
\hfill (Full proof on page~\pageref{#3:proof})
\end{minipage}}
}
\newcommand{\dispAppproof}[6]{

\noindent
\begin{minipage}{\linewidth}
\textbf{\MakeUppercase#1 \ref{#3}}\ifthenelse{\equal{}{#2}}{}{\ (#2)}\textbf{.}
\label{#3:proof}
#4
\end{minipage}
\proof
#6
}
\newcommand{\addprop}[6]{
\dispproof{#1}{#2}{#3}{#4}{#5}{#6}
\addproofs{\dispAppproof{#1}{#2}{#3}{#4}{#5}{#6}}
}
\begin{document}

\newcommand{\SomethingLogic}{Total Heaps Permission Logic}
\title{The Relationship Between Separation Logic and Implicit Dynamic Frames}

\author[M.~J.~Parkinson]{Matthew J.~Parkinson\rsuper a}
\address{{\lsuper a}Microsoft Research Cambridge}
\email{mattpark@microsoft.com}
\author[A.~J.~Summers]{Alexander J.~Summers\rsuper b}
\address{{\lsuper b}ETH Zurich}
\email{alexander.summers@inf.ethz.ch}

\keywords{Separation Logic, Implicit Dynamic Frames, Verification Logics, Partial Heaps, Permissions, Concurrency}
\subjclass{F.3.1}

\begin{abstract}
Separation logic is a concise method for specifying programs that manipulate dynamically allocated storage. Partially inspired by separation logic, Implicit Dynamic Frames has recently been proposed, aiming at first-order tool support. In this paper, we precisely connect the semantics of these two logics. We define a logic whose syntax subsumes both that of a standard separation logic, and that of implicit dynamic frames as sub-syntaxes. We define a total heap semantics for our logic, and, for the separation logic subsyntax, prove it equivalent the standard partial heaps model. In order to define a semantics which works uniformly for both subsyntaxes, we define the novel concept of a \emph{minimal state extension}, which provides a different (but equivalent) definition of the semantics of separation logic implication and magic wand connectives, while also giving a suitable semantics for these connectives in implicit dynamic frames. We show that our resulting semantics agrees with the existing definition of weakest pre-condition semantics for the implicit dynamic frames fragment. Finally, we show that we can encode the separation logic fragment of our logic into the implicit dynamic frames fragment, preserving semantics. For the connectives typically supported by tools, this shows that separation logic can be faithfully encoded in a first-order automatic verification tool (Chalice).
\end{abstract}

\maketitle

\section{Introduction}
Separation logic (SL)~\cite{IshtiaqOHearn'01,OHearnReynoldsYang'01} is a popular approach to specifying the behaviour of programs, as it naturally deals with the issues of aliasing.  Separation logic assertions extend classical logic with extra connectives and predicates to describe memory layout. This makes it difficult to reuse current tool support for verification.  \emph{Implicit dynamic frames} (IDF)~\cite{Smans'09} was developed to give the benefits of separation logic specifications, while leveraging existing tool support for first-order logic.

Although IDF was partially inspired by separation logic, there are many differences between SL and IDF that make understanding their relationship difficult.  SL does not allow expressions that refer to the heap, while IDF does.  SL is defined on partial heaps, while IDF is defined using total heaps and permission masks.  The semantics of IDF are only defined by its translation to first-order verification conditions, while SL has a direct Kripke semantics for its assertions.  These differences make it challenging to understand the relationship between the two approaches.

In this paper, we investigate the formal relationship between the two approaches. As a medium for this comparison, we define a verification logic (which we name \emph{\SomethingLogic}) whose syntax includes both that of a typical separation logic, and that of implicit dynamic frames. We define a semantics for \SomethingLogic{} based on states which incorporate a \emph{total heap} and a separate \emph{permission mask}, that we show both captures the original semantics of separation logic, and correctly captures the semantics of IDF.
 Intuitively, the permission mask specifies the locations in the heap which are safe to access.  Our formulation allows expressions that access the heap to be defined, and this complicates the definition of the separation logic ``magic wand'' and implication connectives.  
In order to define a suitable semantics for these connectives which is compatible with both approaches, we introduce the novel concept of \emph{minimal extensions} of a state, and use this to define a novel semantics for these connectives, which nonetheless agrees with the original semantics for the separation logic fragment of our logic. Correctly reflecting the standard semantics of the separating conjunction and magic wand allows us to use these connectives to define the usual separation logic notion of weakest pre-conditions of commands.

In order to show that our logic correctly captures the semantics of the IDF formulas, we focus on the form of IDF found in the concurrent verification tool Chalice~\cite{LeinoM09}. As the semantics of IDF formulas are only defined indirectly via weakest pre-condition calculations for a language using them, we show that the verification conditions (VCs) generated by the existing Boogie2~\cite{LeinoBoogie2} encoding and the VCs generated from the separation logic proof rules are logically equivalent.  This shows that our model directly captures the existing semantics of IDF.

We make use of these strong correspondences to define an encoding of separation logic into implicit dynamic frames that preserves semantics. We then define a subsyntax of separation logic (corresponding to the logical connectives supported by many practical tools), which maps onto the assertion language supported by Chalice, and show that this fragment of separation logic can, via our correspondences, be handled in a purely first-order prover.

%
%

\subsubsection*{Outline}
The paper is structured as follows. We begin by presenting the background definitions of both separation logic and implicit dynamic frames (\S\ref{sec:background}). We then provide an overview of the challenges in defining our logic and semantics, and present \SomethingLogic{} (\S\ref{sec:merged}), characterising various properties of our total heap semantics. We prove the correspondence between VCs as calculated in separation logic and in implicit dynamic frames (\S\ref{sec:vc}), and then combine our proven results to show how to map a fragment of separation logic into contracts which can be verified by the Chalice tool, preserving their original semantics (\S\ref{sec:mapping}).  Finally, we discuss related work (\S\ref{sec:relatedwork}), consider possible extensions and conclude (\S\ref{sec:extensions}).


The contributions of this paper are as follows:
\begin{iteMize}{$\bullet$}
\item We define a total heaps semantics for a logic whose syntax subsumes a separation logic, and prove that, for the separation logic fragment, our total heaps semantics is equivalent with the standard (partial heaps) semantics for the separation logic.
\item We define a direct semantics for the implicit dynamic frames logic (the specification logic of the Chalice tool), which has so far only been given a semantics implicitly, via verification conditions.
\item We show how to encode a standard fragment of separation logic into an implicit dynamic frames setting, preserving its semantics.
\item We show that verification conditions as computed for separation logic coincide via our translation and semantics with the verification conditions computed by Chalice.
\item We present the notion of \emph{minimal extensions} of a state, and show how it can be used to define the semantics of the separation logic implication and magic wand connectives in a new way.
\end{iteMize}

\subsubsection*{Extensions with regard to the conference version}
This paper extends the conference version \cite{ParkinsonSummers'11} by providing a different definition of implication that corresponds to that used in Chalice.  The conference version provided a definition of implication that was correct with respect to separation logic, but on the formulas used in Chalice it had undesirable behaviour.  We have altered the definitions of implication and magic wand to correctly model both Chalice and separation logic.

The paper provides extended discussions of the design of the logic, detailing the requirements which come from each of our target logics. We explicitly define and discuss the various notions of \emph{state extension} which were used implicitly in the formulations of the technical definitions in our precursor paper. The semantics of implication in the logic is discussed in detail, and a new concept of \emph{minimal extensions} is used to obtain a semantics which works well for both target logics. The resulting semantics is formulated differently from the traditional presentation of implication in intuitionistic separation logic; our definition requires checking the subformulas in fewer states.

The syntactic condition on when a Chalice assertion was considered self-framing in the conference paper was overly restrictive, in that it did not reflect that Chalice takes account of the restrictions provided by an assertion: for instance, $acc(x.f,1) * y=x * y.f=5$ would not have been considered self-framing in the conference version, but is in this paper, and in Chalice.

We provide a new section (\S\ref{sec:mapping}) which shows how to combine the previously-proved results to explicitly show that a fragment of separation logic can be equivalently verified using separation logic weakest pre-conditions, or (via an encoding) using implicit dynamic frames specifications and weakest pre-condition calculations.

Finally, we provide full details of all proofs.

\section{Background and Motivation}
\label{sec:background}
\subsection{Standard Separation Logic}
Separation logic~\cite{IshtiaqOHearn'01,OHearnReynoldsYang'01} is a verification logic which was originally introduced to handle the verification of sequential programs in languages with manual memory management, such as C. The key feature of the logic is the ability to describe the behaviour of commands in terms of disjoint heap fragments, greatly simplifying the work required when ``framing on'' extra properties in a modular setting. Since its inception, separation logic has evolved in a variety of ways. In particular, variants of separation logic are now used for the verification of object-oriented languages with garbage collection, such as Java and C${}^\sharp$ \cite{parkinson_thesis}.

In order to handle concurrency, separation logic has been extended to consider its basic points-to assertions as \emph{permissions}~\cite{ohearn:csljournal}, determining which thread is allowed to read and write the corresponding state. To gain flexibility, \emph{fractional permissions}~\cite{boyland03,bornat05} were introduced, allowing the permissions governed by points-to assertions to be split and recombined. A fractional permission is a rational number $\none<`p\leq \full$, where $\full$ denotes full and exclusive (read/write) permission, and any other permission denotes read-only permission. In this paper we focus on the following core fragment of separation logic with fractional permissions.
\begin{definition}[Separation Logic Assertions (\SL)]\label{defn:seplogicsyntax}
We assume a set of \emph{object identifiers}\footnote{These could be considered to be addresses, but we choose to be parametric with the concrete implementation of the heap.}, ranged over by $`i$. We also assume a set of \emph{field identifiers}, ranged over by $f$. \emph{Values}, ranged over by $v$ are either object identifiers, integers, or the special value $\Null$.

 The syntaxes of separation logic expressions (ranged over by $e$) and assertions (ranged over by $a$) are defined as follows\footnote{Note that variables $x$ need not be program variables, but can also be specification-only variables (sometimes called \emph{logical}, \emph{ghost} or \emph{specification variables})}. In this definition, $n$ ranges over integer constants, and $\none<`p\leq \full$.
\[   \begin{array}{rcl}
   e &\ ::=\ & x \mid \Null \mid n\\
   a &\ ::=\ & e=e\ \mid\ \pointsto{e.f}{`p}{e}\ \mid\ a * a\ \mid\ a\wand a\ \mid\ a \wedge a \ \mid\ a \vee a\ \mid\ a\imp a
\mid \exists x.\; a
   \end{array}\]
 We will refer to this separation logic simply as \SL{} hereafter.

\end{definition}
The key feature of separation logic is the facility to reason locally about separate heap portions. As such, the standard semantics for separation logic is formulated in terms of judgements parameterised by partial heaps (sometimes called \emph{heap fragments}), which can be split and combined together as required. The critical new connectives are the \emph{separating conjunction} $*$, and the \emph{magic wand} $\wand$. The separating conjunction $a_1 * a_2$ expresses that $a_1$ and $a_2$ are true and depend on disjoint fragments of the heap. The magic wand $a_1\wand a_2$ expresses that if any extra partial heap satisfying $a_1$ is combined with the current partial heap, then the resulting heap is guaranteed to satisfy $a_2$.

Fractional permissions\footnote{Chalice, described in the next subsection, actually uses a slight variation on fractional permissions to make automatic theorem proving easier.}~\cite{bornat05,boyland03} are employed to manage shared memory concurrency in the usual way - a thread may only read from a heap location if it has a non-zero permission to the location, and it may only write to a location if it has the whole (full) permission to it. By careful permission accounting, it can then be guaranteed that a thread can never modify a heap location while another thread can read it. Note that permissions are handled (via points-to predicates $\pointsto{e.f}{`p}{e'}$) on a per-field basis: it is possible for an assertion to provide permission for only one field of an object. This fine granularity of permissions allows for greater flexibility in the resulting logic - it can be specified that different threads have access to different fields of an object at the same time, for example. Combination of partial heaps includes combination of their permissions, where they overlap.

\begin{definition}[Partial Fractional Heaps~\cite{bornat05}]
\mbox{}\label{PartialFractionalHeap}
\begin{iteMize}{$\bullet$}
\item
A \emph{partial fractional heap} \PHeap{} is a partial function from pairs $(`i,f)$ of object-identifier and field-identifier to pairs $(v,`p)$ of value and \emph{non-zero}
permission $`p$. Partial heap lookup is written $\PHeap[`i,f]$, and is only defined when $(`i,f)\in\dom(\PHeap)$.
\item
Partial heap extension: $\PHeap_1\subseteq\PHeap_2$, iff $\forall (`i,f)\in\dom(\PHeap_1).\  \first{\PHeap_1[`i,f]} = \first{\PHeap_2[`i,f]}$
and $\second{\PHeap_1[`i,f]} \leq \second{\PHeap_2[`i,f]}$.
\item Partial heap compatibility: $\PHeap_1\perp\PHeap_2$ iff $\forall (`i,f)\in\dom(\PHeap_1)\cap\dom(\PHeap_2).\ \first{\PHeap_1[`i,f]} = \first{\PHeap_2[`i,f]}\ \wedge\  \second{\PHeap_1[`i,f]} + \second{\PHeap_2[`i,f]} \leq \full $.
\item The combination of two partial heaps, written $\PHeap_1 * \PHeap_2$, is defined only when $\PHeap_1\perp\PHeap_2$ holds, by the following equations:
\[\begin{array}{l}
\dom(\PHeap_1 * \PHeap_2) = \dom(\PHeap_1)\cup\dom(\PHeap_2)\\
\forall(`i,f)\in\dom(\PHeap_1 * \PHeap_2). \\
\begin{array}{ll}(\PHeap_1 * \PHeap_2)[`i,f] = \left\{
   \begin{array}{ll}
    \multicolumn{2}{l}{(\first{\PHeap_1[`i,f]},\second{\PHeap_1[`i,f]})
         \hfill \textit{ if }(`i,f)\notin \dom(\PHeap_2)}\\
    \multicolumn{2}{l}{(\first{\PHeap_2[`i,f]},\second{\PHeap_2[`i,f]})
         \hfill  \textit{ if }(`i,f)\notin \dom(\PHeap_1)}\\
    (\first{\PHeap_1[`i,f]},(\second{\PHeap_1[`i,f]}+ \second{\PHeap_2[`i,f]}))
         & \textit{ otherwise }\\
   \end{array}\right.\\
\end{array}\end{array}\]
\end{iteMize}
We use $\downarrow_n$ to denote the $n$th component of a tuple.
\end{definition}

There are two main flavours of separation logic studied in the literature: \emph{classical} separation logic, and \emph{intuitionistic} separation logic~\cite{IshtiaqOHearn'01}. In this paper, we consider  \emph{intuitionistic} separation logic. In intuitionistic separation logic, truth of assertions is closed under heap extension, which is appropriate for a garbage-collected language such as Java/C$^\sharp$, rather than a language with manual memory management, such as C. The standard intuitionistic separation logic semantics for our fragment \SL{} is defined as follows~\cite{parkinson_thesis}.
\begin{definition}[Standard Semantics for \SL~\cite{bornat05}]\label{defn:seplogicsemantics}
Environments $\Env$ are partial functions\footnote{However, we assume that all applications of environments are well-defined; i.e., whenever we write $\Env(x)$, that $x\in\dom(\Env)$. This assumption is justified so long as the program and specifications are type-checked appropriately.} from variable names to values. Separation logic expression semantics, $\semSLE{e}{\Env}$ are defined by $\semSLE{x}{\Env} = \Env(x)$, $\semSLE{n}{\Env} =n$ and $\semSLE{\Null}{\Env} = \Null$. The semantics of assertions is then as follows:
\[
  \begin{array}{ll}
    \PHeap, \Env \modelsSL \pointsto{e_1.f}{`p}{e_2}
    &\iff
    \second{\PHeap[\semSLE{e_1}{\Env}, f]} \geq `p \;\; \land \;\;\first{\PHeap[\semSLE{e_1}{\Env}, f]} = \semSLE{e_2}{\Env}
    \\[0.5em]
    \PHeap, \Env \modelsSL e = e' &\iff
    \semSLE{e}{\Env} = \semSLE{e'}{\Env}
    \\[0.5em]
    \PHeap, \Env \modelsSL a_1 * a_2  &\iff
    \exists \PHeap_1, \PHeap_2.(\;
   \PHeap = \PHeap_1 * \PHeap_2 \ \land\
    \PHeap_1, \Env \modelsSL a_1
     \;\; \land \;\;
     \PHeap_2, \Env \modelsSL a_2)
    \\[0.5em]
    \PHeap, \Env \modelsSL a_1 \wand a_2  &\iff
    \forall \PHeap'.(\;
   \PHeap' \perp \PHeap\  \land \
    \PHeap', \Env \modelsSL a_1 \;\; \Rightarrow\;\;
     \PHeap{*}\PHeap', \Env \modelsSL a_2)
    \\[0.5em]
    \PHeap, \Env \modelsSL a_1 \wedge a_2  &\iff
    \PHeap, \Env \modelsSL a_1 \;\; \land \;\;
     \PHeap, \Env \modelsSL a_2
    \\[0.5em]
    \PHeap, \Env \modelsSL a_1 \vee a_2  &\iff
    \PHeap, \Env \modelsSL a_1 \;\; \vee \;\;
     \PHeap, \Env \modelsSL a_2
    \\[0.5em]
    \PHeap, \Env \modelsSL a_1 \imp a_2  &\iff
    \forall \PHeap'.(\;
    \PHeap' \perp \PHeap\  \land \
    \PHeap{*}\PHeap', \Env \modelsSL a_1 \;\; \Rightarrow\;\;
     \PHeap{*}\PHeap', \Env \modelsSL a_2)
    \\[0.5em]
    \PHeap, \Env \modelsSL \exists x. \; a  &\iff
    \exists v.(\;
	\PHeap, \Env[x \mapsto v] \modelsSL a)
  \end{array}
\]\smallskip
\end{definition}

\noindent The semantics for the separating conjunction and magic wand express the required splitting and combination of partial heaps. The semantics for logical implication $\imp$ 
considers all possible extensions of the current heap, so that assertion truth is closed under heap extension~\cite{IshtiaqOHearn'01}. In examples, we will sometimes write $\pointsto{e.f}{`p}{\_}$ as a shorthand for $\exists x. \; \pointsto{e.f}{`p}{x}$.

\subsubsection{Assume/Assert}\label{sec:assumeassert}
Verification in Boogie2 \cite{LeinoBoogie2} and related technologies uses two commands commonly to encode verification:
\assume\ $A_1$ and \assert\ $A_1$.  The first allows the verification to work forwards with the additional assumption of $A_1$, while the second requires $A_1$ to hold otherwise it will be considered a fault.
These can be given weakest precondition semantics of:

\noindent $\qquad \qquad \textit{wp}(\assert\ A_1,A_2) = A_1 \land A_2 \qquad \qquad \textit{wp}(\assume\ A_1,A_2) =  A_1 \Rightarrow A_2$

\noindent From a verification perspective, these primitives can be used to encode many advanced language features. For example, in a modular verification setting with a first-order assertion language, a method call can be encoded by a sequence $\assert\ \textit{pre};$ $\havoc(\textit{Heap});\ \assume\ \textit{post}$, in which $\textit{pre}$ and $\textit{post}$ are the pre- and post-conditions of the method respectively, and $\havoc(.)$ is a Boogie command that causes the prover to forget all knowledge about a variable/expression.

With separation logic, there are two forms of conjunction and implication, the standard (additive) ones $\land$ and $\imp$, and the separating (multiplicative) ones $*$ and $\wand$. This naturally gives rise to a second form of assume and assert for the multiplicative connectives (\assumeSL\ and \assertSL), with the following weakest precondition semantics:

\noindent $\qquad \qquad \textit{wp}(\assertSL\ A_1,\ A_2) = A_1 * A_2 \qquad \qquad \textit{wp}(\assumeSL\ A_1,\ A_2) = A_1 \wand A_2$

\noindent These commands can be understood as follows: $\assertSL\ A_1$ removes a heap fragment satisfying $A_1$, and $\assumeSL\ A_1$ adds a heap fragment satisfying $A_1$. In a verification setting where assertions express permissions as well as functional properties, these can be used to correctly model the transfer of permissions when encoding various constructs. In a separation logic setting, a method call can be encoded as $\assertSL\ \textit{pre}; \assumeSL\ \textit{post}$.

In Chalice, which handles an assertion logic based on implicit dynamic frames, functional verification is based on two new commands: \exhale\ $A_1$ and \inhale\ $A_1$, which are also given an intuitive semantics of removing and adding access to state.  One outcome of this paper is to make this intuitive connection between \exhale/\inhale\ and \assertSL/\assumeSL\ formal, by defining a concrete and common semantics which can correctly characterise both assertion languages.

\subsection{Chalice and Implicit Dynamic Frames}
The original concept of Dynamic Frames comes from the PhD thesis of Kassios \cite{Kassios_thesis,Kassios'06}. The idea is to tackle the frame problem by allowing method specifications to declare the portion of the heap they may modify (a ``frame'' for the method call) via functions of the heap. The computed frames are therefore dynamic, in the sense that the actual values determined by these functions may change as the heap itself gets modified. Implicit dynamic frames~\cite{Smans'09,Smans_thesis} takes a different approach to computing frames - a first-order logic is extended with a new kind of assertion called an \emph{accessibility predicate} (written e.g., as $\acc(x.f)$) whose role is to represent a permission to a heap location $x.f$. In a method pre-condition, such an accessibility predicate indicates that the method requires permission to $x.f$ in order to be called - usually because this location might be read or written to in the method implementation. By imposing the restriction that heap dereference expressions (whether in assertions or in method bodies) are only allowed if a corresponding permission has already been acquired, this specification style allows a method frame to be calculated implicitly from its pre-condition.

Chalice~\cite{LeinoM09} is a tool written for the automatic verification of concurrent programs. It handles a fairly simple imperative language, with classes (but no inheritance), and several interesting concurrency features (locks, channels, fork/join of threads). The tool proves partial correctness of method specifications, as well as absence of deadlocks. The core of the methodology is based on the implicit dynamic frames specification logic, using accessibility predicates to handle the permissions necessary to avoid data races between threads.

In this paper we ignore the deadlock-avoidance aspects of Chalice, and focus on the aspects which guarantee functional correctness. Verification in Chalice is defined via an encoding into Boogie2, in which two intermediate auxiliary Chalice commands $\exhale\ p$ and $\inhale\ p$ are used. These commands reflect the removal and addition of permissions from the state, as well as expressing assertions and assumptions about heap values. For example, method calls are represented by $\exhale\ \textit{pre};\inhale\ \textit{post}$. The command $\exhale\ \textit{pre}$ has the effect of giving up any permissions mentioned in accessibility predicates in $\textit{pre}$, and generating $\assert$ statements for any logical properties such as heap equalities. Dually, $\inhale\ \textit{post}$ has the effect of adding any permissions mentioned in $\textit{post}$ and \emph{assuming} any logical properties.

\begin{definition}[Our Chalice Subsyntax]\label{defn:chalicesyntax}
Expressions $E$, boolean expressions $B$ and assertions $p$  in our fragment of Chalice are given by the following syntax definitions:
\[
  \begin{array}{rcl}
   E & ::= & x \mid n \mid \Null \mid E.f\\
   B & ::= & E=E \mid E \neq E \mid B * B \\
   p & ::= & B \mid \acc(E.f,`p) \mid p * p \mid  B \imp p \\
  \end{array}
\]\smallskip
\end{definition}

\noindent Note that Chalice actually uses the symbol for logical conjunction ($\land$ or \code{\&\&}) where we write $*$ above. However, in terms the semantics of the logic this is misleading - in general it is not the case that $p\land p$ (as written in Chalice) is equivalent to $p$. Chalice's conjunction treats permissions multiplicatively, that is, $\acc(x.f,1{/}2)\land\acc(x.f,1{/}2)$ is equivalent to $\acc(x.f,\full)$, while $\acc(x.f,\full)\land\acc(x.f,\full)$ is actually equivalent to falsity (it describes a state in which we have more than the full permission to the location $x.f$). As we will show, Chalice conjunction is actually directly related to the separating conjunction of separation logic, hence our choice of notation here. Where we use the symbol $\land$ later in the paper, we mean the usual (additive) conjunction, just as in SL or first order logic.

Chalice performs verification condition generation via an encoding into Boogie2, which makes use of two special variables $\permvar$\ and $\heapvar$. The former maps object-identifier and field-name pairs to permissions, in this instance a fractional permission, and is used for bookkeeping of permissions\footnote{Technically, one should think of $\permvar$ as a \emph{ghost variable}, since it does not correspond to real data of the original program.}. The latter maps object-identifier and field-name pairs to values, and is used to model the heap of the real program. These maps can be read from (e.g., $\permvar[o,f]$) and updated (e.g., $\permvar[o,f] := \full$) from within the Boogie2 code, which allows Chalice to maintain their state appropriately to reflect the modifications made by the source program. In particular, the $\inhale$ and $\exhale$ commands have semantics which include modifications to the $\permvar$ map, to reflect the addition or removal of permissions by the program.

The critical aspect of Chalice's approach to data races, is to guarantee that assertions about the heap are only allowed when at least some permission is held to each heap location mentioned. This means that assertions cannot be made when it might be possible for other threads to be changing these locations - all logical properties used in the verification are then made robust to possible interference. This is enforced by requiring that assertions used in verification contracts are \emph{self-framing}~\cite{Kassios'06} - which means that the assertion includes enough accessibility predicates to ``frame'' its heap expressions. For example, the assertion $x.f=5$ is not self-framing, since it refers to the heap location $x.f$ without permission. On the other hand, $(\acc(x.f,\full) * x.f=5)$ is self-framing.

\section{\SomethingLogic\ (\texorpdfstring{\SLE}{SLE})}
\label{sec:merged}
\label{sec:tsl}
\subsection{Race-free Assertions}
In order for a static verification tool to be able to reason soundly about concurrent programs, a crucial aspect is to be able to give a well-defined semantics to the assertion language employed. Since other executing threads may interfere with the execution of code being verified (for example, by writing to heap locations which the current program also accesses; a data race), assertions which describe properties of the heap do not, in general, even have a well-defined semantics. For example, consider the simple assertion $x.f > 5$. Such an assertion only has a well-defined semantics (at verification time) if, at runtime, the heap location $x.f$ is guaranteed not to be subject to a data race. If another thread writes to this location at the ``same time'' as the assertion is checked to hold, the truth of the assertion becomes non-deterministic, depending on the interleaving of the memory accesses by the two threads. This makes any reasoning about expressions such as $x.f$ as expressions in a logical sense unreliable: assertions such as $x.f > x.f$ could even be ``true'' at runtime, due to interference; a behaviour which any useful verifier will struggle to mimic accurately.

For these reasons, verifiers for concurrent programs need to use a verification methodology and assertion language whose semantics avoids data races. Both separation logic and implicit dynamic frames attack this problem by employing notions of (fractional) \emph{permissions}. Permissions permit access to particular heap locations, and can be passed around between threads, with the crucial property that a thread is only allowed to write to a heap location if no other thread holds a permission to the location. By imposing suitable restrictions on the assertion language used for verification, one can then guarantee a data-race-free semantics by passing permissions explicitly along with heap-dependent assertions, and enforcing the policy that an assertion may only mention a heap location if it also carries at least some permission to that location. In implicit dynamic frames, these permissions are represented by ``accessibility predicates'' $\acc(E.f,`p)$, denoting $`p$ permission to location $E.f$. For example, while the assertion $x.f > 5$ does not have a well-defined semantics on its own, the compound assertion $\acc(x.f,`p) * x.f > 5$ does - the presence of the permission to the location $x.f$ guarantees that its value in the heap is robust to interference. More generally, any heap-dependent expression in an assertion can only be given a meaning by its value being fixed with a permission to the appropriate heap locations. A ``self-framing'' assertion is one which is only satisfied in states which carry enough permissions to fix the values of all heap locations on which it depends; not all assertions are self-framing ($x.f > 5$ is not), but only such assertions can generally be used for verification contracts.  The fact that, in implicit dynamic frames, the permission to access a heap location can come from a different part of an assertion (e.g., conjunct) than the constraints on the value at that location, is the main challenge in giving a correct semantics for the logic.

The same challenge does not arise in separation logic, which does not allow heap-dependent expressions, instead providing the special ``points-to'' predicates as the sole way of handling heap accesses. A ``points to'' predicate $\pointsto{e_1.f}{`p}{e_2}$ plays a dual role in the logic - it provides knowledge of the value $e_2$ of the heap location $e_1.f$, and it also provides a permission $`p$ to this location, making the value robust to interference. Because there is no other way to refer directly to heap locations, one cannot ever talk about the value of a location without having some permission to that location. The observation of this dual role leads naturally to the idea of encoding separation logic assertions into implicit dynamic frames by replacing every points-to predicate $\pointsto{e_1.f}{`p}{e_2}$ by a permission and a heap-dependent expression: $\acc(e_1.f,`p) * e_1.f = e_2$. This observation can be used as the basis of a comparison and translation between the two logics. In fact, our approach is to give a uniform semantics for a logic which subsumes both separation logic and implicit dynamic frames constructs, and then show that the primitive constructs of the former can be represented in the latter.

\subsection{Overview of Our Approach}
In order to formally relate the two paradigms of separation logic and implicit dynamic frames, we define a new logic which subsumes both syntaxes. We call this logic \SomethingLogic\ (\SLE). This logic includes as primitives both the ``points-to'' predicates of separation logic, and the ``accessibility predicates'' of implicit dynamic frames, along with an expression syntax which permits heap-dependent expressions. As we will show formally in this paper, this is actually redundant; one can encode the SL-style primitives into implicit dynamic frames. However, our \SLE{} serves as a uniform basis for comparing these two logics. We also include all of the common connectives used in separation logic, which subsume those typically implemented in tools (based on either approach).

Our approach is to define a semantics for \SLE{}, based on states consisting of a stack (giving meaning to variables), a \emph{total heap}, and a \emph{permissions mask} (defining which locations in the total heap have reliable values for the current thread). Because our semantics is defined compositionally, it actually gives a meaning to assertions which are not (by themselves) well-formed in all states. As discussed above, assertions which mention heap-dependent expressions such as $x.f > 5$ are not necessarily well-defined when considered in isolation. However, because the IDF approach allows for such assertions as \emph{subformulas} of a well-formed assertion, and because we want to define a compositional semantics for our logic, we are obliged to give such assertions a semantics, even though (by themselves) they cannot be used in either approach. In some sense, by encompassing both SL and IDF, we actually make our assertion logic \emph{too} general. The presence of ill-formed assertions in our general logic means that (just as in implicit dynamic frames) we will later have to introduce additional concepts such as \emph{self-framing assertions}, in order to identify the fragments of our logic which are well-behaved.

\begin{definition}[\SomethingLogic]
We define the expressions $E$ and assertions $A$ of \emph{\SomethingLogic} (\ESL), by the following grammar (in which $n$ stands for any integer constant):
\[
  \begin{array}{rcl}
   e & ::= & x \mid \Null \mid n\\
   E & ::= & e \mid E.f\\
   A & ::= & E=E \mid \pointsto{E.f}{`p}{E} \mid A * A \mid A\wand A \mid A \wedge A  \mid A \vee A \mid A\imp A  \mid \acc(E.f,`p)
\mid \exists x.\; A\\
  \end{array}
\]\smallskip
\end{definition}

\noindent Note that the syntax of separation logic assertions (ranged over by $a$; see Definition \ref{defn:seplogicsyntax}) is a strict subset of the \ESL{} assertions $A$ defined above. The syntax of separation logic expressions $e$ is also a strict subset of \SLE{} expressions $E$. Similarly, the syntax of Chalice assertions (cf. Definition \ref{defn:chalicesyntax}) is a subset of our \ESL{} syntax.

Our strategy for the rest of the paper is as follows. We will first investigate carefully how to define a suitable total-heaps-based semantics for \ESL{}. In particular, we spend considerable attention on the definition of cases which correspond to modelling state extension (the implication and magic wand connectives).

We will then show that, for the subsyntax which corresponds to separation logic assertions, our total heaps semantics coincides with the traditional partial-heaps-based semantics of the logic (cf. Definition \ref{defn:seplogicsemantics}). Thus, we define a total-heaps model for separation logic, which is consistent with the standard model. We will further show that the subsyntax of \ESL{} which covers SL assertions can be mapped into the IDF subsyntax, preserving the semantics of the assertions. Thus, we can faithfully map from the SL world to an IDF-based assertion language.

In Section \ref{sec:vc}, we will show how to connect our \ESL{} model to the Chalice verification methodology. In particular, we will show that weakest pre-conditions as calculated by Chalice for a first-order theorem prover, are equivalent to weakest pre-conditions as calculated in separation logic. By combining this result with our ability to faithfully reflect traditional SL semantics in \ESL{}, we can show the equivalence of the overall approaches. In particular, for the subsyntax of SL typically supported by automatic tools, we can show that we can encode programs with SL specifications as programs with IDF specifications, and compute equivalent weakest pre-conditions for direct verification by a theorem prover.

\subsection{Total vs Partial Heaps}
One important technical challenge faced in defining a semantics for both logics, is that the semantics of separation logic is defined using \emph{partial heaps} (representing heap fragments, which can be split and recombined), while the implementation of implicit dynamic frames employs a mutable \emph{total heap}, and a separate \emph{permissions mask} to keep track of the permissions held in the current state. In order to make a uniform semantics for the two logics, we needed to bridge this gap between the two paradigms. We achieve this by employing only total heaps and permission masks, and using these to define a semantics that faithfully captures the traditional partial-heaps-based model for the SL subsyntax.

\begin{definition}[Total Heaps and Permission Masks]
A \emph{total heap} \Heap{} is a total map from pairs of object-identifier $o$ and field-identifier $f$ to \emph{values} $v$. Heap lookup is written $\Heap[o,f]$. We write \emph{field location} to mean a pair of object-identifier and field-identifier.\\
\noindent A \emph{permission mask} \Perm{} is a total map from pairs of object-identifier and field-identifier to permissions. Permission lookup is written $\Perm[o,f]$. \\
\noindent We write $\Perm_1\subseteq\Perm_2$ for \emph{permission mask extension}, i.e., $\forall (o,f).\ \Perm_1[o,f] \leq \Perm_2[o,f]$.  \\
\noindent We write $\emptyset$ for the \emph{empty permission mask}; i.e., the mask which assigns $\none$ to all locations.\\
\noindent We write $\rds{\Perm}$ for the set of field locations with non-zero permissions in $\Perm$, that is, $\{ (o,f) \mid \Perm[o,f] > 0 \}$. We write $\overline{\rds{\Perm}}$ for the complement of this set of locations.\\
A \emph{state} is a triple $(\Heap,\Perm,\Env)$ consisting of a heap, a permission mask and an environment $\Env$.
%
\noindent Two permission masks $\Perm_1$ and $\Perm_2$ are \emph{compatible}, written $\Perm_1 \perp \Perm_2$, if it holds that:
\[\forall (o,f).\  \Perm_1[o,f] + \Perm_2[o,f] \leq \full\]
\noindent The \emph{combination} of two permission masks, written $\Perm_1 * \Perm_2$ is undefined if $\Perm_1$ and $\Perm_2$ are not compatible, and is otherwise defined pointwise to be the following permission mask:
\[(\Perm_1 * \Perm_2)[o,f] =
\Perm_1[o,f] + \Perm_2[o,f]
\]
\noindent We define the greatest lower bound of two masks:
\[
(\Perm_1 \sqcap \Perm_2) [o,f] =
\mathsf{min}(\Perm_1[o,f],\Perm_2[o,f])
\]
and the least upper bound of two masks:
\[
(\Perm_1 \sqcup \Perm_2) [o,f] =
\mathsf{max}(\Perm_1[o,f],\Perm_2[o,f])
\]%
\noindent Finally, we define a partial operation of subtraction on permission masks, $\Perm_1 - \Perm_2$. It is defined if and only if $\Perm_2 \subseteq  \Perm_1$, and is defined by:
\[
(\Perm_1 - \Perm_2) [o,f] = (\Perm_1[o,f] - \Perm_2[o,f])
\]
\end{definition}

The crucial observation relating our logic to SL is that, while we will use total heaps in our semantics, we will actually only allow assertions to depend on a subheap; those locations which the current thread can ``read''; i.e., that it has at least some permission to. We employ the notation $\rds{\Perm}$ (where $\Perm$ is a permission mask), to talk about this set of locations. We will design our semantics such that, for all separation logic assertions, their semantics in a state with heap $\Heap$ and permissions $\Perm$ corresponds to their traditional semantics using the (partial) heap obtained by \emph{restricting} $\Heap$ to just the domain $\rds{\Perm}$ (this restriction is formally written as $\restrict{\Heap}{\Perm}$ later). This idea reflects the intuition that all other locations in the (total) heap $\Heap$ have unreliable values (which may be subject to interference from other threads); only assertions which are appropriately ``framed'' by sufficient permissions, can be relied upon in a concurrent setting.

When we want to explicitly express that an assertion is robust to interference from other threads, we can do so by considering the evaluation of the assertion in all heaps which agree with the current one on the locations which the current thread can read, according to the permissions mask. Effectively, we introduce a ``havoc'' of all of the locations to which we hold no permission, and check that the assertion is still guaranteed after all such locations are assigned arbitrary values. In order to define this operation, we introduce the concept of two heaps ``agreeing'' on the permissions in a mask (as well some other heap constructions) as follows:

\begin{definition}[Total Heap Operations]
Two heaps $\Heap_1$ and $\Heap_2$ \emph{agree on a set of object field locations} $F$, written $\Heap_1 \agrees{F} \Heap_2$, if the two heaps contain the same value for each location, i.e.,
\[
\Heap_1 \agrees{F} \Heap_2
\iff
\forall (o,f)\in F.\ \Heap_1[o,f] = \Heap_2[o,f]
\]
Two heaps $\Heap_1$ and $\Heap_2$ \emph{agree on permissions} $\Perm$, written $\Heap_1 \agrees{\Perm} \Heap_2$, if the two heaps agree on all field locations given non-zero permission by $\Perm$, i.e.,
\[
\Heap_1 \agrees{\Perm} \Heap_2
\iff
\Heap_1 \agrees{\rds{\Perm}} \Heap_2
\]
The \emph{restriction of $\Heap$ to $\Perm$}, written $\restrict{\Heap}{\Perm}$ is a \emph{partial fractional heap} (Definition~\ref{PartialFractionalHeap}), defined by:
\[\begin{array}{l}
\dom(\restrict{\Heap}{\Perm}) = \rds{\Perm}\\
\forall(o,f)\in\dom(\restrict{\Heap}{\Perm}).\ (\restrict{\Heap}{\Perm})[o,f] = (\Heap[o,f],\Perm[o,f])\\\end{array}\]
The \emph{conditional merge of $\Heap_1$ and $\Heap_2$ over a set of locations $F$}, written $\condheap{F}{\Heap_1}{\Heap_2}$ is a total heap defined by:
\[ \condheap{F}{\Heap_1}{\Heap_2}[o,f] = \left\{\begin{array}{ll} \Heap_1[o,f] & \textit{ if }(o,f) \in F\\
\Heap_2[o,f] & \textit{ otherwise}\end{array}\right. \]
We write $\condheap{\Perm}{\Heap_1}{\Heap_2}$ as a shorthand for
$\condheap{\rds{\Perm}}{\Heap_1}{\Heap_2}$.
\end{definition}

We can make use of these operations on total heaps to define what it means for an assertion to be \emph{stable} in a certain state (which intuitively means that its truth only depends on heap locations to which it also requires permission to be held).
%
%
\newcommand{\interfere}[2]{\mathsf{interfere}(#1,#2)}


\newcommand{\stablewith}[2]{\mathsf{stable{-}with}_{#1}\,(#2)}

\newcommand{\stable}[1]{\mathsf{stable}(#1)}
\newcommand{\localExts}[1]{\mathsf{localExts}(#1)}

\begin{definition}[Interference, Stability, Self-Framing and Pure Assertions]
Given a heap $\Heap$ and a permissions mask $\Perm$, the \emph{interfered heaps from $\Heap,\Perm$} 
 is a set of heaps defined by:
\[
\interfere{\Heap}{\Perm} = \{ \Heap' \mid \Heap' \agrees{\Perm} \Heap\}
\]
A set of states $S$ is \emph{stable with extra permissions $\Perm$}, written as $\stablewith{\Perm}{S}$, if the extra permissions are sufficient to make the set closed under interference; i.e.,
\[
\stablewith{\Perm}{S} \Leftrightarrow ( \forall (\Heap,\Perm',\Env)\in S.\ \forall \Heap'\in\interfere{\Heap}{\Perm*\Perm'}. (\Heap',\Perm',\Env)\in S)
\]
A set of states $S$ is \emph{stable}, written as $\stable{S}$, if the set is closed under interference; i.e.,
\[
\stable{S} \Leftrightarrow \stablewith{\emptyset}{S}
\]
We write $\semEl{A}$ to denote the set of states in which the assertion $A$ is true (the actual definition of our semantic judgement $\Heap,\Perm,\Env \modelsISL A$ will come later).
\[
\semEl{A} = \{ (\Heap,\Perm,\Env) \mid    \Heap,\Perm,\Env \modelsISL A \}
\]
An assertion $A$ is \emph{self-framing} if and only if the set of states satisfying it is stable; i.e., if $\stable{\semEl{A}}$ is true.

An assertion $A$ is \emph{pure} 
 if and only if it doesn't depend on permissions, i.e.,
\[\forall \Heap,\Perm,\Env.\ ((\Heap,\Perm,\Env)\in\semEl{A} \Rightarrow (\Heap,\emptyset,\Env)\in\semEl{A}) \]
\end{definition}
Intuitively, self-framing assertions are robust to arbitrary interference on the rest of the heap. For separation logic assertions, this property holds naturally, since it is impossible for an assertion to talk about the heap without including the appropriate ``points-to'' predicates, which force the corresponding permissions to be held. This is shown as a corollary (Corollary~\ref{lemma:seplogicframing}) of the main theorem in this section.

On the other hand, pure assertions which depend on heap values (such pure assertions are not supported in separation logic, but are employed in implicit dynamic frames) are naturally not self-framing. An assertion such as $x.f=5$ is considered pure (it does not mention any permissions or points-to predicates); it will be true in a state where we have no permissions, but in which the value of the heap location $x.f$ is $5$. Nonetheless, such a state is not stable; when we allow for interference, the value of $x.f$ can be modified, and the truth of the assertion need not be preserved.
%

\subsection{Pure Assertions and Separating Conjunction}

Our assertion language includes the \emph{separating conjunction} $A*B$ of separation logic (recall Definition \ref{defn:seplogicsemantics}), and permissions can be distributed multiplicatively across this conjunction. In particular, our semantics needs to enforce, for an assertion $A*B$, that the permissions required are the \emph{sum} of the permissions required in each of $A$ and $B$; we can model this by checking that we can split the permissions mask into two parts, using them to judge the respective conjuncts. A question which then arises is, what should happen to the heap when judging the separating conjunction? In the traditional separation logic semantics, which uses partial heaps, since the heap values and ``permissions'' are both tracked together in partial heap chunks, it is natural to divide up the partial heap in this case. In the case of partial fractional heaps, the two resulting heap chunks can still share values, but only to those locations in which both parts hold some permission. With a total-heaps semantics, we have a choice as to how to reflect this ``splitting''; we can either only split the permissions mask across the conjuncts, but leave the heap unchanged, or we can also try to simulate the splitting of heap values, say, by throwing away information about certain heap locations when judging the individual conjuncts. By looking only at the separation logic fragment of our logic, we cannot see a clear advantage either way; since all assertions in that logic can only read the same heap values that they provide permission to, the question of what should be done with other heap values is irrelevant there (and a partial heaps model gives no reasonable way to even phrase this decision, since it conflates the notions of permission to read a location, and the action of actually reading it). However, this question is pertinent in the case of implicit dynamic frames, where we have heap-dependent expressions which can occur in pure assertions.

For pure assertions, we want the property that, even when mentioned in a separating conjunction, they do not actually extend the ``heap footprint'' of what the assertion requires. In particular, we would like to retain the law (which holds in intuitionistic separation logic), that $A_1*A_2$ is equivalent to $A_1\wedge A_2$ when either of the two conjuncts are pure. In particular, this motivates that pure assertions should be allowed to depend on the same state as assertions they are conjoined with. Of course, in separation logic, where pure assertions are syntactically restricted to not mention the heap, this ``same state'' just means the environment $\Env$. But in implicit dynamic frames, we would like heap-dependent pure assertions to be allowed to depend on the same heap values that other conjuncts make readable by providing permissions; when interpreting assertions such as $\acc(x.f)*x.f==5$ this is exactly what we want.

For these reasons, in our total-heaps model, we define the semantics for separating conjunction with a split of the permissions mask, but no change to the heap. This is concretely achieved by checking that we can split $\Perm$ into two pieces, each of which are sufficient to judge the two sub-formulas; the particular definition (which will be provided as part of the definition of our full semantics later) is:
\[
   \begin{array}{l}
    \Heap, \Perm, \Env \modelsISL A_1 * A_2 \iff\\
 \ \ \ \    \exists \Perm_1, \Perm_2.\;(
   \Perm = \Perm_1 * \Perm_2 \ \land \ \Heap, \Perm_1, \Env \modelsISL A_1
     \ \land\
     \Heap, \Perm_2, \Env \modelsISL A_2)
     \end{array}
\]
%
In the case of some implicit dynamic frames assertions, this rule for treating separating conjunction may ``separate'' a heap-dependent expression from the permission used to fix its values. For example, consider a permissions mask $\Perm$ in which we have full permission to the location $x.f$ (and no other permissions), and a heap $\Heap$ in which $x.f$ has the value $5$. The assertion $\acc(x.f,1) * x.f=5$ is true in such a state. But, in treating the separating conjunction, we are forced to split $\Perm = \Perm_1*\Perm_2$ and put \emph{all} permission to $x.f$ into $\Perm_1$, in order to satisfy the left conjunct, leaving $\Perm_2$ to be the empty permissions mask. The fact that the sub-formula $x.f=5$ is eventually judged in a state in which we hold no permissions to the relevant location $x.f$ is not a problem - we only need to be sure that permission to this location is held \emph{somewhere} in the whole assertion, and not in this particular sub-formula. That is, the property of an assertion being well-formed (self-framing) is not enforced for its sub-formulas, but only for the assertion as a whole.

\subsection{Modelling Partial Heap Extensions}

One of the most difficult technical challenges in the design of our semantics was correctly handling the \emph{magic wand} ($\wand$) and \emph{implication} ($\imp$) connectives. In the traditional partial heaps semantics of separation logic (Definition \ref{defn:seplogicsemantics}), the semantics of both of these connectives involve considering \emph{extensions} of the current heap. In a semantics based on partial heaps, this is rather straightforward, but with total heaps and permission masks it is not so obvious how to model ``heap extension'' for these connectives.

One simple option is just to consider permission extension - leave the heap unchanged but consider all larger permissions masks. The problem with this rather-simplistic proposal is that it attaches significance to pre-existing values in our total heap even in the case where we previously had no permission to them. Since such values are generally meaningless, this doesn't give a well-behaved semantics. When one compares with the operation of extending partial heaps, which we are trying to simulate appropriately, it can be seen that the approach doesn't work; when a new heap location is added in a partial heaps model, it can take any value, whereas our total heap only has one value at any one time.

In order to avoid tying ourselves down to the values in our total heaps which are not necessarily currently meaningful, we can instead model heap extension by adding on extra permission and then \emph{havocing} (i.e., assigning arbitrary values to) the heap locations to which we have newly acquired permission. In this way, we make the original heap values stored at these locations irrelevant, and correctly reflect the general operation of adding on a fresh heap location in a partial-heaps-based model. To this end, we define several variants of this idea of how to model state extensions. The differences in the variants come from two decisions. Firstly, when we add on new permission, do we havoc the heap values at all locations to which we previously held no permission (we call this a \emph{global havoc}), or only those which the permissions newly allow us to read (we call this a \emph{local havoc})? Secondly, when we define the extensions of a state, are we interested in resulting states in which we combine the new permissions with those we held previously, or do we just want to describe the ``extra'' disjoint part of the state (using the new permissions, but not the old ones)? These questions give rise to the following four concepts:

\begin{definition}[State Extensions]\label{defn:extensions}
The set of \emph{locally-havoced extensions} of a state $(\Heap,\Perm,\Env)$ is the set of states in which extra permission is added, and possibly-new values are assigned to the newly-readable locations, i.e.,:
\[
\localExts{\Heap,\Perm,\Env} = \{ (\Heap',\Perm*\Perm',\Env) \mid \Perm'{\perp}\Perm \wedge \Heap' \agrees{\rds{\Perm}\cup\overline{\rds{\Perm'}}} \Heap \}
\]
The set of \emph{globally-havoced extensions} of a state $(\Heap,\Perm,\Env)$ is the set of states in which extra permission is added, and possibly-new values are assigned to the previously-unreadable locations, i.e.,:
\[
\globalExts{\Heap,\Perm,\Env} = \{ (\Heap',\Perm*\Perm',\Env) \mid \Perm'{\perp}\Perm \wedge \Heap' \in\interfere{\Heap}{\Perm} \}
\]
The set of \emph{locally-havoced disjoint extensions} of a state $(\Heap,\Perm,\Env)$ is the set of states in which extra permission is added, possibly-new values are assigned to the newly-readable locations, and only the extra permissions are kept in the results, i.e.,:
\[
\localDisjExts{\Heap,\Perm,\Env} = \{ (\Heap',\Perm',\Env) \mid \Perm'{\perp}\Perm \wedge \Heap' \agrees{\rds{\Perm}\cup\overline{\rds{\Perm'}}} \Heap  \}
\]
The set of \emph{globally-havoced disjoint extensions} of a state $(\Heap,\Perm,\Env)$ is the set of states in which extra permission is added, possibly-new values are assigned to the previously-unreadable locations, and only the extra permissions are kept in the results,
\[
\globalDisjExts{\Heap,\Perm,\Env} = \{ (\Heap',\Perm',\Env) \mid \Perm'{\perp}\Perm \wedge \Heap' \in\interfere{\Heap}{\Perm} \}
\]\smallskip
\end{definition}

As we will see later in this section, we have uses for all of these notions of state extension, and choosing the appropriate one at various points is important for our logic to have the right semantics.

\subsection{Minimal Extensions and Implication}\label{sect:implication}

We now consider how to design an appropriate semantics for implication in our logic, which should manage to work appropriately both for the separation logic and implicit dynamic frames fragments. Recall that, in the traditional semantics of (intuitionistic) separation logic (cf. Definition \ref{defn:seplogicsemantics}), an implicative assertion $a_1\imp a_2$ is true if, in all extensions of the current heap, whenever $a_1$ is true then $a_2$ is also true. Therefore, we need to be careful to appropriately model this idea of extending the current state when judging an implication. We use three examples here to guide the discussion of our design
:
\begin{eqnarray*}
\textit{Ex. 1:} &\pointsto{x.f}{1}{\_} * (\pointsto{x.f}{1}{5}\ \imp\ \pointsto{y.g}{1}{\_})\\
\textit{Ex. 2:} &\acc(x.f,1) * ((\acc(x.f,1) * x.f=5)\ \imp\ \acc(y.g,1))\\
\textit{Ex. 3:} &\acc(x.f,1) * (x.f=5 \imp \acc(y.g,1))
\end{eqnarray*}
In (intuitionistic) separation logic, the first formula is actually only true in states which have (full) permission to \emph{both} locations $x.f$ and $y.g$. The reason is that, in judging the implication subformula, we have to consider all extensions of the provided state. Unless the state in which we judge the implication has at least some permission to $x.f$ (and gives a value other than $5$ to this location), then when we consider all extensions of the heap we must consider the possibility that the new heap stores a value $5$ at this location. Since the left-hand conjunct $\pointsto{x.f}{1}{\_}$ requires full access to $x.f$, no permission to this location can be left over when judging the implication on the right. The formula can be formally shown to be equivalent to $\pointsto{x.f}{1}{\_}*\pointsto{y.g}{1}{\_}$ according to the standard semantics of Definition \ref{defn:seplogicsemantics}, as follows:
\begin{proof}
 It suffices to show
\[
	{\PHeap,\Env \modelsSL (\pointsto{x.f}{1}{5}\ \imp\ \pointsto{y.g}{1}{\_})
       \land (\semSLE{x}{\Env},f)\notin dom(\PHeap)}
\Rightarrow
       \first{\PHeap[\semSLE{y}{\Env},f]} = \full
\]
We can prove this by contradiction. We assume $(\semSLE{x}{\Env},f)\notin dom(\PHeap)$
and $\first{\PHeap[\semSLE{y}{\Env},f]} \neq \full$, and consider the semantics of the implication:
\[
   \forall \PHeap'.(\;
    \PHeap' \perp \PHeap\  \land \
    \PHeap{*}\PHeap', \Env \modelsSL \pointsto{x.f}{1}{5} \;\; \Rightarrow\;\;
     \PHeap{*}\PHeap', \Env \modelsSL \pointsto{y.g}{1}{\_})
\]
By choosing $\PHeap'$ to be the heap containing $x.f$ with full permission and value 5, and with no other location in its domain, we deduce a contradiction, since $\PHeap{*}\PHeap', \Env \modelsSL \pointsto{x.f}{1}{5}$ does hold, while $\PHeap{*}\PHeap', \Env \modelsSL \pointsto{y.g}{1}{\_}$ does not.
\end{proof}

The second example formula listed above is actually a translation of the first into implicit dynamic frames: it is only true in states which have (full) permission to \emph{both} locations $x.f$ and $y.g$ in our semantics, and for the same reasons as the previous example. However, this assertion goes beyond the syntax of implicit dynamic frames typically supported by tools; we will discuss this later.
The third formula should mean: we have (full) permission to access $x.f$, and if its value is currently $5$ then we also have full access to $y.f$. This kind of assertion is already supported by the Chalice tool, and has exactly that intuitive meaning.

We need to decide which of the notions of state extension, as given in Definition \ref{defn:extensions}, we should use to define our semantics for implication. Since, in the traditional separation logic semantics, the two sides of an implication get judged in the whole resulting heap (after the state extension), the latter two ``disjoint'' variants from the definition are not appropriate. However, we still have the choice between considering locally-havoced or globally-havoced extensions. Let us consider the (slightly-simpler) option of using globally-havoced extensions. This leads us to the following candidate semantics for implication:
\[\begin{array}{l}
  \Heap, \Perm, \Env \modelsISL A_1 \imp A_2  \stackrel{?}{\iff}
\\\multicolumn{1}{@{\qquad}r}{\forall \Perm',\Heap'.(\;
   \Perm' {\perp} \Perm\  \land\
    \Heap' \agrees{\Perm} \Heap \ \land\
    \Heap', \Perm * \Perm', \Env \modelsISL  A_1
  \Rightarrow\;\;
     \Heap', \Perm * \Perm', \Env \modelsISL A_2)}
\end{array}\]
This definition gives the correct meaning to our first example assertion: since we judge the implication in a state in which we have no permission to $x.f$ (all such permission has to be given to the left-hand side of the $*$), we have to consider heaps $\Heap'$ in which $x.f$ has taken on arbitrary values. In particular, there are some such heaps in which $x.f$ has the value $5$, and this forces the requirement that we must also have full permission to $y.g$, just as in the traditional separation logic semantics. In fact, using locally-havoced extensions in our definition (i.e., changing the constraint on $\Heap'$ to be $\Heap' \agrees{\rds{\Perm}\cup\overline{\rds{\Perm'}}} \Heap$), would also give the right semantics; since we are still required to consider the possibility that we add on permission to $x.f$ in the extra permissions $\Perm'$, and in this case, it is allowed for $\Heap'$ to differ with $\Heap$ on $x.f$'s value. In fact, it is generally the case that the choice of locally-havoced or globally-havoced extensions makes no difference when we consider separation logic assertions. Exactly the same arguments (and resulting semantics) apply to the second of our example assertions.

However, the candidate definition above does not in general have the correct meaning for implicit dynamic frames. In particular, our third example formula does not have the correct semantics. The change of heap forces us to consider extensions in which we alter the value of $x.f$ in the heap, and thus our third example also becomes equivalent to $acc(x.f) * acc(y.f)$, since the value restriction for $x.f$ is made irrelevant by this potential change. To see exactly what we need here, we need to again consider carefully the meaning of $x.f$ in a pure assertion. When such a heap-dependent expression occurs on the left of an implication, its intended meaning depends on where in the assertion we find permission to the heap location. There are two important questions: does a permission to this location \emph{also} occur on the left of the implication (e.g., in our second example formula), and does a permission to this location occur elsewhere in the assertion? If neither occurs, i.e., there is no permission guaranteed to $x.f$ anywhere in the assertion, then the expression is meaningless; we will consider this case ill-formed, and so can give it any semantics. If a permission to $x.f$ occurs on the left of the implication in question (as in our second example formula), then it is possible that the value of the heap location is fixed as part of the heap extension, and therefore we should allow this value to change as part of the extension, as in our last candidate semantics above. In particular, if we judge the truth of the implication in a state in which we \emph{start off} with no permission to the location $x.f$, then it is just by adding the new permission required on the left of the implication that we can read from the location, and so we should consider that the value may be different from that in our original state. Finally, if a permission does \emph{not} occur on the left of the implication, but \emph{does} occur elsewhere in the assertion (as in our third example formula $\acc(x.f,1) * (x.f=5 \imp \acc(y.g,1))$), then we should \emph{not} allow the value of $x.f$ to change when judging the implication; its value must be determined by the permission outside of the implication, and so should not be allowed to change when judging it.

This analysis leads us to the conclusion that, in our candidate semantics above, we are considering too many extensions.  We need to only consider the \emph{minimal} extensions of the state to make the left of the implication true; in particular, we should only add permissions if the left-hand side of the implication explicitly requires them, and only allow values to change at those locations to which we added new permission. To this end, we provide a definition to capture the idea of a \emph{minimal permission extension}, which expresses that the extra permission we add on does not make any more locations readable than is necessary to make the assertion we are concerned with true:
\begin{definition}[Minimal Permission Extensions]\label{defn:minimalextensions}
Starting from a state $(\Heap,\Perm,\Env)$, we say that \emph{$\Perm'$ is a minimal permission extension of $(\Heap,\Perm,\Env)$ to satisfy $A$}, which we write as $\extend{\Heap}{\Perm}{\Env}{\Perm'}{A}$, as described by the following formula:
\[
\begin{array}{ll}
    \extend{\Heap}{\Perm}{\Env}{\Perm'}{A} &\iff
    \Heap, \Perm * \Perm', \Env \modelsISL A \;\; \land \;\;\\
&\quad\quad\quad\forall \Perm''\cdot \Perm''\subseteq\Perm' \land \rds{\Perm''} \subset \rds{\Perm'} \Rightarrow \Heap, \Perm*\Perm'', \Env \notmodelsISL A
\end{array}
\]
\end{definition}

\noindent We abstract over the precise permission values in this minimal extension (by focusing on which locations are readable, using the $\rds{}$ concept), in order to avoid imposing restrictions on the underlying permissions model (in particular, our definition does not depend on there being a greatest lower bound for the acceptable permission values to satisfy an assertion). That is, a minimal permission extension can add on more permission than the assertion really requires, so long as it does not increase the set of locations accessible in the permissions mask by more than necessary.

Using the concept of a minimal permission extension, along with the notion of locally-havoced extensions, we can finally define the semantics for implication which works for our general logic:
\[
\begin{array}{@{}l@{}}\Heap, \Perm, \Env \modelsISL A_1 \imp A_2  \iff\\
\forall (\Heap',\Perm * \Perm',\Env)\in\localExts{\Heap,\Perm,\Env}\cdot
    \extend{\Heap'}{\Perm}{\Env}{\Perm'}{A_1}
\Rightarrow\;\;
     \Heap', \Perm * \Perm', \Env \modelsISL A_2
\end{array}\]
This definition can be informally understood as follows: $A_1\imp A_2$ is true in a state if, for all minimal extensions (and corresponding havocs) of the state such that $A_1$ holds, $A_2$ must hold as well. The extension of the state is modelled by adding on the permissions $\Perm'$, and allowing the values of the heap to be modified in exactly the locations which become newly-readable by adding on these permissions. Furthermore, we insist on the permissions added being minimal in the locations which they make readable, while still satisfying $A_1$.

This definition correctly captures that we sometimes need to consider changing values in the heap when judging an implication, but only when it is permissions in the left-hand side of the implication that allow the reading of the locations. For example, using the definition above, the third example formula $\acc(x.f,1) * (x.f=5 \imp \acc(y.f,1))$ is true exactly in a state where we have (full) permission to $x.f$, and where, if the current value of $x.f$ is $5$, we also have (full) permission to $y.f$. On the other hand, the second formula $\acc(x.f,1) * (\acc(x.f,1) * x.f=5 \imp \acc(y.f,1))$ is true exactly when we have full permission to both $x.f$ and $y.f$ - this is because the implication is evaluated in a state with no permission to $x.f$, and when our semantics considers extending this state by just enough permission to make the left-hand side true, we have to allow for the possibility that this makes $x.f$ newly readable, and thus, makes it take on a new value.

Although we did not discuss the semantics of the magic wand connective ($\wand$) in the above, similar considerations lead us to also use the concept of minimal extensions in its definition. The formal definitions of our semantics come in the following subsection.
%
%

\subsection{Magic Wand Semantics}

We design our semantics for the ``magic wand'' connective $\wand$ along similar lines to that for implication. In particular, if one writes a pure assertion $A_1$ on the left of a wand formula $A_1\wand A_2$, then we do not want the semantics of the assertion to consider havocing heap locations that $A_1$ refers to. Put another way, for pure $A_1$ we would like the property that $A_1\wand A_2$ is equivalent to $A_1\imp A_2$, which, as we have already decided, should have a semantics which allows $A_1$ to refer to heap values in our original state, unless extra permission to those locations is added in $A_1$. This reasoning leads us to again employ our notion of minimal permission extension, but this time we complement it with locally-havoced \emph{disjoint} extensions:
\[\begin{array}{@{}l@{}} \Heap, \Perm, \Env \modelsISL A_1 \wand A_2  \iff\\
\forall (\Heap',\Perm',\Env)\in\localDisjExts{\Heap,\Perm,\Env}.
\extend{\Heap'}{\emptyset}{\Env}{\Perm'}{A_1}
\;\;\Rightarrow\;\;
     \Heap', \Perm * \Perm', \Env \modelsISL A_2)\end{array}
     \]

\subsection{Total Heaps Semantics for \texorpdfstring{\ESL{}}{TPL}}

We can now define our semantics for assertions. We make use of the concept of a minimal permission extension (Definition \ref{defn:minimalextensions}) to describe minimal extensions to the whole state when judging implications and magic wand assertions:
\begin{definition}[Total Heap Semantics for \SLE]\label{defn:seplogictotalsemantics}

We define validity of \SLE-assertions with respect to a specified total heap $\Heap$ and permission mask $\Perm$ recursively on the structure of the assertion:
\[
\begin{array}{@{}l}
  \begin{array}{@{}l@{\;}l}
    \Heap, \Perm, \Env \modelsISL \pointsto{E.f}{`p}{E'}
    &\iff
    \Perm[\semCE{E}{\Env,\Heap}, f] \geq `p \ \land\ \Heap[\semSLE{E}{\Env,\Heap}, f] = \semCE{E'}{\Env,\Heap}
    \\[0.5em]
    \Heap, \Perm, \Env \modelsISL A_1 * A_2  &\iff
    \exists \Perm_1, \Perm_2.(
   \Perm = \Perm_1{*}\Perm_2 \ \land \ \Heap, \Perm_1, \Env \modelsISL A_1
     \ \land\
     \Heap, \Perm_2, \Env \modelsISL A_2)

    \\[0.5em]
    \Heap, \Perm, \Env \modelsISL A_1 \wand A_2  &\iff
\forall (\Heap',\Perm',\Env)\in\localDisjExts{\Heap,\Perm,\Env}.
 \\
\multicolumn{2}{r}{
\extend{\Heap'}{\emptyset}{\Env}{\Perm'}{A_1}
\;\;\Rightarrow\;\;
     \Heap', \Perm * \Perm', \Env \modelsISL A_2}
         \\[0.5em]
    \Heap, \Perm, \Env \modelsISL A_1 \wedge A_2  &\iff
    \Heap, \Perm, \Env \modelsISL A_1 \;\; \land \;\;
    \Heap, \Perm, \Env \modelsISL A_2
    \\[0.5em]
    \Heap, \Perm, \Env \modelsISL A_1 \vee A_2  &\iff
    \Heap, \Perm, \Env \modelsISL A_1 \;\; \lor \;\;
    \Heap, \Perm, \Env \modelsISL A_2
    \\[0.5em]
    \Heap, \Perm, \Env \modelsISL A_1 \imp A_2  &\iff

\forall (\Heap',\Perm * \Perm',\Env)\in\localExts{\Heap,\Perm,\Env}.
\\
\multicolumn{2}{r}{
\extend{\Heap'}{\Perm}{\Env}{\Perm'}{A_1}
\;\;\Rightarrow\;\;
     \Heap', \Perm * \Perm', \Env \modelsISL A_2}
     \end{array}
         \\[0.5em]
              \begin{array}{@{}l@{\;}l}
    \Heap, \Perm, \Env \modelsISL \acc(E.f,`p)
    &\iff
    \Perm[\semCE{E}{\Env,\Heap}, f] \geq `p
    \\[0.5em]
    \Heap, \Perm, \Env \modelsISL E = E'
    &\iff
    \semCE{E}{\Env,\Heap} = \semCE{E'}{\Env,\Heap}
    \\[0.5em]
    \Heap, \Perm, \Env \modelsISL \exists x.\; A
    &\iff
    \exists v.(\Heap, \Perm, \Env[x \mapsto v] \modelsISL  A)
\end{array}
  \end{array}
\]\smallskip
\end{definition}

\noindent Note the similarity between the definitions for magic wand $\wand$ and logical implication $\imp$. This is because both cases involve heap extension in the partial heap semantics; in our total heap semantics we model heap extension by enabling the assignment of new arbitrary values to the part of the heap we have added permissions to.

Evaluation of \ESL{} expressions depends on a given environment and heap, and is defined by:
\[
\begin{array}{rclrclrclrcl}
  \semCE{x}{\Env,\Heap} & =& \Env(x) & \quad
  \semCE{n}{\Env,\Heap} & =& n &\quad
  \semCE{E.f}{\Env,\Heap} & =& \Heap[\semCE{E}{\Env,\Heap},f] &\quad
  \semCE{\Null}{\Env,\Heap}& =& \Null\\
\end{array}
\]

The meaning of separation logic expressions is preserved (and is independent of the heap), as the following lemma shows:
\begin{lemma}\label{lemma:expressions}
$\forall e,\Env,\Heap.\ \semCE{e}{\Env,\Heap} = \semSLE{e}{\Env}$
\end{lemma}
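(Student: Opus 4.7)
The plan is to proceed by a straightforward case analysis on the structure of the separation logic expression $e$. Recall from Definition \ref{defn:seplogicsyntax} that SL expressions are generated by the grammar $e ::= x \mid \Null \mid n$, so there are only three base cases and no inductive clause to handle (the heap-dereferencing production $E.f$ belongs only to the extended expression grammar of \SLE{}, not to SL).

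For each of the three cases, I would simply unfold both semantic brackets and observe that the defining equations coincide: Definition \ref{defn:seplogicsemantics} stipulates $\semSLE{x}{\Env} = \Env(x)$, $\semSLE{n}{\Env} = n$ and $\semSLE{\Null}{\Env} = \Null$, while Definition \ref{defn:seplogictotalsemantics} stipulates $\semCE{x}{\Env,\Heap} = \Env(x)$, $\semCE{n}{\Env,\Heap} = n$ and $\semCE{\Null}{\Env,\Heap} = \Null$. In each case the right-hand sides are identical and, crucially, do not mention $\Heap$, which establishes both the equality with the SL semantics and the independence from the heap.

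There is no real obstacle here, since SL expressions omit exactly the heap-dereference form $E.f$ that would have made the two semantics diverge; the lemma essentially records the sanity check that our extension of the expression language is conservative over the base fragment. The most one could call a ``step'' is simply noting that the universal quantification over $\Heap$ is trivially satisfied because $\Heap$ appears nowhere on the right-hand side of any of the three defining equations of $\semCE{\cdot}{\Env,\Heap}$ when restricted to the SL sub-grammar. This lemma will later be useful as a rewriting principle whenever an SL expression appears inside a semantic judgement for \SLE{}, allowing the heap component to be dropped from the evaluation without loss of information.
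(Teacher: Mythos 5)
Your proof is correct and matches the intended argument: the paper treats this lemma as immediate from the definitions (it gives no explicit proof), and the obvious justification is exactly your case analysis on the three SL expression forms $x$, $\Null$, $n$, noting that the defining clauses of $\semCE{\cdot}{\Env,\Heap}$ and $\semSLE{\cdot}{\Env}$ coincide and never mention $\Heap$ on this fragment. Nothing is missing.
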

The main aims of the rest of this section are to show that our assertion semantics also preserves the original meaning of separation logic assertions.

\subsection{Strengthening and Weakening Results}
In this subsection, we present some of the technical properties which describe how our semantics behaves when we add and remove permissions, and when we extend states.

The following lemma shows the (intuitive) property that we can always discard superfluous permissions to reach a minimal permission extension:
\addprop{lemma}{Minimisation of Permission Masks}
{lemma:newlemmaone}
{\em If $\Heap,\Perm_1*\Perm_2,\Env\modelsISL A$ then $\exists \Perm_3\subseteq\Perm_2$ such that $\extend{\Heap}{\Perm_1}{\Env}{\Perm_3}{A}$.}
{}{
By complete (strong) induction on $\rds{\Perm_2}$ using subset ordering.

Case split on whether the following formula holds:
\[
\exists \Perm_4 \subseteq \Perm_2.\ \rds{\Perm_4} \subset \rds{\Perm_2} \land
                          \Heap, \Perm_1 * \Perm_4, \Env \modelsISL A
\]
\as{Assume first that} it does. By induction, we know $
\exists \Perm_5 \subseteq \Perm_4. \ \extend{\Heap}{\Perm_1}{\Env}{\Perm_5}{A}
$, and by transitivity of \as{$\subseteq$, we have} $\Perm_5 \subseteq \Perm_2$, as required.

\as{Secondly, assume that the} formula does not hold. Therefore,
\[
\forall \Perm_4 \subseteq \Perm_2.\ \rds{\Perm_4} \subset \rds{\Perm_2}
	\Rightarrow
        \Heap, \Perm_1 * \Perm_4, \Env \notmodelsISL A
\]
Hence,
$
\extend{\Heap}{\Perm_1}{\Env}{\Perm_2}{A}
$. 
\as{We} can prove the obligation by picking $\Perm_3 = \Perm_2$.\qed\smallskip
}

\begin{definition}[\Weakenable{} and Intuitionistic formulas]
We define a formula $A$ to be \emph{\weakenable} if and only if
\[
\forall \Heap, \Perm, \Env, \Perm'. (\Perm\subseteq\Perm' \;\;\land\;\; \Heap,\Perm,\Env\modelsISL A \quad\Rightarrow\quad \Heap,\Perm',\Env\modelsISL A)
\]
We define a formula $A$ to be \emph{intuitionistic} if and only if
\[
\forall \Heap, \Perm, \Env. (\Heap,\Perm,\Env\modelsISL A \quad\Rightarrow\quad \forall (\Heap',\Perm',\Env)\in\globalExts{\Heap,\Perm,\Env}.\  \Heap',\Perm',\Env\modelsISL A)
\]
\end{definition}

\addprop{lemma}{}{lemma:fuzzyedge}{\em
If  $A$ is \weakenable{}, $\extend{\Heap}{\Perm}{\Env}{\Perm'}{A}$,
$\Perm' \subseteq \Perm''$
and $\rds{\Perm'} = \rds{\Perm''}$, then
$\extend{\Heap}{\Perm}{\Env}{\Perm''}{A}$
}{%
}{
As $A$ is \weakenable{}, we know:
\[
\Heap,\Perm * \Perm'',\Env \modelsISL A
\]
If $\Perm''$ is not minimal, then neither was $\Perm'$, which is a contradiction.  So $\Perm''$ must also be a minimal extension.\qed\smallskip
}

The following technical lemma shows that any necessary permissions in a state are still necessary in a state with fewer permissions, provided the assertion we are considering is closed under permission extension:
\addprop{lemma}{Minimal Permission Extensions Closed}{lemma:newlemmafive}
{\em
If $\extend{\Heap}{\Perm_1*\Perm_2}{\Env}{\Perm_3}{A}$ and  $A$ is \weakenable{}, then $\exists \Perm_4\subseteq\Perm_2$ and $\extend{\Heap}{\Perm_1}{\Env}{\Perm_4*\Perm_3}{A}$.
}{%
}
{
First, we prove a more general result, and apply Lemma~\ref{lemma:fuzzyedge} to get the required result.  We prove, that:
\begin{quote}
If we know that $A$ is \weakenable{}, then,
if we also have $\extend{\Heap}{\Perm_1*\Perm_2}{\Env}{\Perm_3}{A}$ then $\exists \Perm_4\subseteq\Perm_2$ and $\Perm_5\subseteq\Perm_3$ such that $\rds{\Perm_5} = \rds{\Perm_3}$ and $\extend{\Heap}{\Perm_1}{\Env}{\Perm_4*\Perm_5}{A}$.
\end{quote}

\noindent We know $\Heap, \Perm_1 * \Perm_2 * \Perm_3, \Env \modelsISL A$, and by Lemma~\ref{lemma:newlemmaone}, we have \as{that} there exists $\Perm_6$ such that $\extend{\Heap}{ \Perm_1}{\Env}{\Perm_6}{A}$ and $\Perm_6 \subseteq \Perm_2 * \Perm_3$.  Choose $\Perm_4$ and $\Perm_5$ such that $\Perm_5 = \Perm_3 \sqcap \Perm_6$, and $\Perm_4 = \Perm_6 - (\Perm_3 \sqcap \Perm_6)$.

We need to show $\Perm_4 \subseteq \Perm_2$ and $\rds{\Perm_5} = \rds{\Perm_3}$.  From our assumptions, we know that
$\Perm_4 * (\Perm_3 \sqcap \Perm_6) = \Perm_6$, and thus $\Perm_4 \subseteq \Perm_6$,  and $\Perm_6 \subseteq \Perm_2 * \Perm_3$. We split into two cases:
\begin{desCription}
\item\noindent{\hskip-12 pt($\Perm_4 \not{\subseteq} \Perm_2$):}\  Therefore, there exists $(\iota,f)$ such that $\Perm_4[\iota,f] > \Perm_2[\iota,f]$.  Therefore,
$\Perm_6[\iota,f] > \Perm_2[\iota,f]$, but by assumption, we know
$\Perm_6[\iota,f] \leq \Perm_2[\iota,f] + \Perm_3[\iota,f]$. Consider two sub-cases:
\begin{desCription}
\item\noindent{\hskip-12 pt({$\Perm_6[\iota,f]
    <\Perm_3[\iota,f]$}):}\
Therefore, as $\Perm_4 = \Perm_6 - (\Perm_3 \sqcap \Perm_6)$ we know $\Perm_4[\iota,f] = \none$ contradicting $\Perm_4[\iota,f] > \Perm_2[\iota,f]$.

\item\noindent{\hskip-12 pt({$\Perm_3[\iota,f] \leq
    \Perm_6[\iota,f]$}):}\
Therefore, as $\Perm_4 * (\Perm_3 \sqcap \Perm_6) = \Perm_6$ we know $\Perm_4[\iota,f] + \Perm_3[\iota,f] = \Perm_6[\iota,f]$.
As $\Perm_6 \subseteq \Perm_2 * \Perm_3$, we know
$\Perm_4[\iota,f] + \Perm_3[\iota,f] \leq \Perm_2[\iota,f] + \Perm_3[\iota,f]$, hence $\Perm_4[\iota,f] \leq \Perm_2[\iota,f]$, contradicting assumption.
\end{desCription}

\item\noindent{\hskip-12 pt($\Perm_4 \subseteq \Perm_2$):}\
We split into two sub-cases:
\begin{desCription}
\item\noindent{\hskip-12 pt($\rds{\Perm_5} = \rds{\Perm_3}$):}\ The result follows directly.

\item\noindent{\hskip-12 pt($\rds{\Perm_5} \subset \rds{\Perm_3}$):}\  For this case, using the assumption that $A$ is \weakenable{}, we get
$\Heap, \Perm_1 * \Perm_2 * \Perm_5, \Env \modelsISL A$, which is in   contradiction with the assumption that $\extend{\Heap}{\Perm_1*\Perm_2}{\Env}{\Perm_3}{A}$.\qed
\end{desCription}
\end{desCription}
}

The validity of assertions in this semantics is closed under permission extension.
\addprop{proposition}{}{prop:weakening}
{\em All formulas $A$ are \weakenable{}.}
{\em By induction on structure of formula.}
{
By induction on $A$.  By $\Perm\subseteq\Perm'$ we can assume there exists  $\Perm''$ such that $\Perm * \Perm'' = \Perm'$.
\begin{desCription}

\item\noindent{\hskip-12 pt($A \equiv \acc(E.f,\pi)$):}\

\[\begin{array}{rclr}
\multicolumn{3}{@{}l}{\modelsSLE{\Heap}{\Perm}{\Env}{A}}\\

 & \Leftrightarrow & \Perm[\semSLE{E}{\Heap,\sigma},f] \geq \pi  & \textit{(by defn)}\\

 & \Rightarrow & \Perm'[\semSLE{E}{\Heap,\sigma},f] \geq \pi  & \textit{(as $ \Perm \subseteq \Perm'$)}\\

 & \Leftrightarrow & \modelsSLE{\Heap}{\Perm'}{\Env}{A}
\end{array}
\]
\item\noindent{\hskip-12 pt($A \equiv E=E'$):}\
\[\begin{array}{rclr}
\multicolumn{3}{@{}l}{\modelsSLE{\Heap}{\Perm}{\Env}{A}}\\

 & \Leftrightarrow & \semSLE{E}{\Heap,\sigma} = \semSLE{E'}{\Heap,\sigma}
  & \textit{(by defn)}\\

 & \Leftrightarrow & \modelsSLE{\Heap}{\Perm'}{\Env}{A}
\end{array}
\]
\item\noindent{\hskip-12 pt($A\equiv E.f\mapsto E'$):}\ Follows using a combination of arguments from previous two cases.

\item\noindent{\hskip-12 pt($A \equiv A_1 * A_2$):}\

\[\begin{array}{rclr}
\multicolumn{3}{@{}l}{\modelsSLE{\Heap}{\Perm}{\Env}{A}}\\

 & \Rightarrow & \exists \Perm_1, \Perm_2.\ \Perm_1 * \Perm_2 = \Perm \land \modelsISL{\Heap}{\Perm_1}{\Env}{A_1}
\land
\modelsISL{\Heap}{\Perm_2}{\Env}{A_2}
\end{array}\smallskip
\]

\noindent We introduce $\Perm_1$ and $\Perm_2$, and define $\Perm_3 = \Perm_2 * \Perm''$. By induction, we know
\[\begin{array}{rclr}
 &\Rightarrow& \Perm_1 * \Perm_2 = \Perm \land \modelsISL{\Heap}{\Perm_1}{\Env}{A_1}
\land
\modelsISL{\Heap}{\Perm_3}{\Env}{A_2}
  & \\
 & \Rightarrow & \Perm_1 * \Perm_3 = \Perm' \land \modelsISL{\Heap}{\Perm_1}{\Env}{A_1}
\land
\modelsISL{\Heap}{\Perm_3}{\Env}{A_2}
\\
 & \Rightarrow & \modelsISL{\Heap}{\Perm'}{\Env}{A}
  & \textit{(by defn)}\\

\end{array}
\]
\item\noindent{\hskip-12 pt($A\equiv A_1 \land A_2$, $A\equiv A_1 \lor A_2$):}\  Trivially by induction.

\item\noindent{\hskip-12 pt($A\equiv A_1 \wand A_2$):}\
By unfolding \as{the} obligation, we can assume
\[
\begin{array}{l}
\Heap,\Perm,\Env \modelsISL A_1 \wand A_2\\
\Perm_1 \perp \Perm'\\
\Heap_1 \agrees{\rds{\Perm'} \cup \overline{\rds{\Perm_1}}} \Heap\\
\extend{\Heap_1}{\emptyset}{\Env}{\Perm_1}{A_1}
\end{array}
\]
and must prove
\[
\Heap_1,\Perm' * \Perm_1, \Env \modelsISL A_2
\]
By assumptions, we know $\Perm_1 \perp \Perm$ and $\Heap_1 \agrees{\rds{\Perm} \cup \overline{\rds{\Perm_1}}} \Heap$ as $\Perm$ is smaller than $\Perm'$.  Therefore, using $\wand$ assumption, we have
\[
\Heap_1,\Perm * \Perm_1, \Env \modelsISL A_2
\]
By inductive hypothesis, we know
\[
\Heap_1,\Perm' * \Perm_1, \Env \modelsISL A_2
\]
as required.

\item\noindent{\hskip-12 pt($A \equiv A_1 \imp A_2$):}\
By unfolding the obligation, we can assume
\[
\begin{array}{l}
\Heap,\Perm,\Env \modelsISL A_1 \imp A_2\\
\Perm_2 \perp \Perm'\\
\Heap_2 \agrees{\rds{\Perm'} \cup \overline{\rds{\Perm_2}}} \Heap\\
\extend{\Heap_2}{\Perm'}{\Env}{\Perm_2}{A_1}
\end{array}
\]
and must prove
\[
\Heap_2,\Perm' * \Perm_2, \Env \modelsISL A_2
\]
By Lemma~\ref{lemma:newlemmafive}, and as $A_1$ is closed under permission extension by inductive hypothesis, we have that there exists a $\Perm_1$ such that
\[
\begin{array}{l}
\Perm_1 \subseteq \Perm'' \\
\extend{\Heap_2}{\Perm}{\Env}{\Perm_1 * \Perm_2}{A_1}\\
\end{array}
\]
As $\Perm \subseteq \Perm'$, we can show
\[
\begin{array}{rcll}
&&\Heap_2 \agrees{\rds{\Perm'} \cup \overline{\rds{\Perm_2}}} \Heap\\
&\Rightarrow &
\Heap_2 \agrees{\rds{\Perm} \cup \overline{\rds{\Perm_2}}} \Heap &
  \textit{(as $\Perm \subseteq \Perm'$)}\\
&\Rightarrow &
\Heap_2 \agrees{\rds{\Perm} \cup \overline{\rds{\Perm_2 * \Perm_1}}} \Heap &
  \textit{(as $\Perm_2 \subseteq \Perm_1 * \Perm_2$)}\\
\end{array}
\]
Therefore by $\imp$ assumption, we have
\[
\Heap_2, \Perm * \Perm_1 * \Perm_2, \Env \modelsISL A_2
\]
As $\Perm_1 \subseteq \Perm''$, by inductive hypothesis we have
\[
\Heap_2, \Perm * \Perm'' * \Perm_2, \Env \modelsISL A_2
\]
as required.

\item\noindent{\hskip-12 pt($A \equiv \exists x.\; A'$):}\
\[\begin{array}{l}
\Heap,\Perm,\Env \modelsISL \exists x.\; A'\\
\Rightarrow \exists v.\; \Heap,\Perm,\Env[x \mapsto v] \modelsISL A'\\
\Rightarrow \exists v.\; \Heap,\Perm',\Env[x \mapsto v] \modelsISL A'\textit{ (by inductive hypothesis)}\\
\Rightarrow \Heap,\Perm',\Env \modelsISL \exists x.\;A'
\end{array}\]
\end{desCription}

}

In our later results, we sometimes need to be able to define when a minimal permission extension in one state corresponds with a minimal permission extension in another. In particular, we want to be able to show that, under certain conditions, the notion of what is a minimal extension in a current state is robust to interference. An important property which helps us here, is to be able to express that the truth of an assertion is stable in all extensions of a particular state. That is, even if the assertion does not hold in the current state, in all extensions which do satisfy the assertion, its truth will be stable. This can be expressed by the following definitions.
\begin{definition}

An assertion $A$ is \emph{extension framed} in a state $(\Heap,\Perm,\Env)$ if and only if, $A$ is stable in all (globally-havoced) extensions, i.e.,
\[
\eframed{\Heap,\Perm,\Env,A}
\iff
\stable{\globalExts{\Heap,\Perm,\Env}\cap\semEl{A}}
\]
We also define the set of states in which $A$ is extension framed:
\[
\eframed{A} = \{ (\Heap,\Perm,\Env) \mid \eframed{\Heap,\Perm,\Env,A} \}
\]
An assertion $A$ is \emph{disjoint extension framed} in the current memory if and only if, in all (globally-havoced) disjoint extensions, the truth of $A$ is stable with the permissions held originally:
\[
\deframed{\Heap,\Perm,\Env,A}
\iff
\stablewith{\Perm}{\globalDisjExts{\Heap,\Perm,\Env} \cap \semEl{A} \}}
\]
We also define the set of states in which $A$ is disjoint-extension framed:
\[
\deframed{A} = \{ (\Heap,\Perm,\Env) \mid \deframed{\Heap,\Perm,\Env,A} \}
\]
\end{definition}\vspace{6 pt}%

\noindent Note that we use the globally-havoced notions of state extension (cf. Definition \ref{defn:extensions}), rather than locally-havoced. The reason for this is that, we need this criterion on assertions to be preserved under interference, at various points in our proofs. The following lemma characterises the essential properties that we require of these definitions.

\addprop{lemma}{}{lem:extfrmuses}
{
\mbox{}\em
\begin{enumerate}[\em(1)]
\item If $(\Heap,\Perm,\sigma) \in \eframed{A}$, then:
  \begin{enumerate}[\em(a)]
    \item if $\Heap' \in \interfere{\Heap}{\Perm}$, then $(\Heap',\Perm,\sigma) \in \eframed{A}$.
    \item if $\Perm' \perp \Perm$  and $\Heap' \in \interfere{\Heap}{\Perm * \Perm' }$ and
                $\Heap,\Perm*\Perm',\sigma \modelsISL A$, then\\  $\Heap', \Perm*\Perm', \sigma \modelsISL A$.
  \end{enumerate}
\item If $(\Heap,\Perm,\sigma) \in \deframed{A}$, then:
  \begin{enumerate}[\em(a)]
    \item if $\Heap' \in \interfere{\Heap}{\Perm}$, then $(\Heap',\Perm,\sigma) \in \deframed{A}$.
    \item if $\Perm' \perp \Perm$  and $\Heap' \in \interfere{\Heap}{\Perm * \Perm' }$ and
                $\Heap,\Perm',\sigma \modelsISL A$, then\\  $\Heap', \Perm', \sigma \modelsISL A$.
  \end{enumerate}
\end{enumerate}
}{}{
\mbox{}
\begin{enumerate}[(1)]
   \item
	\begin{enumerate}[(a)]
	  \item
		This follows directly, since if we have $\Heap' \in \interfere{\Heap}{\Perm}$,
		 then it follows that $\globalExts{\Heap,\Perm,\Env} = \globalExts{\Heap',\Perm,\Env}$.
	  \item
		 By assumptions, we know that $(\Heap, \Perm * \Perm', \Env) \in \globalExts{\Heap,\Perm,\Env}$, and also that $\Heap' \in \interfere{\Heap}{\Perm * \Perm'}$.  As $ \globalExts{\Heap,\Perm,\Env} \cap \semEl{A}$ is stable, we know
$\Heap',\Perm * \Perm',\Env \modelsISL A$ as required.
	\end{enumerate}
   \item
	\begin{enumerate}[(a)]
	   \item
		This follows directly, since if $\Heap' \in \interfere{\Heap}{\Perm}$,
		 then $\globalDisjExts{\Heap,\Perm,\Env} = \globalDisjExts{\Heap',\Perm,\Env}$.
	   \item
		 By assumptions, we know that $(\Heap, \Perm', \Env) \in \globalDisjExts{\Heap,\Perm,\Env}$, and also that $\Heap' \in \interfere{\Heap}{\Perm*\Perm'}$.  As $ \globalDisjExts{\Heap,\Perm,\Env} \cap \semEl{A}$ is stable with $\Perm$, we know
$\Heap',\Perm',\Env \modelsISL A$ as required.

	\end{enumerate}
\end{enumerate}
}


\noindent Note that using $\localExts{\Heap,\Perm,\sigma}$, or $\localDisjExts{\Heap,\Perm,\sigma}$, rather than $\globalExts{\Heap,\Perm,\sigma}$, or $\globalDisjExts{\Heap,\Perm,\sigma}$, invalidates part (1.a), or part (2.a), respectively.

The following technical lemma provides sufficient conditions for a minimal permission extension to be robust to interference in the rest of the heap:
\addprop{lemma}{Preservation of Minimal Extensions}{lemma:newlemmatwo}
{\hfill\em \begin{enumerate}[\em(1)]
  \item For all $(\Heap_1, \Perm_1, \Env) \in \eframed{A},$
  \[\begin{array}{r}
  \forall\Perm_2\perp\Perm_1,\forall \Heap_2\agrees{\Perm_1*\Perm_2}\Heap_1. (\extend{\Heap_1}{\Perm_1}{\Env}{\Perm_2}{A}\qquad\mbox{}\\ \   \Rightarrow\ \ \extend{\Heap_2}{\Perm_1}{\Env}{\Perm_2}{A})
  \end{array}\]
  \item For all $(\Heap_1, \Perm_1, \Env) \in \deframed{A},$
  \[\begin{array}{r}
  \forall\Perm_2\perp\Perm_1,\forall \Heap_2\agrees{\Perm_1*\Perm_2}\Heap_1. (\extend{\Heap_1}{\emptyset}{\Env}{\Perm_2}{A}\qquad\mbox{}\\ \   \Rightarrow\ \ \extend{\Heap_2}{\emptyset}{\Env}{\Perm_2}{A})
  \end{array}\]
  \item If $A$ is self-framing and $\Perm_2\perp\Perm_1$ and $\extend{\Heap_1}{\Perm_1}{\Env}{\Perm_2}{A}$ and   $\Heap_2{\agrees{\Perm_1*\Perm_2}}\Heap_1$, then $\extend{\Heap_2}{\Perm_1}{\Env}{\Perm_2}{A}$
  \end{enumerate}
}{
}{\mbox{}
\begin{enumerate}[(1)]
\item
Assume $\Perm_2 \perp \Perm_1$, $\Heap_2 \agrees{\Perm_1 * \Perm_2} \Heap_1$ and $\extend{\Heap_1}{\Perm_1}{\Env}{\Perm_2}{A}$. By $(\Heap_1,\Perm_1,\Env) \in \eframed{A}$
we have $\Heap_2,\Perm_1* \Perm_2,\Env \modelsISL A$, and thus we are left to prove:
\[
\forall \Perm_3 \subseteq \Perm_2.\ \rds{\Perm_3} \subset \rds{\Perm_2} \Rightarrow
\Heap_2, \Perm_1 * \Perm_3, \Env \notmodelsISL A
\]
We assume $\Perm_3 \subseteq \Perm_2$, $\rds{\Perm_3} \subset \rds{\Perm_2}$ and $\Heap_2, \Perm_1 * \Perm_3, \Env \modelsISL A$ and seek a contradiction. By Lemma~\ref{lem:extfrmuses}(1)(b) we can prove $\Heap_1, \Perm_1 * \Perm_3, \Env \modelsISL A$, and thus using $\extend{\Heap_1}{\Perm_1}{\Env}{\Perm_2}{A}$ we deduce a contradiction.

\item
Assume $\Perm_2 \perp \Perm_1$, $\Heap_2 \agrees{\Perm_1 * \Perm_2} \Heap_1$ and $\extend{\Heap_1}{\emptyset}{\Env}{\Perm_2}{A}$. By $(\Heap_1,\Perm_1,\Env) \in \deframed{A}$
we have $\Heap_2, \Perm_2,\Env \modelsISL A$, and thus we are left to prove:
\[
\forall \Perm_3 \subseteq \Perm_2.\ \rds{\Perm_3} \subset \rds{\Perm_2} \Rightarrow
\Heap_2, \Perm_3, \Env \notmodelsISL A
\]
We assume $\Perm_3 \subseteq \Perm_2$, $\rds{\Perm_3} \subset \rds{\Perm_2}$ and $\Heap_2, \Perm_3, \Env \modelsISL A$ and seek a contradiction. By Lemma~\ref{lem:extfrmuses}(2)(b) we can prove $\Heap_1, \Perm_3, \Env \modelsISL A$, and thus using $\extend{\Heap_1}{\emptyset}{\Env}{\Perm_2}{A}$ we deduce a contradiction.

\item Since $A$ is self-framing, we know $\Heap_2, \Perm_1*\Perm_2, \Env \modelsISL A$, and thus we are left to prove:
\[
\forall \Perm_3 \subseteq \Perm_2.\ \rds{\Perm_3} \subset \rds{\Perm_2} \Rightarrow
\Heap_2, \Perm_1*\Perm_3, \Env \notmodelsISL A
\]
We assume $\Perm_3 \subseteq \Perm_2$, $\rds{\Perm_3} \subset \rds{\Perm_2}$ and $\Heap_2, \Perm_1*\Perm_3, \Env \modelsISL A$ and seek a contradiction.  By self-framing assumption, we know
$\Heap_1, \Perm_1*\Perm_3, \Env \modelsISL A$, but this contradicts initial minimality assumption.
\end{enumerate}
}

We will make use of these technical lemmas in the next subsections, in order to characterise properties of our general semantics.

\subsection{Correspondence with Separation Logic Semantics}

In this subsection, we examine the correspondence between the semantics which our definition implies for the separation logic fragment of our logic, and the traditional semantics of separation logic.
In order to precisely characterise the laws which hold of the logic, we require a notion of semantic entailment.
\begin{definition}[Semantic Entailment, Validity and Equivalence]
A \SLE{} assertion $A$ is \emph{semantically valid} (written $\valid A$) if it holds in all situations; i.e.,
\[\valid A \Leftrightarrow \ \forall \Heap, \Perm, \Env.\ \Heap, \Perm, \Env \modelsISL A\]%
Given \SLE{} assertions $A_1$ and $A_2$, we say that \emph{$A_1$ semantically entails $A_2$} (and write $A_1\entails A_2$) if and only if $A_2$ holds whenever $A_1$ does; i.e.,
\[A_1\entails A_2 \ \Leftrightarrow\  \semEl{A_1} \subseteq \semEl{A_2} \]
Given \SLE{} assertions $A_1$ and $A_2$, we say that \emph{$A_1$ is equivalent to $A_2$} (and write $A_1\isequiv A_2$) if and only if $A_1\entails A_2$ and $A_2\entails A_1$.\smallskip
\end{definition}

For pure assertions, our (rather complex) definition of implication can be simplified to a simple boolean evaluation of the conditional:
\begin{lemma}[Pure Assertions are Boolean Conditionals]\label{lemma:boolean}
If $A_1$ is pure, then:
\[
\Heap, \Perm, \Env \modelsISL A_1 \imp A_2
\ \ \iff\ \
(\Heap, \Perm, \Env \modelsISL A_1\ \ \Rightarrow\ \ \Heap,\Perm,\Env \modelsISL A_2)
\]
\begin{proof}
We first observe that if $A_1$ is pure and $\extend{\Heap}{\Perm}{\Env}{\Perm'}{A_1}$, then
$\Perm' = \emptyset$.  Simplifying the semantic definition of $\imp$ using the $\emptyset$ gives the required semantics.
\end{proof}
\end{lemma}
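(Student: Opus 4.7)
My plan is to follow the hint in the author's sketch: show that when $A_1$ is pure, the only minimal permission extension witnessing $A_1$ is the empty one, and then observe that the corresponding local havoc must leave the heap unchanged, so the universally quantified extension in the definition of $\imp$ collapses to a single state.

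The central step is the following. Suppose $A_1$ is pure and $\extend{\Heap'}{\Perm}{\Env}{\Perm'}{A_1}$. By definition of $\extend{\cdot}{\cdot}{\cdot}{\cdot}{\cdot}$ we have $\Heap',\Perm*\Perm',\Env \modelsISL A_1$. Applying purity gives $\Heap',\emptyset,\Env \modelsISL A_1$, and Proposition~\ref{prop:weakening} then yields $\Heap',\Perm,\Env \modelsISL A_1$. Now if $\rds{\Perm'}\neq\emptyset$, then $\Perm'' = \emptyset$ is a permission mask with $\Perm''\subseteq\Perm'$ and $\rds{\Perm''} \subsetneq \rds{\Perm'}$, contradicting the minimality clause of $\extend{\Heap'}{\Perm}{\Env}{\Perm'}{A_1}$. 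Hence $\rds{\Perm'}=\emptyset$, and since $\rds{\Perm'}=\emptyset$ means $\Perm'[o,f] = \none$ at every location, we conclude $\Perm' = \emptyset$.

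Armed with this, I would substitute $\Perm' = \emptyset$ into the definition of $\localExts{\Heap,\Perm,\Env}$. The constraint $\Heap' \agrees{\rds{\Perm}\cup\overline{\rds{\Perm'}}} \Heap$ becomes agreement on all locations, so $\Heap' = \Heap$, and $\Perm*\Perm' = \Perm$. Thus the universal quantifier in the definition of $\Heap,\Perm,\Env\modelsISL A_1 \imp A_2$ ranges effectively over the single state $(\Heap,\Perm,\Env)$ (any state with $\Perm'\neq\emptyset$ fails the $\extend{\cdot}{\cdot}{\cdot}{\cdot}{A_1}$ hypothesis by the observation above, so contributes vacuously). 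Moreover, $\extend{\Heap}{\Perm}{\Env}{\emptyset}{A_1}$ is equivalent to $\Heap,\Perm,\Env\modelsISL A_1$: the first conjunct of $\extend{\cdot}{\cdot}{\cdot}{\cdot}{\cdot}$ unfolds directly to $\Heap,\Perm,\Env\modelsISL A_1$, and the minimality conjunct is vacuous since no $\Perm''$ satisfies $\rds{\Perm''}\subsetneq\rds{\emptyset}$.

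Combining these reductions, the implication semantics reduces to $\Heap,\Perm,\Env\modelsISL A_1 \Rightarrow \Heap,\Perm,\Env\modelsISL A_2$, which is exactly the right-hand side of the claimed equivalence. I do not expect any real obstacles: the only mildly subtle point is checking that no ``extra'' minimal extensions (with $\Perm'\neq\emptyset$) can contribute to the quantifier, which is precisely what the central observation rules out, together with the fact that the local-havoc constraint becomes trivial once $\rds{\Perm'}$ is empty.
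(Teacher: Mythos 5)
Your proof is correct and takes essentially the same route as the paper's: you establish the paper's stated observation that for pure $A_1$ any minimal permission extension must be empty (via purity, Proposition~\ref{prop:weakening}, and the minimality clause with $\Perm''=\emptyset$), and then note that with $\Perm'=\emptyset$ the local-havoc constraint forces $\Heap'=\Heap$, collapsing the quantifier in the $\imp$ semantics to the single state $(\Heap,\Perm,\Env)$. The paper's proof is exactly this argument stated tersely; your version merely fills in the details it leaves implicit.
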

Note that this property was not true in the semantics of the precursor paper~\cite{ParkinsonSummers'11}, and prevents the former work from correctly modelling Chalice's implication.

The following lemma also shows how our definition of semantics for implication and the magic wand can be simplified if we know that the immediate subformulas are self-framing assertions (in this case, we do not encounter the technical difficulties which led us to employ \emph{minimal} extensions; cf. Section \ref{sect:implication}):

\addprop{lemma}{Simplified Semantics for Self-Framing Conditionals}{lemma:newlemmathree}
{\hfill\em\begin{enumerate}[\em(1)]
\item If $A_1$ and $A_2$ are both self-framing, then:
\begin{enumerate}[\em(a)]
\item $\modelsSLE{\Heap}{\Perm}{\Env}{A_1\imp A_2}$ if and only if:
\[
\forall (\Heap',\Perm',\Env)\in\localExts{\Heap,\Perm,\Env}.\;\;(\Heap', \Perm', \Env \modelsISL A_1 \;\;\Rightarrow\;\;     \Heap', \Perm', \Env \modelsISL A_2)
\]
\item $\modelsSLE{\Heap}{\Perm}{\Env}{A_1\imp A_2}$ if and only if:
\[
\forall (\Heap',\Perm',\Env)\in\globalExts{\Heap,\Perm,\Env}.\;\;(\Heap', \Perm', \Env \modelsISL A_1 \;\;\Rightarrow\;\;     \Heap', \Perm', \Env \modelsISL A_2)
\]
\end{enumerate}
\item If $A_1$ and $A_2$ are both self-framing, then:
\begin{enumerate}[\em(a)]
\item $\modelsSLE{\Heap}{\Perm}{\Env}{A_1\wand A_2}$ if and only if:
\[
\forall (\Heap',\Perm',\Env)\in\localDisjExts{\Heap,\Perm,\Env}.\;\;(\Heap', \Perm', \Env \modelsISL A_1 \;\;\Rightarrow\;\;     \Heap', \Perm * \Perm', \Env \modelsISL A_2)
\]
\item $\modelsSLE{\Heap}{\Perm}{\Env}{A_1\wand A_2}$ if and only if:
\[
\forall (\Heap',\Perm',\Env)\in\globalDisjExts{\Heap,\Perm,\Env}.\;\;(\Heap', \Perm', \Env \modelsISL A_1 \;\;\Rightarrow\;\;     \Heap', \Perm * \Perm', \Env \modelsISL A_2)
\]
\end{enumerate}
\end{enumerate}
}{%
}{
\hfill
\begin{enumerate}[(1)]
\item\label{myi}
\begin{enumerate}[(a)]
\item We need to show that:
\[\qquad\begin{array}{l}
\left(
\begin{array}{r}
\forall \Perm_1, \Heap_1.(\;(\Heap_1,\Perm * \Perm_1,\Env) \in \localExts{\Heap,\Perm,\Env}  \land\
\extend{\Heap_1}{\Perm}{\Env}{\Perm_1}{A_1}\qquad\mbox{}\\ \Rightarrow \Heap_1, \Perm * \Perm_1, \Env \modelsISL A_2)
\end{array}\right)\\
\Leftrightarrow\\
\left(
\begin{array}{r}
\forall \Perm_2, \Heap_2.(\;(\Heap_2,\Perm * \Perm_2,\Env) \in \localExts{\Heap,\Perm,\Env}  \land\
\Heap_2, \Perm * \Perm_2, \Env \modelsISL A_1\qquad\mbox{}\\ \Rightarrow \Heap_2, \Perm * \Perm_2, \Env \modelsISL A_2)\end{array}\right)\end{array}
\]
The right-to-left direction is easy, since the left-hand formula requires that we check the implication in strictly fewer states (only those which are obtained via minimal extensions).
For the left-to-right direction, assume that for some arbitrary $\Perm_2,\Heap_2$ we have $\Perm_2\perp\Perm$ and $\Heap_2\agrees{\rds{\Perm}\cup\overline{\rds{\Perm_2}}}\Heap$ and $\Heap_2,\Perm*\Perm_2,\Env\modelsISL A_1$. Then we need to show that: $\modelsISL{\Heap_2}{\Perm*\Perm_2}{\Env}{A_2}$.
By Lemma \ref{lemma:newlemmaone}, there exists $\Perm_3\subseteq\Perm_2$ such that $\extend{\Heap_2}{\Perm}{\Env}{\Perm_3}{A_1}$. Define $\Heap_3 = \condheap{\Perm*\Perm_3}{\Heap_2}{\Heap}$. Then, by construction, $\Heap_3\agrees{\Perm*\Perm_3}\Heap_2$ and $\Heap_3\agrees{\rds{\Perm}\cup\overline{\rds{\Perm_3}}}\Heap$.

By Lemma \ref{lemma:newlemmatwo}~(3), since $A_1$ is self-framing, we have $\extend{\Heap_3}{\Perm}{\Env}{\Perm_3}{A_1}$. Now, using the assumption from the left-hand-side of our overall goal, choose $\Perm_1 = \Perm_3$ and $\Heap_1 = \Heap_3$, and we obtain $\modelsISL{\Heap_3}{\Perm*\Perm_3}{\Env}{A_2}$. Since $A_2$ is self-framing, we have $\modelsISL{\Heap_2}{\Perm*\Perm_3}{\Env}{A_2}$. Then, by Proposition \ref{prop:weakening}, we obtain $\modelsISL{\Heap_2}{\Perm*\Perm_2}{\Env}{A_2}$ as required.
\item\label{mypartii} By the previous part, it suffices to show that :
\[
\begin{array}{c}
\forall (\Heap',\Perm',\Env)\in\localExts{\Heap,\Perm,\Env}.\;\;(\Heap', \Perm', \Env \modelsISL A_1 \;\;\Rightarrow\;\;     \Heap', \Perm', \Env \modelsISL A_2)\\
\Leftrightarrow\\
\forall (\Heap',\Perm',\Env)\in\globalExts{\Heap,\Perm,\Env}.\;\;(\Heap', \Perm', \Env \modelsISL A_1 \;\;\Rightarrow\;\;     \Heap', \Perm', \Env \modelsISL A_2)\\
\end{array}
\]
The $(\Leftarrow)$ direction is immediate, since $\localExts{\Heap,\Perm,\Env}\subseteq\globalExts{\Heap,\Perm,\Env}$. To show the $(\Rightarrow)$ direction, we assume the former formula, and suppose that we have some $(\Heap',\Perm',\Env)\in\globalExts{\Heap,\Perm,\Env}$ such that $\Heap', \Perm', \Env \modelsISL A_1$ holds. Define $\Heap'' = \condheap{\Perm'}{\Heap'}{\Heap}$. By construction, $(\Heap'',\Perm',\Env)\in\localExts{\Heap,\Perm,\Env}$ and $\Heap''\agrees{\Perm'}\Heap'$. Since $A_1$ is self-framing, we conclude that $\Heap', \Perm', \Env \modelsISL A_1$ holds. Therefore, by the assumed formula, we can conclude that $\Heap'',\Perm',\Env \modelsISL A_2$ is true. Since $A_2$ is self-framing, we conclude $\Heap',\Perm',\Env \modelsISL A_2$ as required.

\end{enumerate}
\item
\begin{enumerate}
\item We need to show that:
\[ \begin{array}{l}
\left(
\begin{array}{r}
\forall \Perm_1, \Heap_1.(\;\Perm_1 {\perp} \Perm\  \land    \Heap_1 \agrees{\rds{\Perm}\cup\overline{\rds{\Perm_1}}} \Heap\  \land\
\extend{\Heap_1}{\emptyset}{\Env}{\Perm_1}{A_1}\qquad\mbox{}\\ \Rightarrow \Heap_1, \Perm * \Perm_1, \Env \modelsISL A_2)\end{array}\right)\\
\Leftrightarrow\\
\left(
\begin{array}{r}
\forall \Perm_2, \Heap_2.(\;\Perm_2 {\perp} \Perm\  \land    \Heap_2 \agrees{\rds{\Perm}\cup\overline{\rds{\Perm_2}}} \Heap\  \land\
\Heap_2, \Perm_2, \Env \modelsISL A_1\qquad\mbox{}\\
 \Rightarrow \Heap_2, \Perm * \Perm_2, \Env \modelsISL A_2)\end{array}\right)\end{array}
\]
The right-to-left direction is easy, since the left-hand formula requires that we check the implication in strictly fewer states (only those which are obtained via minimal extensions).
For the left-to-right direction, assume that for some arbitrary $\Perm_2,\Heap_2$ we have $\Perm_2\perp\Perm$ and $\Heap_2\agrees{\rds{\Perm}\cup\overline{\rds{\Perm_2}}}\Heap$ and $\Heap_2,\Perm_2,\Env\modelsISL A_1$. Then we need to show that: $\modelsISL{\Heap_2}{\Perm*\Perm_2}{\Env}{A_2}$.
By Lemma \ref{lemma:newlemmaone}, there exists $\Perm_3\subseteq\Perm_2$ such that $\extend{\Heap_2}{\emptyset}{\Env}{\Perm_3}{A_1}$. Define $\Heap_3 = \condheap{\Perm*\Perm_3}{\Heap_2}{\Heap}$. Then, by construction, $\Heap_3\agrees{\Perm*\Perm_3}\Heap_2$ and $\Heap_3\agrees{\rds{\Perm}\cup\overline{\rds{\Perm_3}}}\Heap$.

By Lemma \ref{lemma:newlemmatwo}~(3), since $A_1$ is self-framing, we have $\extend{\Heap_3}{\emptyset}{\Env}{\Perm_3}{A_1}$. Now, using the assumption from the left-hand-side of our overall goal, choose $\Perm_1 = \Perm_3$ and $\Heap_1 = \Heap_3$, and we obtain $\modelsISL{\Heap_3}{\Perm*\Perm_3}{\Env}{A_2}$. Since $A_2$ is self-framing, we have $\modelsISL{\Heap_2}{\Perm*\Perm_3}{\Env}{A_2}$. Then, by Proposition \ref{prop:weakening}, we obtain $\modelsISL{\Heap_2}{\Perm*\Perm_2}{\Env}{A_2}$ as required.
\item By similar argument to part (1)(b).
\end{enumerate}
\end{enumerate}
}

\noindent This lemma provides two alternative semantics for the implication and wand connectives, which are both equivalent to our actual semantics of Definition \ref{defn:seplogictotalsemantics} if we restrict the logic to self-framing subformulas. In particular, to model the fragment of our logic which corresponds to separation logic, these alternative semantics are sufficient. The latter alternative for each connective (defined in terms of globally-havoced extensions) is the semantics used in our precursor paper~\cite{ParkinsonSummers'11}, while the former (using locally-havoced extensions) is convenient to simplify several of our proofs.

Note that the concepts of minimal permission extensions, and locally-havoced extensions (neither of which were used in our precursor paper) are \emph{not} motivated by our desire to correctly model separation logic semantics in our total heaps model; as the lemma above makes explicit, a simpler semantics could have been defined if this was our only goal. However, that semantics does not extend to correctly handle the implication in implicit dynamic frames, for which we needed the concept of minimal extensions to get the general case correct, as is motivated in Subsection \ref{sect:implication}.

We now turn to relating our total heap semantics for separation logic with the standard semantics. To do this, we need to relate partial heaps with pairs of total heap and permission mask. Given any total heap $\Heap$ and permission mask $\Perm$ we can construct a corresponding partial heap $\restrict{\Heap}{\Perm}$. Conversely, any partial heap $h$ can be represented as the restriction of a total heap $\Heap$ to the permission mask corresponding to all the permissions in $h$. This representation however, is not unique - there are many such total heaps  $\Heap$ we could choose such that $h = \restrict{\Heap}{\Perm}$. However, the different choices of $\Heap$ can only differ over the locations given no permission in $\Perm$, and \as{Corollary \ref{lemma:seplogicframing}} demonstrates that such differences do not affect the semantics of assertions. For our correspondence result, it is therefore without loss of generality to consider partial heaps constructed by $\restrict{\Heap}{\Perm}$. We can then show that our total heap semantics for \SL{} is sound and complete with respect to the standard semantics:
\addprop{theorem}{Correctness of Total Heap Semantics}{thm:correctness}
{\em
For all \SL-assertions $a$, environments $\Env$, total heaps $\Heap$, and permission masks $\Perm$:
\[\Heap,\Perm, \Env \modelsISL a \ \iff \ (\restrict{\Heap}{\Perm}), \Env \modelsSL a\]
}{
  By induction on the structure of $a$.
}
{
By induction on $a$.
First note, if the property holds of an \SL-assertion,
        then the assertion is self-framing.
        Thus, inductively we can assume all sub-assertions are self-framing.
\begin{desCription}
\item\noindent{\hskip-12 pt($a\equiv \pointsto{e.f}{`p}{e'}$):}\
\[\begin{array}{rclr}
\multicolumn{3}{@{}l}{\modelsSLE{\Heap}{\Perm}{\Env}{a}}\\
 &\Leftrightarrow& \Perm[\semCE{e}{\Env,\Heap}, f] \geq `p \ \land\ \Heap[\semSLE{e}{\Env,\Heap}, f] = \semCE{e'}{\Env,\Heap} & \textit{(by defn.)}\\
&\Leftrightarrow& \multicolumn{2}{l}{(\semCE{e}{\Env,\Heap}, f)\in\dom(\restrict{\Heap}{\Perm}) \ \land \ \first{(\restrict{\Heap}{\Perm})[\semCE{e}{\Env,\Heap}, f]} = \semCE{e'}{\Env,\Heap} \ \land }\\
&&\second{(\restrict{\Heap}{\Perm})[\semCE{e}{\Env,\Heap}, f]} \geq `p & \textit{(by defn{.} of (\restrict{\Heap}{\Perm}))}\\
&\Leftrightarrow& \multicolumn{2}{l}{(\semSLE{e}{\Env}, f)\in\dom(\restrict{\Heap}{\Perm}) \ \land \ \first{(\restrict{\Heap}{\Perm})[\semSLE{e}{\Env}, f]} = \semSLE{e'}{\Env} \ \land }\\
&&\second{(\restrict{\Heap}{\Perm})[\semSLE{e}{\Env}, f]} \geq `p & \textit{(by Lemma \ref{lemma:expressions})}\\
&\Leftrightarrow& (\restrict{\Heap}{\Perm}), \Env \modelsSL a & \textit{(by defn.)}
\end{array}\]
\item\noindent{\hskip-12 pt($a\equiv e{=}e'$):}\
\[\begin{array}{rclr}
\multicolumn{3}{@{}l}{\modelsSLE{\Heap}{\Perm}{\Env}{a}}\\
 &\Leftrightarrow& \semSLE{e}{\Env,\Heap} = \semCE{e'}{\Env,\Heap} & \textit{(by defn.)}\\
&\Leftrightarrow& \semSLE{e}{\Env} = \semSLE{e'}{\Env} & \textit{(by Lemma \ref{lemma:expressions})}\\
&\Leftrightarrow& (\restrict{\Heap}{\Perm}), \Env \modelsSL a  & \textit{(by defn.)}
\end{array}\]
\item\noindent{\hskip-12 pt($a\equiv a_1{*}a_2$):}\
\[\begin{array}{rclr}
\multicolumn{3}{@{}l}{\modelsSLE{\Heap}{\Perm}{\Env}{a}}\\
 &\Leftrightarrow& \exists \Perm_1,\Perm_2.(\Perm = \Perm_1*\Perm_2\ \land\ \\
&&\modelsSLE{\Heap}{\Perm_1}{\Env}{a_1}\ \land\ \modelsSLE{\Heap}{\Perm_2}{\Env}{a_2}) & \textit{(by defn.)}\\
&\Leftrightarrow& \exists \Perm_1,\Perm_2.(\Perm = \Perm_1*\Perm_2\ \land\ \\
&&(\restrict{\Heap}{\Perm_1}), \Env\modelsSL a_1\ \land\ (\restrict{\Heap}{\Perm_2}),\Env\modelsSL a_2) & \textit{(by induction, twice)}\\
&\Leftrightarrow& (\restrict{\Heap}{\Perm}), \Env \modelsSL a & \textit{(by defn.)}
\end{array}\]
\item\noindent{\hskip-12 pt($a\equiv a_1{\wedge}a_2$),($a\equiv a_1{\vee}a_2$):}\
Straightforwardly, by induction.
\item\noindent{\hskip-12 pt($a\equiv a_1{\imp}a_2)$:}\
\[\begin{array}{rclr}
\multicolumn{3}{@{}l}{\modelsSLE{\Heap}{\Perm}{\Env}{a}}\\
 &\Leftrightarrow& \forall \Perm_1, \Heap_1.(\;
   \Perm_1 {\perp} \Perm\ \land\    \Heap_1 \agrees{\rds{\Perm}\cup\overline{\rds{\Perm_1}}} \Heap\  \land\\
&&\multicolumn{2}{l}{\Heap_1, \Perm * \Perm_1, \Env \modelsISL a_1
\ \Rightarrow\ \Heap_1, \Perm * \Perm_1, \Env \modelsISL a_2)}\\
&&& \textit{(by Lemma \ref{lemma:newlemmathree} (1))}\\
&\Leftrightarrow& \forall \Perm_1, \Heap_1.(\;
   \Perm_1 {\perp} \Perm\ \land\    \Heap_1 \agrees{\rds{\Perm}\cup\overline{\rds{\Perm_1}}} \Heap\  \land\\
&&\multicolumn{2}{l}{(\restrict{\Heap_1}{(\Perm{*}\Perm_1)}), \Env \modelsSL a_1\
\Rightarrow\ (\restrict{\Heap_1}{(\Perm{*}\Perm_1)}), \Env \modelsSL a_2)}\\
&&& \textit{(by induction, twice)}\\
&\Leftrightarrow& \forall \Perm_1, \Heap_1.(\;
   \Perm_1 {\perp} \Perm\ \land\    \Heap_1 \agrees{\rds{\Perm}\cup\overline{\rds{\Perm_1}}} \Heap\  \land\\
&&\multicolumn{2}{l}{((\restrict{\Heap_1}{\Perm})*(\restrict{\Heap_1}{\Perm_1})), \Env \modelsSL a_1\
\Rightarrow\ ((\restrict{\Heap_1}{\Perm})*(\restrict{\Heap_1}{\Perm_1})), \Env \modelsSL a_2)}\\
&&& \textit{(by defn.)}\\
&\Leftrightarrow& \forall \Perm_1, \Heap_1.(\;
   \Perm_1 {\perp} \Perm\ \land\    \Heap_1 \agrees{\rds{\Perm}\cup\overline{\rds{\Perm_1}}} \Heap\  \land\\
&&\multicolumn{2}{l}{((\restrict{\Heap}{\Perm})*(\restrict{\Heap_1}{\Perm_1})), \Env \modelsSL a_1\
\Rightarrow\ ((\restrict{\Heap}{\Perm})*(\restrict{\Heap_1}{\Perm_1})), \Env \modelsSL a_2)}\\
&&& \textit{(since }\Heap_1\agrees{\Perm}\Heap\textit{)}\\
&\Leftrightarrow&\multicolumn{2}{l}{ \forall \Perm_1, \PHeap_1.(\;
   \Perm_1 {\perp} \dom(\restrict{\Heap}{\Perm})\ \land\ \dom(\PHeap_1) = \rds{\Perm_1} \ \land}\\
&&\multicolumn{2}{l}{(\forall (`i,f)\in(\dom(\PHeap_1){\cap}\dom(\restrict{\Heap}{\Perm})). \PHeap_1[`i,f] = (\restrict{\Heap}{\Perm})[`i,f])\ \land}\\
&&\multicolumn{2}{l}{((\restrict{\Heap}{\Perm})*\PHeap_1), \Env \modelsSL a_1\
\Rightarrow\ ((\restrict{\Heap}{\Perm})*\PHeap_1), \Env \modelsSL a_2))}\\
&&& \textit{(by defn. of }(\restrict{\Heap_1}{\Perm_1})\textit{)}\\
&\Leftrightarrow& \forall \PHeap_1.(\;
   \PHeap_1 {\perp} (\restrict{\Heap}{\Perm})\ \land\\
&&\multicolumn{2}{l}{((\restrict{\Heap}{\Perm})*\PHeap_1), \Env \modelsSL a_1\
\Rightarrow\ ((\restrict{\Heap}{\Perm})*\PHeap_1), \Env \modelsSL a_2)}\\
&&& \textit{(by defn. of }\PHeap_1 {\perp} (\restrict{\Heap}{\Perm})\textit{)}\\
&\Leftrightarrow& (\restrict{\Heap}{\Perm}), \Env \modelsSL a_1\imp a_2
\end{array}\]
\item\noindent{\hskip-12 pt($a\equiv a_1{\wand}a_2)$:}\
Analogous to previous case, using  Lemma \ref{lemma:newlemmathree} (2) instead of  Lemma \ref{lemma:newlemmathree} (1).

\item\noindent{\hskip-12 pt($a\equiv \exists x.\;a')$:}\ We have:
\[\begin{array}{l}
\Heap,\Perm,\Env \modelsISL \exists x.\;a' \\
\Leftrightarrow\
\exists v.\; \Heap,\Perm,\Env[x\mapsto v] \modelsISL a' \\
\Leftrightarrow\
\exists v.\; \restrict{\Heap}{\Perm},\Env[x\mapsto v] \modelsSL a'\ \textit{by inductive hypothesis} \\
\Leftrightarrow\
\restrict{\Heap}{\Perm},\Env \modelsSL \exists x.\; a'
\end{array}\]

\end{desCription}
}

This result demonstrates that our total heap semantics correctly models the standard semantics of separation logic assertions.

\begin{corollary}\label{lemma:seplogicframing}
All separation logic assertions $a$ (Defn~\ref{defn:seplogicsyntax}) are self-framing.
\end{corollary}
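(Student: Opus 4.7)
The plan is to derive this corollary as a direct consequence of Theorem~\ref{thm:correctness}. Concretely, what needs to be shown is that for every separation logic assertion $a$, the set $\semEl{a}$ is closed under interference on the locations outside $\rds{\Perm}$, i.e., that whenever $(\Heap,\Perm,\Env)\modelsISL a$ and $\Heap'\in\interfere{\Heap}{\Perm}$, we also have $(\Heap',\Perm,\Env)\modelsISL a$.

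The single observation that drives the argument is that $\restrict{\Heap}{\Perm}$ depends only on the values of $\Heap$ at locations in $\rds{\Perm}$. So if $\Heap'\agrees{\Perm}\Heap$ (which is exactly the condition $\Heap'\in\interfere{\Heap}{\Perm}$), then $\restrict{\Heap}{\Perm}$ and $\restrict{\Heap'}{\Perm}$ are literally the same partial fractional heap: they have the common domain $\rds{\Perm}$, the same permissions by construction, and agree on values by the hypothesis on $\Heap,\Heap'$.

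Given this, the argument is essentially three lines: starting from $(\Heap,\Perm,\Env)\modelsISL a$, Theorem~\ref{thm:correctness} gives $(\restrict{\Heap}{\Perm}),\Env\modelsSL a$; substituting the equal partial heap yields $(\restrict{\Heap'}{\Perm}),\Env\modelsSL a$; and applying Theorem~\ref{thm:correctness} in the opposite direction delivers $(\Heap',\Perm,\Env)\modelsISL a$, as required. Hence $\stable{\semEl{a}}$, i.e., $a$ is self-framing.

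There is no real obstacle here — the work has already been done inside Theorem~\ref{thm:correctness}, which packaged up precisely the fact that the truth of any \SL-assertion under our total heap semantics depends only on the restricted partial heap. The corollary is simply extracting the stability consequence of that equivalence.
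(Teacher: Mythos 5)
Your proof is correct and matches the paper's intended argument: the corollary is stated as an immediate consequence of Theorem~\ref{thm:correctness}, precisely because $\restrict{\Heap}{\Perm}$ is unchanged when $\Heap$ is replaced by any $\Heap'\agrees{\Perm}\Heap$, so applying the theorem in both directions yields stability of $\semEl{a}$. Nothing is missing.
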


\begin{corollary}
All separation logic assertions $a$ are intuitionistic.
\end{corollary}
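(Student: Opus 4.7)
The plan is to derive this corollary directly from the two preceding results already proved in the excerpt: self-framing of separation logic assertions (Corollary~\ref{lemma:seplogicframing}) and weakening-closure of all \SLE{} formulas (Proposition~\ref{prop:weakening}). The idea is that ``intuitionistic'' in this total-heaps setting combines two independent flexibilities (adding permissions and havocing previously-unreadable locations), and each of those flexibilities is already dispatched by one of those two results.

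Concretely, I would fix an \SL{}-assertion $a$, assume $\Heap,\Perm,\Env \modelsISL a$, and pick an arbitrary state $(\Heap',\Perm_R,\Env)\in\globalExts{\Heap,\Perm,\Env}$. Unfolding Definition~\ref{defn:extensions}, this means $\Perm_R = \Perm * \Perm''$ for some $\Perm''\perp\Perm$, and $\Heap' \in \interfere{\Heap}{\Perm}$, i.e.\ $\Heap' \agrees{\Perm} \Heap$. The proof then proceeds in two steps. First, since $\Heap'$ only differs from $\Heap$ outside $\rds{\Perm}$, and since by Corollary~\ref{lemma:seplogicframing} the assertion $a$ is self-framing (hence $\semEl{a}$ is closed under interference on locations outside its permission footprint), we obtain $\Heap',\Perm,\Env \modelsISL a$. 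Second, since $\Perm \subseteq \Perm * \Perm'' = \Perm_R$, Proposition~\ref{prop:weakening} immediately gives $\Heap',\Perm_R,\Env \modelsISL a$, which is exactly what is required.

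I do not expect any real obstacle: the work of splitting the ``intuitionistic'' property into its havoc component and its permission-extension component has effectively already been done by the two prior results, so the proof is essentially a two-line composition. The one piece of care needed is in bookkeeping the roles of $\Perm$, $\Perm''$, and $\Perm_R$ when unfolding $\globalExts{\Heap,\Perm,\Env}$ (since the quantifier in the definition of intuitionistic ranges over the combined mask $\Perm_R$, not the extra mask $\Perm''$), so that the invocation of self-framing uses the original mask $\Perm$ and only the subsequent weakening step moves to $\Perm_R$. No new auxiliary lemma should be required.
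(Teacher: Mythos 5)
Your proposal is correct: unfolding $\globalExts{\Heap,\Perm,\Env}$, applying Corollary~\ref{lemma:seplogicframing} (self-framing) to handle the havoc of locations outside $\rds{\Perm}$, and then Proposition~\ref{prop:weakening} to pass from $\Perm$ to $\Perm*\Perm''$ is exactly the two-step composition the paper intends, which is why it states the result as an immediate corollary without a separate proof. No gap; the bookkeeping you flag (invoking self-framing at the original mask $\Perm$ before weakening to the combined mask) is the only point of care, and you handle it correctly.
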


\subsection{Separation Logic Laws}
Because our assertion language is more general than that of separation logic, not all properties of the separation logic connectives transfer across to the full generality of \SLE. For example, in separation logic, the assertions $a\wand(b\wand c)$ and $(a*b)\wand c$ are (always) equivalent. This is not quite the case in \SLE.
We can show how various laws which hold for separation logic transfer (in some cases partially) to our more general setting of \SLE. Firstly, we need a technical lemma which shows how to break down minimal permission extensions over (separating and logical) conjunctions:
\addprop{lemma}{Decomposing Minimal Permission Extensions over Conjunctions}{lemma:newlemmafour}
{\hfill\em\begin{enumerate}[\em(1)]
\item\label{lemma:newlemmafour:separating} If \extend{\Heap}{\emptyset}{\Env}{\Perm'}{A_1*A_2} then $\exists \Perm_1,\Perm_2$ such that $\Perm' = \Perm_1*\Perm_2$ and $\extend{\Heap}{\emptyset}{\Env}{\Perm_1}{A_1}$ and $\extend{\Heap}{\emptyset}{\Env}{\Perm_2}{A_2}$.
\item\label{lemma:newlemmafour:logical} If \extend{\Heap}{\Perm}{\Env}{\Perm'}{A_1 \wedge A_2} then $\exists \Perm_1,\Perm_2$ such that $\Perm' = \Perm_1*\Perm_2$ and $\extend{\Heap}{\Perm}{\Env}{\Perm_1}{A_1}$ and $\extend{\Heap}{\Perm*\Perm_1}{\Env}{\Perm_2}{A_2}$.
\end{enumerate}
}{%
}{
\hfill
\begin{enumerate}[(1)]
\item
We prove an equivalent statement:
\[
\begin{array}{l}
\extend{\Heap}{\emptyset}{\Env}{\Perm}{A_1 * A_2}
\ \land\
\Perm_3 * \Perm_4 = \Perm
\ \land\
\Heap,\Perm_3, \Env \modelsISL A_1
\ \land\
\Heap,\Perm_4, \Env \modelsISL A_2
\\
\qquad\Rightarrow
\exists \Perm_1,\Perm_2.\ \Perm_1 * \Perm_2 = \Perm
\ \land\
\extend{\Heap}{\emptyset}{\Env}{\Perm_1}{A_1}
\ \land\
\extend{\Heap}{\emptyset}{\Env}{\Perm_2}{A_2}
\end{array}
\]
by complete (strong) induction on $|\rds{\Perm_3} \cap \rds{\Perm_4}|$. In the proof, we use the shorthand $\Perm[(\iota,f) \mapsto \pi]$ to denote the permission mask that returns $\pi$ for $(\iota,f)$ and behaves like $\Perm$ for all other entries, and also the shorthand $\Perm \setminus (\iota,f)$ for $\Perm[(\iota,f)\mapsto \none]$].

We now consider two cases:
\begin{desCription}
\item\noindent{\hskip-12 pt($\exists (\iota,f) \in \rds{\Perm_3}.\ \Heap,(\Perm_3 \setminus (\iota,f)), \Env \modelsISL A_1$)}\mbox{}
\\By Proposition~\ref{prop:weakening}, we know $\Heap,(\Perm_4 [(\iota,f) \mapsto \Perm[\iota,f]]), \Env \modelsISL A_2$.
We know $\Perm_4[\iota,f] > \none$, otherwise $\Perm_3 * \Perm_4$ was not minimal to start with.
Therefore $|\rds{\Perm_3 \setminus (\iota,f)} \cap \rds{\Perm_4 [(\iota,f) \mapsto \Perm[\iota,f]]}| < |\rds{\Perm_3} \cap \rds{\Perm_4}|$.
By construction, we know $(\Perm_3 \setminus (\iota,f)) * (\Perm_4 [(\iota,f) \mapsto \Perm[\iota,f]]) = \Perm_3 * \Perm_4$, so this case holds by induction
choosing $(\Perm_3 \setminus (\iota,f))$ for $\Perm_3$ and $\Perm_4 [(\iota,f) \mapsto \Perm[\iota,f]]$ for $\Perm_4$ in inductive hypothesis.

\item\noindent{\hskip-12 pt($\forall (\iota,f) \in \rds{\Perm_3}.\ \Heap,(\Perm_3 \setminus (\iota,f)), \Env \notmodelsISL A_1$)}\mbox{}\\
Therefore, $\extend{\Heap}{\emptyset}{\Env}{\Perm_3}{A_1}$. Consider two sub-cases:
\begin{desCription}
\item\noindent{\hskip-12 pt($\exists (\iota,f) \in \rds{\Perm_4}.\ \Heap,(\Perm_4 \setminus (\iota,f)), \Env \modelsISL A_2$)}\mbox{}\\
By Proposition~\ref{prop:weakening}, we know $\Heap,(\Perm_3 [(\iota,f) \mapsto \Perm[\iota,f]]), \Env \modelsISL A_1$.
By construction, we know $\Perm_3[\iota,f] > \none$, otherwise $\Perm_3 * \Perm_4$ was not minimal to start with.
Therefore $|\rds{\Perm_4 \setminus (\iota,f)} \cap \rds{\Perm_3 [(\iota,f) \mapsto \Perm[\iota,f]]}| < |\rds{\Perm_3} \cap \rds{\Perm_4}|$.
We know $(\Perm_4 \setminus (\iota,f)) * (\Perm_3 [(\iota,f) \mapsto \Perm[\iota,f]]) = \Perm_3 * \Perm_4$, so this case holds by induction
choosing $(\Perm_4 \setminus (\iota,f))$ for $\Perm_4$ and $\Perm_3 [(\iota,f) \mapsto \Perm[\iota,f]]$ for $\Perm_3$ in inductive hypothesis.

\item\noindent{\hskip-12 pt($\forall (\iota,f) \in \rds{\Perm_4}.\ \Heap,(\Perm_4 \setminus (\iota,f)), \Env \notmodelsISL A_2$)}\mbox{}\\
Then, $\extend{\Heap}{\emptyset}{\Env}{\Perm_4}{A_2}$.  Hence, we have solution choosing $\Perm_3 = \Perm_1$ and $\Perm_4 = \Perm_2$.
\end{desCription}
\end{desCription}

\item
We can assume $\extend{\Heap}{\Perm}{\Env}{\Perm'}{A_1 \land A_2}$
and hence $\Heap, \Perm * \Perm', \Env \models A_1$ and
$\Heap, \Perm * \Perm', \Env \models A_2$.
By Lemma~\ref{lemma:newlemmaone},
	 we know there exists $\Perm_1'$ such that $\Perm_1' \subseteq \Perm'$
	   and $\extend{\Heap}{ \Perm}{\Env}{\Perm_1'}{A_1}$.
Define
\[
  \Perm_1 = \lambda (\iota,f).\ \text{if } \Perm'_1[\iota,f]=\none\text{ then } \none\text{ else } \Perm'[\iota,f]\text{.}
\]
Note that $\Perm_1 \subseteq \Perm'$.
By Lemma~\ref{lemma:fuzzyedge}, we know $\extend{\Heap}{ \Perm}{\Env}{\Perm_1}{A_1}$.

We know $\Heap, (\Perm * \Perm_1) * (\Perm' - \Perm_1), \Env \modelsISL A_2$, and by Lemma~\ref{lemma:newlemmaone}
we know there exists $\Perm'_2 \subseteq (\Perm' - \Perm_1)$ such that $\extend{\Heap}{\Perm*\Perm_1}{\Env}{\Perm'_2}{A_2}$.
Define
\[
  \Perm_2 = \lambda (\iota,f).\ \text{if } \Perm'_2[\iota,f]=\none\text{ then } \none\text{ else } \Perm'[\iota,f]\text{.}
\]
Note that $\Perm_2 \subseteq \Perm' - \Perm_1$, and thus $\Perm_1 * \Perm_2 \subseteq \Perm'$.
By Lemma~\ref{lemma:fuzzyedge}, we deduce that $\extend{\Heap}{ \Perm}{\Env}{\Perm_2}{A_2}$.

By construction of $\Perm_1$ and $\Perm_2$, either $\Perm_1 * \Perm_2 = \Perm'$ or $\rds{\Perm_1 * \Perm_2} \subset \rds{\Perm'}$.
In the first case we are done. In the second case, we seek a contradiction.
We know $\Heap,\Perm*\Perm_1*\Perm_2,\Env \modelsISL A_2$,
and by Prop~\ref{prop:weakening}, we know $\Heap,\Perm*\Perm_1*\Perm_2,\Env \modelsISL A_1$,
hence $\Heap,\Perm*\Perm_1*\Perm_2,\Env \modelsISL A_1 \land A_2$.
As we know $\Perm_1 * \Perm_2 \subseteq \Perm'$,
but that contradicts the initial assumption of $\Perm'$ being minimal.
\end{enumerate}
}

\noindent Certain \as{technical properties which follow} do not hold for general \as{formulas}, but only for \as{those} that do not behave disjunctively.  Following, O'Hearn \emph{et al.}~\cite{ohearn04}, we call these \as{formulas} \emph{supported}.
A formula such as $\acc(x.f,1) \lor \acc(y.f,1)$ is not supported, while $(b=1 \imp \acc(x.f,1)) * (b\neq 1 \imp \acc(y.f,1)$ is supported.
\begin{definition}[Supported \as{Formulas}]
A formula $A$ is \emph{supported} iff for all $\Heap$, $\Env$, $\Perm_1$ and $\Perm_2$, if $\Heap,\Perm_1,\Env \modelsISL A $ and $\Heap,\Perm_2,\Env \modelsISL A$
, then $\Heap,\Perm_1 \sqcap \Perm_2,\Env \modelsISL A$.
\end{definition}

Supported assertions allow minimal permission extensions to be combined for both $*$ and $\land$.
\addprop{lemma}{Composing Minimal Permission Extensions over Supported
  Conjunctions\vspace{-12 pt}}{lemma:newlemmasix}
{\hfill\em\begin{enumerate}[\em(1)]
\item\label{lemma:newlemmasix:separating} If \extend{\Heap}{\emptyset}{\Env}{\Perm_1}{A_1} and \extend{\Heap}{\emptyset}{\Env}{\Perm_2}{A_2} and $A_1$ and $A_2$ are supported, then $\extend{\Heap}{\emptyset}{\Env}{\Perm_1*\Perm_2}{A_1*A_2}$.
\item\label{lemma:newlemmasix:logical} If \extend{\Heap}{\Perm}{\Env}{\Perm_1}{A_1} and \extend{\Heap}{\Perm*\Perm_1}{\Env}{\Perm_2}{A_2} and $A_1$ and $A_2$ are supported, then $\extend{\Heap}{\Perm}{\Env}{\Perm_1*\Perm_2}{A_1\wedge A_2}$.
\end{enumerate}
}{%
}
{\hfill\begin{enumerate}[(1)]
\item
Proof by contradiction.  We know $\Heap, \Perm_1 * \Perm_2, \Env \modelsISL A_1 * A_2$ holds, therefore we assume that there exists $\Perm' \subseteq \Perm_1 * \Perm_2$ such that $\rds{\Perm'} \subset \rds{\Perm_1 * \Perm_2}$  and $\Heap, \Perm', \Env \modelsISL A_1 * A_2$.
Therefore, there exist $\Perm_3$ and $\Perm_4$ such that $\Perm_3 * \Perm_4 = \Perm'$ and $\Heap,\Perm_3,\Env \modelsISL A_1$ and $\Heap, \Perm_4, \Env \modelsISL A_2$.  From $\rds{\Perm'} \subset \rds{\Perm_1 * \Perm_2}$, we know $(\iota,f) \in \rds{\Perm_1 * \Perm_2}$ and $(\iota,f) \notin \rds{\Perm'}$. W.l.o.g assume $(\iota,f) \in \rds{\Perm_1}$.  As $A_1$ is supported, we know $\Heap,\Perm_1 \sqcap \Perm_3,\Env \modelsISL A_1$. Since $(\Perm_1 \sqcap \Perm_3) \subseteq \Perm_3 \subseteq \Perm'$ we have $(\iota,f) \notin \rds{\Perm_1 \sqcap \Perm_3}$. But this contradicts $\extend{\Heap}{\Perm}{\Env}{\Perm_1}{A_1}$, since $(\iota,f) \in \rds{\Perm_1}$.

\item
By assumptions and Prop~\ref{prop:weakening}, we have $\Heap,\Perm * \Perm_1 * \Perm_2, \Env \modelsISL A_1 \land A_2$.
We show $\extend{\Heap}{\Perm}{\Env}{\Perm_1*\Perm_2}{A_1\wedge A_2}$ by contradiction.  Assume there exists $\Perm_3 \subseteq (\Perm_1 * \Perm_2)$ such that $\rds{\Perm_3} \subset \rds{\Perm_1 * \Perm_2}$ and
$\Heap, \Perm * \Perm_3, \Env \modelsISL A_1 \land A_2$. Then there exists $(\iota,f) \in \rds{\Perm_1 * \Perm_2}$ such that $(\iota,f) \notin \rds{\Perm_3}$. Case split:
\begin{desCription}
\item\noindent{\hskip-12 pt($(\iota,f) \in \rds{\Perm_1}$)}\
Note that $(\Perm * \Perm_3) \sqcap (\Perm * \Perm_1) = \Perm * (\Perm_1 \sqcap \Perm_3)$, thus as $A_1$ is supported we have $\Heap, \Perm * (\Perm_1 \sqcap \Perm_3), \Env \modelsISL A_1$.
 but this contradicts
$\extend{\Heap}{\Perm}{\Env}{\Perm_1}{A_1}$

\item\noindent{\hskip-12 pt($(\iota,f) \in \rds{\Perm_2}$)}\
Note that $(\Perm * \Perm_3) \sqcap (\Perm * \Perm_1 * \Perm_2) =
\Perm * (\Perm_3 \sqcap (\Perm_1 * \Perm_2))$. As $A_2$ is supported, we know $\Heap, \Perm * (\Perm_3 \sqcap (\Perm_1 * \Perm_2)),\Env \modelsISL A_2$, but this contradicts $\extend{\Heap}{\Perm*\Perm_1}{\Env}{\Perm_2}{A_2}$.
\end{desCription}

\end{enumerate}
}

\noindent We can now show which of the usual separation logic laws
carry over to our more general logic, and under which conditions:
\addprop{proposition}{}{prop:stdproperties} {\em For all \SLE{}
  assertions $A_1$, $A_2$, $A_3$:
\begin{enumerate}[\em(1)]
\item $ A_1*(A_1\wand A_2) \entails  A_2$
\item $ A_1\land(A_1\imp A_2) \entails  A_2$
\item \label{prop:bullet:currying}\label{prop:bullet:uncurrying}

\begin{enumerate}[\em(a)]
\item

$\deframed{A_1} \cap \semEl{A_1\wand(A_2\wand A_3)}  \subseteq
\semEl{(A_1*A_2)\wand A_3}$

\item if $A_1$ and $A_2$ are supported, then:\\
$\deframed{A_1} \cap \semEl{(A_1 * A_2)\wand A_3}  \subseteq
\semEl{(A_1\wand (A_2\wand A_3)}$

\item if both $A_1*A_2$ and $A_3$ are self-framing, then:\\
$\deframed{A_1} \cap \semEl{(A_1 * A_2)\wand A_3}  \subseteq
\semEl{(A_1\wand (A_2\wand A_3)}$

\end{enumerate}

\item
\begin{enumerate}[\em(a)]
\item
$\eframed{A_1} \cap \semEl{A_1\imp(A_2\imp A_3)} \subseteq \semEl{(A_1\land A_2)\imp A_3}$

\item
if $A_1$ and $A_2$ are supported, then:\\
$\eframed{A_1} \cap \semEl{(A_1\wedge A_2)\imp A_3} \subseteq \semEl{A_1\imp(A_2\imp A_3)}$

\item if both $A_1\wedge A_2$ and $A_3$ are self-framing,  then:\\
$\eframed{A_1} \cap \semEl{(A_1\land A_2)\imp A_3} \subseteq \semEl{A_1\imp(A_2\imp A_3)}$
\end{enumerate}

\item If $A_1 \entails (A_2\wand A_3)$ then $(A_1*A_2)\entails A_3$
\item If $A_1$ is self-framing and $(A_1*A_2)\entails A_3$ then $A_1 \entails (A_2\wand A_3)$
\end{enumerate}

}{%
} {
\mbox{}
\begin{enumerate}[(1)]
\item Assume $\modelsISL{\Heap}{\Perm}{\Env}{A_1*(A_1\wand A_2)}$. We seek to prove that $\modelsISL{\Heap}{\Perm}{\Env}{A_2}$.
From our assumption, there exist $\Perm_1, \Perm_2$ such that $\Perm_1*\Perm_2 = \Perm$ and $\modelsISL{\Heap}{\Perm_1}{\Env}{A_1}$ and $\modelsISL{\Heap}{\Perm_2}{\Env}{A_1\wand A_2}$. From the latter, we have that
\[
\begin{array}{r}
\forall \Perm_3\perp\Perm_2, \forall \Heap_3\agrees{\rds{\Perm_2}\cup\overline{\rds{\Perm_3}}}\Heap. \extend{\Heap_3}{\emptyset}{\Env}{\Perm_3}{A_1} \qquad\mbox{}\\
\Rightarrow \modelsISL{\Heap_3}{\Perm_2*\Perm_3}{\Env}{A_2}\text{.}
\end{array}
\]
From $\modelsISL{\Heap}{\Perm_1}{\Env}{A_1}$, by Lemma \ref{lemma:newlemmaone}, we know that there exists $\Perm_3\subseteq\Perm_1$ such that $\extend{\Heap}{\emptyset}{\Env}{\Perm_3}{A_1}$. Combining these facts, we obtain that $\modelsISL{\Heap}{\Perm_2*\Perm_3}{\Env}{A_2}$ holds. By Proposition \ref{prop:weakening}, we obtain $\modelsISL{\Heap}{\Perm_2*\Perm_1}{\Env}{A_2}$ as required.

\item Assume $\modelsISL{\Heap}{\Perm}{\Env}{A_1\wedge(A_1\imp A_2)}$. We seek to prove that $\modelsISL{\Heap}{\Perm}{\Env}{A_2}$.
From our assumption, we obtain both $\modelsISL{\Heap}{\Perm}{\Env}{A_1}$ and $\modelsISL{\Heap}{\Perm}{\Env}{A_1\imp A_2}$. From the latter, we have \[\forall \Perm_1\perp\Perm, \forall \Heap_1\agrees{\rds{\Perm}\cup\overline{\rds{\Perm_1}}}\Heap. \extend{\Heap_1}{\Perm}{\Env}{\Perm_1}{A_1} \Rightarrow \modelsISL{\Heap_1}{\Perm*\Perm_1}{\Env}{A_2}\] Taking $\Heap_1 = \Heap$ and $\Perm_1 = \emptyset$ in the above (and noting that from $\modelsISL{\Heap}{\Perm}{\Env}{A_1}$ we can easily obtain $\extend{\Heap}{\Perm}{\Env}{\emptyset}{A_1}$), we can obtain $\modelsISL{\Heap}{\Perm}{\Env}{A_2}$ as required.

\item In the following, we assume (as in the statement of the Lemma) that $(\Heap, \Perm, \Env) \in \deframed{A_1}$
\begin{enumerate}[(a)]
\item We assume $\Heap, \Perm,\Env \modelsISL A_1\wand(A_2\wand A_3)$ and seek to prove $\Heap, \Perm,\Env \modelsISL (A_1*A_2)\wand A_3$.  Thus, we can assume
\[
\begin{array}{l}
(\Heap_3,\Perm_3,\Env) \in \localDisjExts{\Heap,\Perm,\sigma}\\
\extend{\Heap_3}{\emptyset}{\Env}{\Perm_3}{A_1 * A_2}
\end{array}
\]
and must prove
\[
\Heap_3,\Perm*\Perm_3,\Env \modelsISL A_3
\]
By Lemma \ref{lemma:newlemmafour}~(\ref{lemma:newlemmafour:separating}), there exist $\Perm_1\perp\Perm_2$ such that $\Perm_3=\Perm_1*\Perm_2$ and both
\[
\begin{array}{l}
\extend{\Heap_3}{\emptyset}{\Env}{\Perm_1}{A_1}\\
\extend{\Heap_3}{\emptyset}{\Env}{\Perm_2}{A_2}
\end{array}
\]
By the definition of $\localDisjExts{\Heap,\Perm,\sigma}$, we can show
\[
\begin{array}{l}
(\Heap', \Perm_1, \Env) \in \localDisjExts{\Heap,\Perm,\sigma}\\
(\Heap_3, \Perm_2, \Env) \in \localDisjExts{\Heap',\Perm * \Perm_1,\sigma}\\
\end{array}
\]
where $\Heap' = (\Perm_1 ? \Heap_3 : \Heap)$. By assumptions, we know $\Heap_3 \agrees{\Perm} \Heap$, thus $\Heap' \agrees{\Perm} \Heap_3$.  By construction, $\Heap' \agrees{\Perm_1} \Heap_3$, and thus $\Heap' \agrees{\Perm * \Perm_1} \Heap_3$.

By $\deframed{\Heap,\Perm,\Env}$ assumption, and Lemma~\ref{lemma:newlemmatwo}~(2), we get
\[
\extend{\Heap'}{\emptyset}{\Env}{\Perm_1}{A_1}
\]
Now using  $\Heap, \Perm,\Env \modelsISL A_1\wand(A_2\wand A_3)$, we get
\[
\Heap', \Perm*\Perm_1,\Env \modelsISL A_2\wand A_3
\]
and thus
\[
\Heap_3, \Perm*\Perm_1*\Perm_2,\Env \modelsISL A_3
\]
as required.

\item and (c)\ \ We prove these two cases together, since they are almost identical. In the proof, we case split on which extra assumption to use: either $A_1$ and $A_2$ are supported (for part (b)) or both $A_1*A_2$ and $A_3$ are self-framing (for part (c)).

We assume $\Heap, \Perm,\Env \modelsISL (A_1*A_2)\wand A_3$ and seek to prove $\Heap, \Perm,\Env \modelsISL A_1\wand(A_2\wand A_3)$.  Thus, we can assume
\[
\begin{array}{l}
(\Heap_1,\Perm_1,\Env) \in \localDisjExts{\Heap,\Perm,\sigma}\\
\extend{\Heap_1}{\emptyset}{\Env}{\Perm_1}{A_1}\\
(\Heap_2,\Perm_2,\Env) \in \localDisjExts{\Heap_1,\Perm*\Perm_1,\sigma}\\
\extend{\Heap_2}{\emptyset}{\Env}{\Perm_2}{A_2}\\
\end{array}
\]
and must prove
\[
\Heap_2,\Perm*\Perm_1*\Perm_2,\Env \modelsISL A_3
\]
By definition of $\localDisjExts{\Heap,\Perm,\sigma}$ we can show
\[
(\Heap_2,\Perm_1*\Perm_2,\Env) \in \localDisjExts{\Heap,\Perm,\sigma}\\
\]
By Lemma~\ref{lem:extfrmuses}, we know
\[
(\Heap_1,\Perm,\Env) \in \deframed{A_1}
\]
and thus, by Lemma~\ref{lemma:newlemmatwo}~(2) we know
\[
  \extend{\Heap_2}{\emptyset}{\Env}{\Perm_1}{A_1}\\
\]
Now, we case-split on whether we are proving part (b) or (c):
\begin{desCription}
\item[part (b)] Then we can assume that $A_1$ and $A_2$ are supported. By Lemma \ref{lemma:newlemmasix}~(\ref{lemma:newlemmasix:separating}), we can obtain $\extend{\Heap_2}{\emptyset}{\Env}{\Perm_1*\Perm_2}{A_1 * A_2}$. Thus, using $\Heap, \Perm,\Env \modelsISL (A_1*A_2)\wand A_3$ we get
\[
\Heap_2,\Perm*\Perm_1*\Perm_2,\Env \modelsISL A_3
\]
as required.

\item[part (c)]
Then, we can assume that both $A_1*A_2$ and $A_3$ are self-framing.
 By Lemma \ref{lemma:newlemmaone}, there exists $\Perm_3\subseteq(\Perm_1*\Perm_2)$ such that $\extend{\Heap_2}{\emptyset}{\Env}{\Perm_3}{A_1*A_2}$. Define $\Heap_3 = \condheap{\Perm_3}{\Heap_2}{\Heap}$. Then we have $\Heap_3\agrees{\Perm*\Perm_3}\Heap_2$. Since $A_1*A_2$ is self-framing, by Lemma \ref{lemma:newlemmatwo}~(3), we have $\extend{\Heap_3}{\emptyset}{\Env}{\Perm_3}{A_1*A_2}$.
We need to show that
\[
(\Heap_3,\Perm_3,\Env) \in \localDisjExts{\Heap,\Perm,\Env}
\]
which follows as $\Heap_3 \agrees{\Perm} \Heap$ (since $\Heap \agrees{\Perm} \Heap_1 \agrees{\Perm} \Heap_2 \agrees{\Perm} \Heap_3$). Thus, by assumption, we get
$\modelsISL{\Heap_3}{\Perm*\Perm_3}{\Env}{A_3}$. Since $A_3$ is self-framing, and since $\Heap_2\agrees{\Perm*\Perm_3}\Heap_3$, we obtain $\modelsISL{\Heap_2}{\Perm*\Perm_3}{\Env}{A_3}$. By Proposition \ref{prop:weakening}, we have $\modelsISL{\Heap_2}{\Perm*\Perm_1*\Perm_2}{\Env}{A_3}$ as required.

\end{desCription}

\end{enumerate}

\item  In the following, we assume (as in the statement of the Lemma) that $(\Heap, \Perm, \Env) \in \eframed{A_1}$

\begin{enumerate}[(a)]
\item
 We assume $\Heap, \Perm,\Env \modelsISL A_1\imp(A_2\imp A_3)$ and seek to prove that  $\Heap, \Perm,\Env \modelsISL (A_1\wedge A_2)\imp A_3$.  Thus, we can assume
\[
\begin{array}{l}
(\Heap_3,\Perm* \Perm_3,\Env) \in \localExts{\Heap,\Perm,\sigma}\\
\extend{\Heap_3}{\Perm}{\Env}{\Perm_3}{A_1 \wedge A_2}
\end{array}
\]
and must prove
\[
\Heap_3,\Perm * \Perm_3,\Env \modelsISL A_3
\]
By Lemma \ref{lemma:newlemmafour}~(\ref{lemma:newlemmafour:logical}), there exist $\Perm_1\perp\Perm_2$ such that $\Perm_3=\Perm_1*\Perm_2$ and both
\[
\begin{array}{l}
\extend{\Heap_3}{\Perm}{\Env}{\Perm_1}{A_1}\\
\extend{\Heap_3}{\Perm*\Perm_1}{\Env}{\Perm_2}{A_2}
\end{array}
\]
By the definition of $\localExts{\Heap,\Perm,\sigma}$, we can show
\[
\begin{array}{l}
(\Heap', \Perm* \Perm_1, \Env) \in \localExts{\Heap,\Perm,\sigma}\\
(\Heap_3, \Perm * \Perm_1 * \Perm_2, \Env) \in \localExts{\Heap',\Perm * \Perm_1,\sigma}\\
\end{array}
\]
where $\Heap' = (\Perm_1 ? \Heap_3 : \Heap)$. By assumptions, we know $\Heap_3 \agrees{\Perm} \Heap$, thus $\Heap' \agrees{\Perm} \Heap_3$.  By construction, $\Heap' \agrees{\Perm_1} \Heap_3$, and thus $\Heap_3 \agrees{\Perm * \Perm_1} \Heap'$.

By $\eframed{\Heap,\Perm,\Env}$ assumption, and Lemma~\ref{lemma:newlemmatwo}~(1), we get
\[
\extend{\Heap'}{\Perm}{\Env}{\Perm_1}{A_1}
\]
Now using  $\Heap, \Perm,\Env \modelsISL A_1\imp(A_2\imp A_3)$, we get
\[
\Heap', \Perm*\Perm_1,\Env \modelsISL A_2\imp A_3
\]
and thus
\[
\Heap_3, \Perm*\Perm_1*\Perm_2,\Env \modelsISL A_3
\]
as required.

\item and (c)\ \
 We prove these two cases together, since they are almost identical. In the proof, we case split on which extra assumption to use: either $A_1$ and $A_2$ are supported (for part (b)) or both $A_1\wedge A_2$ and $A_3$ are self-framing (for part (c)).

We assume $\Heap, \Perm,\Env \modelsISL (A_1\wedge A_2)\imp A_3$ and seek to prove $\Heap, \Perm,\Env \modelsISL A_1\imp(A_2\imp A_3)$.  Thus, we can assume
\[
\begin{array}{l}
(\Heap_1,\Perm_1 * \Perm,\Env) \in \localExts{\Heap,\Perm,\sigma}\\
\extend{\Heap_1}{\Perm}{\Env}{\Perm_1}{A_1}\\
(\Heap_2,\Perm*\Perm_1 * \Perm_2,\Env) \in \localExts{\Heap_1,\Perm*\Perm_1*\Perm_2,\sigma}\\
\extend{\Heap_2}{\Perm*\Perm_1}{\Env}{\Perm_2}{A_2}\\
\end{array}
\]
and must prove
\[
\Heap_2,\Perm*\Perm_1*\Perm_2,\Env \modelsISL A_3
\]
By definition of $\localExts{\Heap,\Perm,\sigma}$ we can show
\[
(\Heap_2,\Perm*\Perm_1*\Perm_2,\Env) \in \localExts{\Heap,\Perm,\sigma}\\
\]
By Lemma~\ref{lem:extfrmuses}, we know
\[
(\Heap_1,\Perm * \Perm_1,\Env) \in \eframed{A_1}
\]
and thus, by Lemma~\ref{lemma:newlemmatwo}~(2) we know
\[
  \extend{\Heap_2}{\Perm}{\Env}{\Perm_1}{A_1}\\
\]
Now, we case-split on whether we are proving part (b) or (c):
\begin{desCription}
\item[part (b)] Then we can assume that $A_1$ and $A_2$ are supported. By Lemma \ref{lemma:newlemmasix}~(\ref{lemma:newlemmasix:logical}), we can obtain $\extend{\Heap_2}{\Perm}{\Env}{\Perm_1*\Perm_2}{A_1 \wedge A_2}$. Thus, using $\Heap, \Perm,\Env \modelsISL (A_1 \wedge A_2)\imp A_3$ we get
\[
\Heap_2,\Perm*\Perm_1*\Perm_2,\Env \modelsISL A_3
\]
as required.

\item[part (c)]
Then, we can assume that both $A_1\wedge A_2$ and $A_3$ are self-framing.
 By Lemma \ref{lemma:newlemmaone}, there exists $\Perm_3\subseteq(\Perm_1*\Perm_2)$ such that $\extend{\Heap_2}{\Perm}{\Env}{\Perm_3}{A_1 \wedge A_2}$. Define $\Heap_3 = \condheap{\Perm_3}{\Heap_2}{\Heap}$. Then we have $\Heap_3\agrees{\Perm*\Perm_3}\Heap_2$. Since $A_1 \wedge A_2$ is self-framing, by Lemma \ref{lemma:newlemmatwo}~(3), we have $\extend{\Heap_3}{\Perm}{\Env}{\Perm_3}{A_1*A_2}$.
We need to show that
\[
(\Heap_3,\Perm * \Perm_3,\Env) \in \localExts{\Heap,\Perm,\Env}
\]
which follows as $\Heap_3 \agrees{\Perm} \Heap$. By assumption, we get
$\modelsISL{\Heap_3}{\Perm*\Perm_3}{\Env}{A_3}$. Since $A_3$ is self-framing, and since $\Heap_2\agrees{\Perm*\Perm_3}\Heap_3$, we obtain $\modelsISL{\Heap_2}{\Perm*\Perm_3}{\Env}{A_3}$. By Proposition \ref{prop:weakening}, we have $\modelsISL{\Heap_2}{\Perm*\Perm_1*\Perm_2}{\Env}{A_3}$ as required.

\end{desCription}

\end{enumerate}

\item We can assume that:
\[\begin{array}{l}\forall \Heap,\Perm,\Env. (\modelsISL{\Heap}{\Perm}{\Env}{A_1}\Rightarrow \\
\quad\forall \Perm_1\perp\Perm, \forall \Heap_1\agrees{\rds{\Perm}\cup\overline{\rds{\Perm_1}}}\Heap. \\
\quad\quad (\extend{\Heap_1}{\emptyset}{\Env}{\Perm_1}{A_2} \Rightarrow \modelsISL{\Heap_1}{\Perm*\Perm_1}{\Env}{A_3}))
\end{array}\]
We need to know that, assuming that (for some $\Heap_2,\Perm_2$) $\modelsISL{\Heap_2}{\Perm_2}{\Env}{A_1*A_2}$ holds, we can deduce that $\modelsISL{\Heap_2}{\Perm_2}{\Env}{A_3}$ also holds. The former means that there exist $\Perm_3$ and $\Perm_4$ such that $\Perm_2 = \Perm_3*\Perm_4$ and both $\modelsISL{\Heap_2}{\Perm_3}{\Env}{A_1}$ and $\modelsISL{\Heap_2}{\Perm_4}{\Env}{A_2}$ hold. By Lemma \ref{lemma:newlemmaone}, there exists $\Perm_5\subseteq\Perm_4$ such that $\extend{\Heap_2}{\emptyset}{\Env}{\Perm_5}{A_2}$ holds. Now we apply our original assumption, defining $\Heap = \Heap_2$ and $\Heap_1 = \Heap_2$ and $\Perm = (\Perm_3 * (\Perm_4 - \Perm_5))$ and $\Perm_1 = \Perm_5$ (note that, by Proposition \ref{prop:weakening}, we have $\modelsISL{\Heap}{\Perm}{\Env}{A_1}$). From the assumption, we obtain $\modelsISL{\Heap}{\Perm*\Perm_1}{\Env}{A_3}$, i.e., $\modelsISL{\Heap_2}{\Perm_3*\Perm_4}{\Env}{A_3}$, i.e., $\modelsISL{\Heap_2}{\Perm_2}{\Env}{A_3}$ as required.
\item We can assume that
\[\forall\Heap,\Perm,\Env. (\modelsISL{\Heap}{\Perm}{\Env}{A_1*A_2} \Rightarrow \modelsISL{\Heap}{\Perm}{\Env}{A_3})\]
i.e., we (equivalently) assume that:
\[\begin{array}{r}
\forall\Heap,\Perm_1,\Perm_2,\Env. (\Perm_1\perp\Perm_2 \land \modelsISL{\Heap}{\Perm_1}{\Env}{A_1} \land \modelsISL{\Heap}{\Perm_2}{\Env}{A_2} \Rightarrow\quad\mbox{}\\
\quad \modelsISL{\Heap}{\Perm_1*\Perm_2}{\Env}{A_3})\end{array}\]
We need to show that, if we assume (for some $\Heap_1$ and $\Perm_1$) that $\modelsISL{\Heap_1}{\Perm_1}{\Env}{A_1}$, then we can deduce that $\modelsISL{\Heap_1}{\Perm_1}{\Env}{A_2\wand A_3}$ holds, i.e., that:
\[
\begin{array}{l}
\forall \Perm_2\perp\Perm_1,\forall \Heap_2\agrees{\rds{\Perm_1}\cup\overline{\rds{\Perm_2}}}\Heap_1. \\
\qquad\qquad(\extend{\Heap_2}{\emptyset}{\Env}{\Perm_2}{A_2} \Rightarrow \modelsISL{\Heap_2}{\Perm_1*\Perm_2}{\Env}{A_3})
\end{array}
\]
To show this, we assume $\Perm_2\perp\Perm_1$ and $\Heap_2\agrees{\rds{\Perm_1}\cup\overline{\rds{\Perm_2}}}\Heap_1$ and $\extend{\Heap_2}{\emptyset}{\Env}{\Perm_2}{A_2}$ and need to prove $\modelsISL{\Heap_2}{\Perm_1*\Perm_2}{\Env}{A_3}$. Since $A_1$ is self-framing, and since $\Heap_2\agrees{\Perm_1}\Heap_1$, we know that $\modelsISL{\Heap_2}{\Perm_1}{\Env}{A_1}$. Then, letting $\Heap = \Heap_2$, we can apply our original assumption to obtain $\modelsISL{\Heap_2}{\Perm_1*\Perm_2}{\Env}{A_3}$ as required.
\end{enumerate}
}\smallskip

\noindent To see that the usual separation logic laws do not all hold in general, consider for example the two assertions $A_1 \stackrel{\textit{def}}{=}(x.f=1 \wand (\acc(x.f,1) \wand \mathsf{false}))$ and $A_2 \stackrel{\textit{def}}{=}(x.f=1 * \acc(x.f,1)) \wand \mathsf{false}$. The assertion $A_2$ is equivalent to $\acc(x.f,\_)$, that is a permissions mask, which cannot be extended with disjoint full access to $x.f$. However, the assertion $A_1$ is also true in models where the heap maps $x.f$ to a value other than $1$, as the outer wand does not get to change the current heap.

The usual separation logic laws do however hold for self-framing assertions which (by Lemma \ref{lemma:seplogicframing}) includes all separation logic assertions.
\begin{corollary}
For all self-framaing \SLE{} assertions $A_1$, $A_2$, $A_3$:
\begin{enumerate}[\em(1)]
\item $A_1*(A_1\wand A_2)\entails A_2$
\item $A_1\land(A_1\imp A_2)\entails A_2$
\item $A_1\wand(A_2\wand A_3)\isequiv (A_1*A_2)\wand A_3$
\item $A_1\imp(A_2\imp A_3) \isequiv (A_1\land A_2)\imp A_3$
\item $A_1 \entails (A_2\wand A_3)$ if and only if $(A_1*A_2)\entails A_3$
\end{enumerate}
\end{corollary}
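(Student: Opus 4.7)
The plan is to derive each part as a direct corollary of the corresponding item in Proposition~\ref{prop:stdproperties}, by supplying the auxiliary framedness hypotheses using self-framing. To this end, I would first establish two short auxiliary facts and then dispatch the five items in turn.

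The first auxiliary fact is that if $A$ is self-framing then, at every state $(\Heap,\Perm,\Env)$, we have both $(\Heap,\Perm,\Env)\in\eframed{A}$ and $(\Heap,\Perm,\Env)\in\deframed{A}$. For the extension-framed case, unfolding the definition requires showing that $\globalExts{\Heap,\Perm,\Env}\cap\semEl{A}$ is closed under interference on the current permission mask. Given $(\Heap_1,\Perm*\Perm'',\Env)$ in this intersection and $\Heap_2\in\interfere{\Heap_1}{\Perm*\Perm''}$, one observes that $\Heap_2\agrees{\Perm*\Perm''}\Heap_1$ implies $\Heap_2\in\interfere{\Heap_1}{\Perm*\Perm''}$, so stability of $\semEl{A}$ (which is what self-framing gives us) keeps us inside $\semEl{A}$; closure of $\globalExts{\Heap,\Perm,\Env}$ under the same interference is immediate from its definition. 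The disjoint case is analogous.

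The second auxiliary fact is that self-framing is preserved by $*$ and $\wedge$. For $A_1*A_2$ self-framing, I would take $(\Heap,\Perm,\Env)\in\semEl{A_1*A_2}$ and $\Heap'\in\interfere{\Heap}{\Perm}$, split $\Perm = \Perm_1*\Perm_2$ witnessing the conjunction, note $\Heap'\in\interfere{\Heap}{\Perm_i}$ for each $i$, and apply stability of $\semEl{A_i}$ to both pieces; then recombine. The argument for $\wedge$ is even simpler since no splitting is needed.

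With these facts in hand, the proof of each item is mechanical. Item (1) and Item (2) are literally Proposition~\ref{prop:stdproperties}(1) and (2). For Item (3), the forward inclusion follows from Proposition~\ref{prop:stdproperties}(3)(a) using the first auxiliary fact on $A_1$, and the reverse inclusion follows from Proposition~\ref{prop:stdproperties}(3)(c), whose hypothesis that $A_1*A_2$ and $A_3$ are self-framing is supplied by the second auxiliary fact. Item (4) proceeds identically, using Proposition~\ref{prop:stdproperties}(4)(a) and (4)(c) together with self-framing of $A_1\wedge A_2$. Finally Item (5) is the conjunction of Proposition~\ref{prop:stdproperties}(5) (which is unconditional) and Proposition~\ref{prop:stdproperties}(6), whose $A_1$-self-framing side-condition is part of our hypothesis. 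I expect the only non-routine step to be the two auxiliary lemmas above, and of these the extension-framedness fact is the one that needs the small unpacking described.
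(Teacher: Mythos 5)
Your proposal is correct and follows essentially the same route the paper intends: the corollary is obtained from Proposition~\ref{prop:stdproperties} by discharging the side-conditions, with self-framing of $A_1$ giving membership in $\eframed{A_1}$ and $\deframed{A_1}$ at every state, self-framing being closed under $*$ and $\wedge$ to feed parts (3)(c) and (4)(c), and items (1), (2), (5) coming directly from the unconditional parts (1), (2), (5)--(6). Your two auxiliary facts and their verifications are exactly the glue needed, so nothing further is required.
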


\subsection{Existentials and Substitution}

Next we consider when it is valid to replace a variable with an expression it is equal to; that is, under what condition is
$\exists x. x=E * A $ equivalent to $A[E/x]$.  If the expression does not depend on the heap, then this equivalence holds.
\begin{lemma}\label{substitution2}
For any \emph{separation logic expression} $e$:
\[
(\exists x. x = e * A)  \isequiv A[e/x]
\]
\end{lemma}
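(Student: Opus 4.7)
The plan is to first establish a substitution lemma for heap-independent expressions: for any \SLE{} assertion $A$, variable $x$, and separation logic expression $e$,
\[
\Heap, \Perm, \Env \modelsISL A[e/x] \ \iff\ \Heap, \Perm, \Env[x \mapsto \semSLE{e}{\Env}] \modelsISL A\text{.}
\]
This is proved by induction on $A$. The key sub-claim needed at the atomic cases is that for any \SLE{} expression $E$ (possibly heap-dependent), $\semCE{E[e/x]}{\Env,\Heap} = \semCE{E}{\Env[x \mapsto \semSLE{e}{\Env}],\Heap}$, which follows by an easy induction on $E$, using Lemma \ref{lemma:expressions} to know that $\semSLE{e}{\Env}$ is independent of $\Heap$. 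The inductive cases for $*$, $\wand$, $\imp$, $\wedge$, $\vee$ and $\exists$ then go through routinely, with the usual alpha-renaming convention ensuring $x$ does not clash with the bound variable under an existential. (We may also assume $x \notin \textit{fv}(e)$ throughout.)

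For the forward direction ($\Rightarrow$), assume $\Heap, \Perm, \Env \modelsISL \exists x.\; (x = e) * A$. Unfolding the existential yields some value $v$ with $\Heap, \Perm, \Env[x \mapsto v] \modelsISL (x=e)*A$; unfolding $*$ gives $\Perm_1, \Perm_2$ with $\Perm = \Perm_1 * \Perm_2$, $\Heap, \Perm_1, \Env[x \mapsto v] \modelsISL x=e$, and $\Heap, \Perm_2, \Env[x \mapsto v] \modelsISL A$. The first conjunct forces $v = \semSLE{e}{\Env[x\mapsto v]} = \semSLE{e}{\Env}$ (since $x \notin \textit{fv}(e)$). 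By Proposition \ref{prop:weakening} applied to $\Perm_2 \subseteq \Perm$, we obtain $\Heap, \Perm, \Env[x \mapsto v] \modelsISL A$, and then the substitution lemma delivers $\Heap, \Perm, \Env \modelsISL A[e/x]$.

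For the reverse direction ($\Leftarrow$), assume $\Heap, \Perm, \Env \modelsISL A[e/x]$. By the substitution lemma, $\Heap, \Perm, \Env[x \mapsto \semSLE{e}{\Env}] \modelsISL A$. Choosing $v = \semSLE{e}{\Env}$, the assertion $x = e$ holds in $\Env[x \mapsto v]$ with any permission mask (in particular $\emptyset$), so by splitting $\Perm = \emptyset * \Perm$ we conclude $\Heap, \Perm, \Env[x \mapsto v] \modelsISL (x=e) * A$, and hence $\Heap, \Perm, \Env \modelsISL \exists x.\; (x=e) * A$.

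The only subtle point is the substitution lemma itself, and specifically the cases for $\wand$ and $\imp$, whose semantics quantify over minimal permission extensions. However, since $e$ is a separation logic expression, substituting $e$ for $x$ does not introduce any new heap dependence, so the \emph{framed} predicate $\fr{A[e/x]}{\Heap'}{\Perm}{\Env}{\Perm'}$ coincides with $\fr{A}{\Heap'}{\Perm}{\Env[x\mapsto \semSLE{e}{\Env}]}{\Perm'}$ by the inductive hypothesis, and the quantifications over extensions match on both sides. This is the main (albeit mild) obstacle; once the substitution lemma is secured, the equivalence is immediate.
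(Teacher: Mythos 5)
Your proposal is correct and follows essentially the same route as the paper: both reduce the equivalence to the substitution lemma $\Heap,\Perm,\Env \modelsISL A[e/x] \Leftrightarrow \Heap,\Perm,\Env[x\mapsto\semCE{e}{\Heap,\Env}] \modelsISL A$ together with the corresponding fact for expressions, each proved by straightforward induction, and you rightly pinpoint that the $\wand$/$\imp$ cases only go through because $e$ is heap-independent (so the minimal-extension conditions agree across the havoced states). Your explicit unfolding of the existential and $*$ via Proposition \ref{prop:weakening} and the split $\Perm=\emptyset*\Perm$ merely spells out steps the paper leaves implicit, so there is nothing further to add.
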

\begin{proof}
We prove
\[
\Heap,\Perm,\Env \modelsISL A[e/x]
\Leftrightarrow
\Heap,\Perm,\Env[x \mapsto \semCE{e}{\Heap,\Env}] \modelsISL A
\]
and
\[
\semCE{E[E'/x]}{\Heap,\Env}
\Leftrightarrow
\semCE{E}{\Heap,\Env[x \mapsto \semCE{E'}{\Heap,\Env}]}
\]
by straightforward inductions on structures of $A$ and $E$, respectively.
\end{proof}
However, if the expression depends on the heap, then the problem is more challenging.  Consider the example formula
\[
\exists v.\; v=x.f * \acc(x.f,\pi) * (\acc(x.f,\pi) \wand v=5)
\]
This formula is semantically equivalent (noting that changes to the heap do not affect the interpretation of $v$) to
\[
\acc(x.f,\pi) * x.f = 5
\]
However, if we apply the standard substitution on the formula, replacing $v$ with the expression $x.f$, then we get
\[
\acc(x.f,\pi) * (\acc(x.f,\pi) \wand x.f=5)
\]
which is equivalent to $\textit{false}$ (recall that the semantics for the $\wand$ connective considers ``adding on'' new permission for $x.f$ in this case, which includes considering changing its value arbitrarily). More abstractly, the difficulty here is that the semantics of $\wand$, and $\imp$ consider \emph{changes} to the current heap; in general this is incompatible with treating heap-dependent expressions as purely syntactic entities which can be moved around amongst subformulas freely, as we would if we wanted a substitution property for such expressions. In particular, the meaning of a heap-dependent expression can differ in different positions in a formula, depending on its nesting under $\wand$ and $\imp$ connectives\footnote{One can compare with the analogous situation in standard separation logic: an SL formula such as $\pointsto{x.f}{\pi}{u} * (\pointsto{x.f}{\pi}{v} \imp u=v)$ is not actually valid in traditional intuitionistic separation logic semantics for the same reasons; the semantics of the implication connective includes the concept of ``adding on'' new access to $x.f$ when evaluating the implication.}. For this reason, if we wanted such a property, we would need to restrict the uses of $\wand$ and $\imp$ to enable the substitution of expressions with heap dependencies.
To illustrate this, we define a class of formulas, that are \emph{substitutable}. These are the formulas that only contain pure formulas on the left of $\wand$ and $\imp$.
\newcommand{\subs}[1]{\mathsf{subst}(#1)}
\begin{definition}[Substitutable formulas]
We define a formula as \emph{substitutable}, $\subs{A}$, by
\[
\begin{array}{ll}
\subs{A_1 \multimap A_2}
&\iff \subs{A_1} \land \subs{A_2} \text{ and }A_1 \text{ is pure}\\
&\hfill
\quad(\text{where } \multimap\ \in \{\wand,\imp\}) \\
\subs{A_1 \circ A_2}
&\iff \subs{A_1} \wedge \subs{A_2}  \qquad(\text{where }\circ \in \{\lor,\land,*\}) \\
 \subs{\exists x.\; A}  & \iff
  \subs{A}
\\
\subs{\acc(E.f,`p)} & \iff
\subs{E=E'}  \iff
\subs{\pointsto{E.f}{`p}{E'}} \iff \text{always} \\

\end{array}
\]
\end{definition}

\noindent As substitutable formulas only have pure formulas on the left of $\wand$ and $\imp$, there is a single heap that is used to evaluate the entire formula.
\begin{lemma}\label{substitution}
If $\subs{A}$, then
\[
(\exists x. x = E * A)  \isequiv A[E/x]
\]
\end{lemma}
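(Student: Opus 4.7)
\medskip
The plan is to reduce the statement to a more fundamental semantic substitution lemma, and prove the latter by induction on the structure of $A$. Semantically unfolding $\exists x.(x = E * A)$, and using that $x = E$ is pure (so its separating conjunct may be taken with permission mask $\emptyset$), together with the weakening property (Proposition \ref{prop:weakening}), one obtains that $\Heap,\Perm,\Env \modelsISL \exists x.(x = E * A)$ is equivalent to $\Heap,\Perm,\Env[x\mapsto\semCE{E}{\Env,\Heap}] \modelsISL A$, assuming (as is standard) that $x$ is not free in $E$. Thus it suffices to prove, for all substitutable $A$, that
\[
\Heap,\Perm,\Env \modelsISL A[E/x] \iff \Heap,\Perm,\Env[x\mapsto\semCE{E}{\Env,\Heap}] \modelsISL A\text{.}
\]

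This main lemma will be proved by induction on $A$, after first establishing the expression-level fact $\semCE{E'[E/x]}{\Env,\Heap} = \semCE{E'}{\Env[x\mapsto\semCE{E}{\Env,\Heap}],\Heap}$ by a straightforward induction on $E'$. The atomic cases ($E_1 = E_2$, $\pointsto{E_1.f}{`p}{E_2}$, and $\acc(E_1.f,`p)$) follow directly from this expression lemma. The cases of $*$, $\wedge$, $\vee$, and $\exists y.\;A'$ (after alpha-renaming so that $y \neq x$ and $y$ is not free in $E$) follow routinely from the induction hypothesis, noting that in each of these cases all subformulas are evaluated at the same heap $\Heap$, so the value $\semCE{E}{\Env,\Heap}$ is unchanged when descending into subformulas.

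The main obstacle is the case of $A_1 \wand A_2$ and $A_1 \imp A_2$. The semantics of these connectives in general allows changes to the heap when evaluating the subformulas, which is incompatible with treating a heap-dependent $E$ as a purely syntactic entity that can be freely moved inside subformulas. However, the substitutability hypothesis guarantees that $A_1$ is pure in both cases. For pure $A_1$, the minimality condition in Definition \ref{defn:minimalextensions} forces any minimal extension $\Perm'$ to satisfy $\rds{\Perm'} = \emptyset$, hence $\Perm' = \emptyset$: for otherwise, purity of $A_1$ together with Proposition \ref{prop:weakening} would allow us to shrink $\Perm'$ to $\emptyset$ while still satisfying $A_1$, contradicting minimality. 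Combined with the constraint $\Heap' \agrees{\rds{\Perm} \cup \overline{\rds{\Perm'}}} \Heap$ imposed by $\localDisjExts{\Heap,\Perm,\Env}$ (for $\wand$) or $\localExts{\Heap,\Perm,\Env}$ (for $\imp$), this forces $\Heap' = \Heap$ on all locations. Hence the subformula $A_2$ is evaluated at the original heap $\Heap$, the value of $E$ there is unchanged, and the $\wand$ and $\imp$ cases collapse to a simple boolean conditional of the form used in Lemma \ref{lemma:boolean}. The equivalence then follows directly from the induction hypothesis applied to $A_1$ and $A_2$, concluding the proof.
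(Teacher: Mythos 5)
Your proof is correct and follows essentially the same route as the paper's: reduce the stated equivalence to the semantic substitution fact $\Heap,\Perm,\Env \modelsISL A[E/x] \iff \Heap,\Perm,\Env[x\mapsto\semCE{E}{\Env,\Heap}] \modelsISL A$, prove it by induction on the (substitutable) structure of $A$ using the expression-level substitution fact from Lemma~\ref{substitution2}, and dispatch the $\wand$ and $\imp$ cases by observing that purity of the left-hand side forces the minimal extension to be empty, so these connectives behave as boolean conditionals (exactly the content of Lemma~\ref{lemma:boolean}). Your explicit unfolding of $\exists x.\,x=E*A$ via purity of $x=E$ and Proposition~\ref{prop:weakening} just spells out a step the paper leaves implicit.
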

\begin{proof}
We prove
\[
\Heap,\Perm,\Env \modelsISL A[E/x]
\Leftrightarrow
\Heap,\Perm,\Env[x \mapsto \semCE{E}{\Heap,\Env}] \modelsISL A
\]
by straightforward induction on the $\subs{A}$ predicate. The $\wand$ and $\imp$ cases use that for pure formulas they behave like boolean conditionals (Lemma \ref{lemma:boolean}).  We reuse the expression substitutability proof from previous lemma.
\end{proof}

In Section \ref{sec:mapping}, we will present an encoding from the SL fragment to the IDF fragment of our logic, which preserves semantics. A natural question to ask is, can we encode back from IDF to SL, at least for those IDF assertions which are self-framing? In general, it is surprisingly difficult to define a suitable syntactic translation. A tempting approach is to convert all $\acc(x.f,`p)$ assertions into $\pointsto{x.f}{`p}{v}$ for some fresh logical variable $v$, and then to replace any heap-dependent expressions $x.f$ with $v$ elsewhere in the assertion. But this approach fails in two ways: firstly, it does not deal correctly with aliasing. The criteria for an IDF to be self-framing take account of constraints imposed by the assertion itself; for example, $\acc(x.f) * x = y * y.f = 4$ is self-framing. This makes a syntactic replacement of heap-dependent expressions challenging. Furthermore, the correctness of the replacement of all heap-dependent expressions with logical variables, depends on a substitution property holding for such expressions. As discussed above, this does not hold for the general logic; the meaning of a heap-dependent expression is actually fixed by the ``closest scoped'' occurrence of a permission to that location, with respect to implications and wands; in the presence of aliasing this is hard to determine.

For the subsyntaxes of these logics typically supported by tools, which generally only allow for pure assertions on the left of $\imp$ formulas (and do not support $\wand$ in general), we do get a substitution property, as shown above. However, the problem of correctly handling aliasing between heap locations when translating heap-dependent expressions, still seems to make defining a correct syntactic translation challenging.




\section{Verification Conditions}
\label{sec:vc}
In this section, we precisely connect the semantics of our assertion language with Chalice.
Chalice does not provide a direct model for its assertion language. It instead defines the semantics of assertions using the weakest pre-condition semantics of the commands \inhale{} and \exhale{}.  We show that this semantics precisely corresponds with the semantics in \SLE{}.

\subsection{Chalice}
Chalice is defined by a translation into Boogie2~\cite{LeinoBoogie2}, which generates  verification conditions on a many-sorted classical logic with first-order quantification.  It has sorts for mathematical maps, which are used by Chalice to encode both the heap and the permission mask.  We use $\phi$ to range over formulas in this logic, and $\sigma \modelsFO \phi$ to mean $\phi$ holds in the standard semantics of first-order logic given the interpretation of free variables $\sigma$. Similarly, $\modelsFO \phi$ means that $\phi$ holds in all such interpretations.

The definitions throughout this section generate expressions that have these two specific free variables: $\heapvar$ for the current heap, and $\permvar$ for the current permission mask.  Thus, $\heapvar[x,f] = 5$ means that in the current heap the variable $x$'s field named $f$ contains the value $5$.  In the assertion logic, this corresponds to $x.f=5$, in which the heap access is implicit.

To define the verification conditions for Chalice, we need to be able to translate expressions into the underlying logic using access to the map $\heapvar$. We can provide a syntactic translation from the Chalice assertion logic into the first-order logic.
\begin{definition}
We translate expressions that implicitly access the heap into expressions that explicitly access the heap as follows:
\[
\begin{array}{l}
\toSOE{x} = x \qquad
\toSOE{\Null} = \Null\qquad
\toSOE{E.f} = \heapvar[\toSOE{E}, f]\\
\end{array}
\]
we translate boolean expressions as:
\[
\begin{array}{@{}r@{}}
\toSO{B_1 * B_2} =
		\toSO{B_1}
\land	\toSO{B_2}
\;\quad
\toSO{E = E'} = \toSOE{E} = \toSOE{E'}
\;\quad
\toSO{E \neq E'} = \toSOE{E} \neq \toSOE{E'}
\\

\end{array}
\]\smallskip
\end{definition}

\noindent First, we must show some basic facts about the properties of Chalice assertions (cf. Definition \ref{defn:chalicesyntax}): every Chalice boolean expression is pure, and every Chalice assertion is supported.
\begin{lemma}
Every Chalice boolean expression $B$ is pure.
\begin{proof}
By trivial induction on $B$.
\end{proof}
\end{lemma}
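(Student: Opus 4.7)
The plan is to proceed by induction on the structure of the Chalice boolean expression $B$, which by Definition \ref{defn:chalicesyntax} has just three cases: $E = E'$, $E \neq E'$, and $B_1 * B_2$. In each case the goal is to show that if $\Heap,\Perm,\Env\modelsISL B$ holds for some $\Perm$, then $\Heap,\emptyset,\Env \modelsISL B$ also holds.

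For the equality and disequality base cases, I would unfold the semantics from Definition \ref{defn:seplogictotalsemantics}: $\Heap,\Perm,\Env \modelsISL E = E'$ iff $\semCE{E}{\Env,\Heap} = \semCE{E'}{\Env,\Heap}$. The right-hand side depends only on $\Env$ and $\Heap$, not on $\Perm$ at all, so purity is immediate. The disequality case (if interpreted as the negation of equality, or alternatively handled by an analogous semantic clause) proceeds identically.

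For the inductive case $B \equiv B_1 * B_2$, assume $\Heap,\Perm,\Env\modelsISL B_1 * B_2$. By the semantics of $*$, there exist $\Perm_1, \Perm_2$ with $\Perm = \Perm_1 * \Perm_2$ and $\Heap,\Perm_1,\Env\modelsISL B_1$ and $\Heap,\Perm_2,\Env\modelsISL B_2$. By the induction hypothesis applied to $B_1$ and $B_2$, we obtain $\Heap,\emptyset,\Env\modelsISL B_1$ and $\Heap,\emptyset,\Env\modelsISL B_2$. Since $\emptyset = \emptyset * \emptyset$ (the empty mask is trivially compatible with itself and sums to itself), we can then re-apply the semantics of $*$ to conclude $\Heap,\emptyset,\Env\modelsISL B_1 * B_2$, as required.

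There is no real obstacle here: the grammar for $B$ excludes accessibility predicates, points-to assertions, wands, and implications, so none of the connectives whose semantics genuinely consult the permission mask can appear. The only subtle point worth remarking on is that the separating conjunction case relies on the fact that the empty mask splits as $\emptyset * \emptyset$, which is immediate from the definition of mask combination.
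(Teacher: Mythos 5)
Your proof is correct and matches the paper's approach exactly: the paper just states "by trivial induction on $B$," and your write-up is precisely that induction, with the base cases following because the semantics of (dis)equality never consults $\Perm$ and the $*$ case following from the induction hypothesis plus $\emptyset = \emptyset * \emptyset$.
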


\begin{lemma}\label{chalicesupported}
Every Chalice formula $p$ is supported.
\proof
We must show
\[
\Heap,\Perm,\Env \modelsISL p\ 
\land\ 
\Heap,\Perm',\Env \modelsISL p
\ \Rightarrow\ 
\Heap,\Perm \sqcap \Perm',\Env \modelsISL p
\]
We proceed by induction on $p$
\begin{desCription}
\item\noindent{\hskip-12 pt$p \equiv B$:}\ As $B$ is pure, we know that if it is satisfied in a state, it will also be satisfied with any alternative permission mask.

\item\noindent{\hskip-12 pt$p \equiv \acc(E.f,\pi)$:}\
$\Perm$ and $\Perm'$ must map the field location to $\pi$ or greater, therefore $\Perm\sqcap \Perm'$ will also map the field location to $\pi$ or greater.
\item\noindent{\hskip-12 pt$p \equiv p_1 * p_2$:}\
  Assume $\Heap,\Perm,\Env \modelsISL p_1 * p_2$, and  $\Heap,\Perm',\Env \modelsISL p_1 * p_2$.  Therefore there exist $\Perm_1$, $\Perm_2$, $\Perm_1'$ and $\Perm_2'$ such that $\Perm_1*\Perm_2 = \Perm$ and $\Perm_1'*\Perm_2' = \Perm'$ and $\Heap,\Perm_1,\Env \modelsISL p_1$ and $\Heap,\Perm_2,\Env \modelsISL p_2$ and $\Heap,\Perm_1',\Env \modelsISL p_1$ and $\Heap,\Perm_2',\Env \modelsISL p_2$. By induction, we know $\Heap,\Perm_1 \sqcap \Perm_1',\Env \modelsISL p_1$ and $\Heap,\Perm_2 \sqcap \Perm_2',\Env \modelsISL p_2$, and thus we can deduce that
$\Heap,(\Perm_1 \sqcap \Perm_1')*(\Perm_2 \sqcap \Perm_2'),\Env \modelsISL p_1 * p_2$.  We can show $(\Perm_1 \sqcap \Perm_1' )*(\Perm_2 \sqcap \Perm_2' ) \subseteq (\Perm_1 * \Perm_2) \sqcap (\Perm_1'* \Perm_2')$, which by Proposition~\ref{prop:weakening} proves the obligation.

\item\noindent{\hskip-12 pt$p \equiv B \imp p'$:}\ Case split on $\Heap,\emptyset, \Env \modelsISL B$.  If  $B$ is true, then using Lemma~\ref{lemma:boolean} the result follows directly by induction.  If $B$ is false, then using Lemma~\ref{lemma:boolean} we have $\Heap,\emptyset,\Env \modelsISL B \imp p$ as required.\qed
\end{desCription}
\end{lemma}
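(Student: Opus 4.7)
The plan is to prove supportedness by straightforward structural induction on the Chalice assertion $p$, exploiting the previous lemma that every Chalice boolean expression $B$ is pure, together with Lemma~\ref{lemma:boolean} and Proposition~\ref{prop:weakening}. The two base cases are easy: if $p \equiv B$, purity gives $\Heap,\Perm,\Env\modelsISL B$ for \emph{any} $\Perm$ whenever it holds for some $\Perm$, so in particular for $\Perm_1 \sqcap \Perm_2$; and if $p \equiv \acc(E.f,\pi)$, the assumptions give $\Perm_1[\semCE{E}{\Env,\Heap},f] \geq \pi$ and $\Perm_2[\semCE{E}{\Env,\Heap},f] \geq \pi$, whence $(\Perm_1\sqcap\Perm_2)[\semCE{E}{\Env,\Heap},f] = \min(\Perm_1[\semCE{E}{\Env,\Heap},f],\Perm_2[\semCE{E}{\Env,\Heap},f]) \geq \pi$ by definition of $\sqcap$.

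The main obstacle will be the separating conjunction case $p \equiv p_1 * p_2$. From the two hypotheses I obtain splittings $\Perm_1 = \Perm_{11}*\Perm_{12}$ and $\Perm_2 = \Perm_{21}*\Perm_{22}$ with $p_1$ satisfied under $\Perm_{11}$ and $\Perm_{21}$, and $p_2$ under $\Perm_{12}$ and $\Perm_{22}$. By the inductive hypotheses, $\Heap,\Perm_{11}\sqcap\Perm_{21},\Env \modelsISL p_1$ and $\Heap,\Perm_{12}\sqcap\Perm_{22},\Env \modelsISL p_2$, so $p_1 * p_2$ holds under $(\Perm_{11}\sqcap\Perm_{21})*(\Perm_{12}\sqcap\Perm_{22})$ by the semantics of $*$. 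The key pointwise arithmetic fact is $\min(a,b) + \min(c,d) \leq \min(a+c,\,b+d)$, which gives $(\Perm_{11}\sqcap\Perm_{21})*(\Perm_{12}\sqcap\Perm_{22}) \subseteq \Perm_1 \sqcap \Perm_2$; Proposition~\ref{prop:weakening} then lifts satisfaction to $\Perm_1 \sqcap \Perm_2$ as required.

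For the remaining case $p \equiv B \imp p'$, I invoke Lemma~\ref{lemma:boolean} (applicable since $B$ is pure) to reduce each instance of $B \imp p'$ to the classical conditional $(\Heap,\Perm,\Env\modelsISL B \Rightarrow \Heap,\Perm,\Env\modelsISL p')$. Purity of $B$ makes its truth independent of the permission mask, so the antecedent has the same truth value under $\Perm_1$, $\Perm_2$, and $\Perm_1\sqcap\Perm_2$. If it is false, $B \imp p'$ holds vacuously under $\Perm_1\sqcap\Perm_2$; if it is true, then $p'$ is satisfied under both $\Perm_1$ and $\Perm_2$, and the inductive hypothesis on $p'$ yields satisfaction under $\Perm_1\sqcap\Perm_2$, completing the case.
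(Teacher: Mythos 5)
Your proof is correct and follows essentially the same route as the paper's: purity for $B$, the pointwise $\min$ argument for $\acc$, splitting both masks and combining the inductive hypotheses via $(\Perm_{11}\sqcap\Perm_{21})*(\Perm_{12}\sqcap\Perm_{22}) \subseteq \Perm_1\sqcap\Perm_2$ together with Proposition~\ref{prop:weakening} for $*$, and Lemma~\ref{lemma:boolean} plus a case split on the (permission-independent) truth of $B$ for $B\imp p'$. The only difference is that you make the arithmetic fact $\min(a,b)+\min(c,d)\leq\min(a+c,b+d)$ explicit where the paper merely asserts the containment.
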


\noindent Chalice does not allow arbitrary formulas to be used as argument to \inhale{} and \exhale{}: it restricts the formulas to be self-framing.  Chalice does not use the semantic check from earlier, but instead uses a more-syntactic formulation that checks self-framing from left-to-right.  Note that this means that syntactic self-framing is not symmetric with respect to $*$.  For instance, $\acc(x.f,\pi) * x.f=5$ is syntactically self-framing, but $x.f=5 * \acc(x.f,\pi)$ is not.  Somewhat surprisingly this is required by the way the verification conditions are generated.
In Chalice the check is actually implemented by a Boogie program.  Here, we use the logic to define an equivalent condition\footnote{The end result of this section can be used to prove it is equivalent to verifying the Boogie program that Chalice would generate.}.

\begin{definition}[Syntactic Self-Framing]
We define a condition $\framed{E}$ to express that all the fields mentioned in $E$ are accessible.
\[
\begin{array}{r@{\iff}l}
\framed{E.f} & \framed{E} \land \acc(E.f,\_)  \\
\framed{x} &  \mathsf{True}\\
\framed{\Null} &  \mathsf{True}
\end{array}
\]
We lift this to boolean expressions as
\[
\begin{array}{@{}r@{\iff}l@{}}
\framed{E=E'}
   & \framed{E} \land \framed{E'}\\
\framed{E\neq E'}
   & \framed{E} \land \framed{E'}\\
\framed{B_1 * B_2}
   & \framed{B_1} \land (B_1 \imp \framed{B_2})\\
\end{array}
\]%
We lift this to formulas as
\[
\begin{array}{@{}r@{\iff}l@{}}
\framed{B \imp p}
   & \framed{B} \land (B \imp \framed{p})\\
\framed{\acc(E.f,\pi)}
   & \framed{E}\\
\framed{p_1 * p_2}
   & \framed{p_1} \land
        (p_1 \wand \framed{p_2})
\end{array}
\]
Note that when we check that $p_2$ is framed in $p_1*p_2$, we can use the assertion $p_1$; these checks do not treat $*$ as commutative.

A formula, $p$, is syntactically self-framing, if and only if $\modelsISL\framed{p}$.\smallskip
\end{definition}

\noindent We prove some basic facts about $\framed{E}$ and $\framed{B}$: (1) in any state in which $\framed{E}$ holds, changing the value at any locations without permissions does not affect $E$'s evaluation; (2) $\framed{E}$ is (semantically) self-framing; (3) in any state in which $\framed{B}$ holds, changing the value at any locations without permissions does not affect $B$'s evaluation.

\addprop{lemma}{}{lem:sframed:basic}{\mbox{}\em
\begin{enumerate}[\em(1)]
\item
If $\Heap,\Perm,\Env \modelsISL \framed{E}$, and $\Heap' \agrees{\Perm} \Heap$
then $\semSLE{E}{\Heap,\Env} = \semSLE{E}{\Heap',\Env}$.

\item $\framed{E}$ is self-framing

\item
If $\Heap,\Perm,\Env \modelsISL \framed{B}$, and $\Heap' \agrees{\Perm} \Heap$
then $\Heap,\Perm,\Env \modelsISL B$ if and only if $\Heap',\Perm,\Env \modelsISL B$.

\item $\framed{B}$ is self-framing.
\end{enumerate}
}{}{
\mbox{}
\begin{enumerate}[(1)]
\item  Follows by straightforward induction on $E$.

\item Follows by induction on $E$, and using previous property.  The base cases hold trivially. For the inductive case ($E.f$), we assume
\[
\begin{array}{l l l}
\Heap,\Perm,\Env \modelsISL \framed{E}& \qquad
\Perm[\semSLE{E}{\Heap,\Env}, f] \geq \pi & \qquad
\Heap \agrees{\Perm} \Heap'
\end{array}
\]
and need to show that
\[
\begin{array}{l l }
\Heap',\Perm,\Env \modelsISL \framed{E} & \qquad
\Perm[\semSLE{E}{\Heap',\Env}, f] \geq \pi\\
\end{array}
\]
The first part follows from the inductive hypothesis.  The second part follows as we know $\semSLE{E}{\Heap,\Env} = \semSLE{E}{\Heap',\Env}$ by the previous part of the lemma.

\item By induction on $B$.  The base cases hold trivially.  For the inductive case, assume
\[
\begin{array}{l}
\Heap,\Perm,\Env \modelsISL \framed{B_1}\\
\Heap,\Perm,\Env \modelsISL B_1 \imp \framed{B_2}\\
\Heap \agrees{\Perm} \Heap'\\
\end{array}
\]
By inductive hypothesis, we know
\[
\Heap,\Perm,\Env \modelsISL B_1
\iff
\Heap',\Perm,\Env \modelsISL B_1
\]
We case split on whether or not $B_1$ holds. For the first case, assume $\Heap,\Perm,\Env \modelsISL B_1$. Therefore, by Lemma~\ref{lemma:boolean} we know
\[
\Heap,\Perm,\Env \modelsISL \framed{B_2}
\]
and thus, by inductive hypothesis
\[
\Heap,\Perm,\Env \modelsISL B_2
\iff
\Heap',\Perm,\Env \modelsISL B_2
\]
Hence, we know
\[
\Heap,\Perm,\Env \modelsISL B_1 * B_2
\iff
\Heap',\Perm,\Env \modelsISL B_1 * B_2
\]
as required.

For the second case, assume $\Heap,\Perm,\Env \notmodelsISL B_1$. Therefore
\[
\Heap',\Perm,\Env \notmodelsISL B_1
\]
and thus we know
\[
\Heap,\Perm,\Env \modelsISL B_1 * B_2
\iff
\Heap',\Perm,\Env \modelsISL B_1 * B_2
\]
as required.

\item  By induction on $B$. The base cases follow directly from previous parts of this lemma. For the inductive case, we assume
\[
\begin{array}{l}
\Heap,\Perm,\Env \modelsISL \framed{B_1}\\
\Heap,\Perm,\Env \modelsISL B_1 \imp \framed{B_2}\\
\Heap \agrees{\Perm} \Heap'
\end{array}
\]
and we seek to prove
\[
\begin{array}{l}
\Heap',\Perm,\Env \modelsISL \framed{B_1}\\
\Heap',\Perm,\Env \modelsISL B_1 \imp \framed{B_2}\\
\end{array}
\]
The first obligation follows by inductive hypothesis.  Using Lemma~\ref{lemma:boolean}, we can assume $\Heap',\Perm,\Env \modelsISL B_1$, and must prove $\Heap',\Perm,\Env \modelsISL \framed{B_2}$.
Thus, by previous part, we know $\Heap,\Perm,\Env \modelsISL B_1$, and by Lemma~\ref{lemma:boolean} we know
\[
\Heap,\Perm,\Env \modelsISL \framed{B_2}\\
\]
By inductive hypothesis, we obtain
\[
\Heap',\Perm,\Env \modelsISL \framed{B_2}\\
\]
as required.
\end{enumerate}
}

%
%

\noindent The key property we require of the $\framed{p}$ definition is that it allows a wand ($\wand$) of a separating conjunction, to be considered as a sequence of wands.

\addprop{lemma}{}{lem:framed:approx}
{\mbox{}\em
\begin{enumerate}[\em(1)]
\item
$
\framed{p_1}
\land
((p_1 * p_2) \wand p)
\modelsISL
p_1 \wand (p_2 \wand p)
$
\item
$
\framed{p_1}
\land
(p_1 \wand (p_2 \wand p))
\modelsISL
(p_1 * p_2) \wand p
$
\end{enumerate}
}{
This proof follows from Proposition~\ref{prop:stdproperties}.\ref{prop:bullet:uncurrying}.a and~\ref{prop:stdproperties}.\ref{prop:bullet:uncurrying}.b, Lemma~\ref{chalicesupported}, and showing
\[
\forall p. \;  \semEl{\framed{p}} \subseteq \deframed{p}
\]
This is proved by induction on $p$.
}{
We break this proof into two steps. First we prove that the $\framed{p_1}$ condition has a semantic meaning in terms of $\deframed{p_1}$, and then show this semantic meaning allows the restructuring of the assertion.
That is, we show that
\begin{equation}
\forall p. \;  \semEl{\framed{p}} \subseteq \deframed{p} \label{framed:proof1}
\end{equation}
and then show
\begin{equation}
\deframed{p_1}
\cap
\semEl{(p_1 * p_2) \wand p}
\subseteq
\semEl{p_1 \wand (p_2 \wand p)}
\label{framed:proof2}
\end{equation}
and
\begin{equation}
\deframed{p_1}
\cap
\semEl{p_1 \wand (p_2 \wand p)}
\subseteq
\semEl{(p_1 * p_2) \wand p}
\label{framed:proof3}
\end{equation}

To prove \eqref{framed:proof2} we use Proposition~\ref{prop:stdproperties}(\ref{prop:bullet:uncurrying})(b) and Lemma~\ref{chalicesupported}, and  \eqref{framed:proof3} is just a restatement of Proposition~\ref{prop:stdproperties}(\ref{prop:bullet:currying})(a).

To prove \eqref{framed:proof1} we use induction on $p$.
\begin{desCription}
\item\noindent{\hskip-12 pt($p \equiv \acc(E,f,\pi)$)}\ This requires that we prove
\[
  \semEl{\framed{E}} \subseteq \deframed{\acc(E.f,\pi)}
\]
By expanding the definition of $\deframed{\acc(E.f,\pi)}$ and the semantics of $\acc(E.f,\pi)$ we can assume
\[
\begin{array}{@{}c @{}}
\Heap,\Perm,\Env \modelsISL \framed{E} \qquad\quad
\Heap \agrees{\Perm} \Heap' \quad\qquad
\Perm \perp \Perm'  \\
\Perm' [\semSLE{E}{\Heap',\Env},f]  \geq \pi  \qquad \quad
\Heap' \agrees{\Perm * \Perm'} \Heap''
\end{array}
\]
and are required to prove $
\Perm'[\semSLE{E}{\Heap'',\Env},f] \geq \pi
$.
By definition of $\agrees{}$, we can get $\Heap' \agrees{\Perm} \Heap''$, and thus use Lemma~\ref{lem:sframed:basic}, to give $\semSLE{E}{\Heap',\Env} = \semSLE{E}{\Heap'',\Env}$ as required.

\item\noindent{\hskip-12 pt($p\equiv B$)}\ This case requires that we prove
\[
\semEl{\framed{B}} \subseteq\deframed{B}
\]
By expanding the definition of $\deframed{B}$ we can assume
\[
\begin{array}{@{}c  @{}}
\Heap,\Perm,\Env \modelsISL \framed{B} \qquad
\Heap \agrees{\Perm} \Heap' \qquad
\Perm \perp \Perm'  \\
\Heap',\Perm',\Env \modelsISL B  \qquad
\Heap' \agrees{\Perm * \Perm'} \Heap''
\end{array}
\]
and are required to prove $\Heap'',\Perm',\Env \modelsISL B$.  By definition of $\agrees{}$, we know
$\Heap' \agrees{\Perm} \Heap''$, and thus use Lemma~\ref{lem:sframed:basic} to give $\Heap',\Perm',\Env \modelsISL B \iff \Heap'',\Perm',\Env \modelsISL B$ as required.

\item\noindent{\hskip-12 pt($p \equiv p_1 * p_2$)}\
We assume
\[\begin{array}{l}
\semEl{\framed{p_1}} \subseteq \deframed{p_1}\\
\semEl{\framed{p_2}} \subseteq \deframed{p_2}
\end{array}\]
and, expanding the definition of $\framed{p_1*p_2}$, we must show
\[
\semEl{\framed{p_1} \land (p_1 \wand \framed{p_2}) }
   \subseteq \deframed{p_1 * p_2}
\]
We can assume, by expanding the definition of $\deframed{p_1 * p_2}$, and the definition of the semantics of $*$:
\[
\begin{array}{l}
\Heap,\Perm,\Env \modelsISL \framed{p_1}\\
\Heap,\Perm,\Env \modelsISL p_1 \wand \framed{p_2}\\
\Heap \agrees{\Perm} \Heap'\\
\Perm_1 \perp \Perm \land \Perm_2 \perp \Perm \land \Perm_1 \perp \Perm_2\\
\Heap',\Perm_1,\Env \modelsISL p_1\\
\Heap',\Perm_2,\Env \modelsISL p_2\\
\Heap' \agrees{\Perm * \Perm_1 * \Perm_2} \Heap'' \\
\end{array}
\]
and we are left with proving:
\[
\Heap'',\Perm_1 * \Perm_2 ,\Env\modelsISL p_1 * p_2
\]
Let $\Heap_1 = \condheap{\rds{\Perm} \cup \overline{\rds{\Perm_1}}}{\Heap}{\Heap'}$. By inductive hypothesis, we know $(\Heap,\Perm,\Env) \in \deframed{p_1}$, and thus we know $\Heap_1,\Perm_1,\Env \modelsISL p_1$.

By Lemma~\ref{lemma:newlemmaone}, we can prove that there exist $\Perm_1', \Perm_1''$ such that $\Perm_1' * \Perm_1'' = \Perm_1$ and $\extend{\Heap_1}{\emptyset}{\Env}{\Perm_1'}{p_1}$.  Therefore, we know $\Heap_1,\Perm_1' * \Perm, \Env \modelsISL \framed{p_2}$, and thus $(\Heap_1,\Perm_1' * \Perm, \Env) \in \deframed{p_2}$. As $\Heap \agrees{\Perm} \Heap'$, we know
$\Heap_1 \agrees{\Perm} \Heap'$ by construction.  Moreover, by construction we know $\Heap_1 \agrees{\rds{\Perm_1} \setminus \rds{\Perm}} \Heap'$, which with the previous gives $\Heap_1 \agrees{\Perm_1} \Heap'$. Thus, we know $\Heap_1 \agrees{\Perm_1 * \Perm} \Heap'$, which we can weaken to
$\Heap_1 \agrees{\Perm_1' * \Perm} \Heap'$.  Thus, we know
$(\Heap',\Perm_1' * \Perm, \Env) \in \deframed{p_2}$

As all assertions are weakening-closed (cf. Proposition \ref{prop:weakening}), we have $\Heap', \Perm_1'' * \Perm_2, \Env \modelsISL p_2$.
We know $\Perm_1'' * \Perm_2 \perp \Perm_1' * \Perm$, thus using $\deframed{p_2}$, we know $\Heap'',\Perm_1'' * \Perm_2,\Env \modelsISL p_2$.

By $(\Heap,\Perm,\sigma) \in \deframed{p_1}$ and $\Heap \agrees{\Perm} \Heap'$, we know $(\Heap',\Perm,\sigma) \in \deframed{p_1}$.
By weakening assumption we know $\Heap' \agrees{\Perm * \Perm_1'} \Heap''$.
As $\Perm_1' \perp \Perm$, we know
$\Heap'', \Perm_1', \Env \modelsISL p_1$
and thus
\[
\Heap'',\Perm_1 * \Perm_2 ,\Env\modelsISL p_1 * p_2
\]
as required.

\item\noindent{\hskip-12 pt($p \equiv B \imp p'$)}\ This case requires
\[
\semEl{\framed{B} \land (B \imp \framed{p'})} \subseteq  \deframed{B \imp p'}
\]
By expanding the definition of $\deframed{B \imp p'}$ and using Lemma~\ref{lemma:boolean}, we can assume
\[
\begin{array}{l}
\Heap,\Perm,\Env \modelsISL \framed{B}\\
\Heap,\Perm,\Env \modelsISL B   \Rightarrow \Heap,\Perm,\Env \modelsISL \framed{p'}\\
\Heap \agrees{\Perm} \Heap' \\
\Perm \perp \Perm'\\
\Heap', \Perm', \Env \modelsISL B \Rightarrow \Heap',\Perm',\Env \modelsISL p' \\
\Heap'' \agrees{\Perm * \Perm'} \Heap'\\
\Heap'', \Perm', \Env \modelsISL B
\end{array}
\]
and must prove
\[
\Heap'', \Perm', \Env \modelsISL p'
\]
By the $\framed{B}$ assumption and by Lemma~\ref{lem:sframed:basic}, and since  $B$ is pure, we can obtain that
$\Heap', \Perm', \Env \modelsISL B$ and $\Heap, \Perm, \Env \modelsISL B$.
Therefore, we know
\[
\begin{array}{l}
\Heap,\Perm,\Env \modelsISL \framed{p'}\\
\Heap',\Perm',\Env \modelsISL p'\\
\end{array}
\]
and must show
\[
\Heap'',\Perm',\Env \modelsISL p'\\
\]
which follows directly by definition of $\framed{p'}$ and the inductive hypothesis.

\end{desCription}
}


\begin{figure}[t]
\[
\begin{array}{l}
\wpCh{\exhale(B), \phi} = \wpCh{\assert\ \toSO{B}, \phi}
\\[0.5em]
\wpCh{\exhale(p_1 * p_2),\phi}
=\wpCh{\exhale(p_1);\exhale(p_2), \phi}
\\[0.5em]
\wpCh{\exhale(\acc(E.f, \pi)),\phi}
\\\quad =
\wpCh{
\assert(\permvar[\toSOE{E}{},f]) \geq \pi; \permvar[\toSOE{E}{},f] := \permvar[\toSOE{E}{},f] - \pi
,\phi}
\\[0.5em]
\wpCh{\exhale(B \imp p),\phi}
=\wpCh{(\assume(\toSO{B});\exhale(p)) + \assume(\lnot\toSO{B}), \phi}
\\[0.5em]

%
\wpCh{\inhale(B), \phi}
= \wpCh{\assume\ \toSO{B}, \phi}
\\[0.5em]

\wpCh{\inhale(p_1 * p_2), \phi}
=
\wpCh{\inhale(p_1);\inhale(p_2), \phi}
\\[0.5em]

\wpCh{\inhale(\acc(E.f,`p)),\phi}
\\
\quad =
\wpCh{\assume(\permvar[\toSOE{E},f] = \none);
\permvar[\toSOE{E},f] := \pi;
\havoc(\heapvar[\toSOE{E},f]),\phi}
\\\qquad{} \land
\wpCh{\assume(\none < \permvar[\toSOE{E},f] \leq \full - \pi);
\permvar[\toSOE{E},f] {{+}{=}} \pi,\phi}

\\[0.5em]
\wpCh{\inhale(B \imp p),\phi}
=\wpCh{(\assume(\toSO{B});\inhale(p)) + \assume(\lnot \toSO{B}), \phi}
\\[0.5em]
%
%
\end{array}
\]
where
\[
  \begin{array}{l}
    \wpCh{\permvar[o,f] := x, \phi} = \phi[upd(\permvar,(o,f),x)/\permvar]\\
    \wpCh{\havoc(\heapvar[x,f]),\phi} = \phi[upd(\heapvar,(x,f),z)/\heapvar]   \qquad \text{fresh\ z}\\
    \wpCh{\assume\ \phi', \phi} = \phi' \imp \phi\\
    \wpCh{\assert\ \phi', \phi} = \phi' \land \phi\\
    \wpCh{C_1;C_2, \phi} = \wpCh{C_1, \wpCh{C_2, \phi}}\\
   \wpCh{C_1+C_2, \phi} = \wpCh{C_1, \phi} \land \wpCh{C_2, \phi}\\

  \end{array}
\]
where $upd(a,b,c)[b] = c$ and $upd(a,b,c)[d] = a[d]$ provided $d\neq b$.
\caption{Abridged weakest pre-condition semantics for Chalice~\cite{LeinoM09}}\label{fig:wp:chalice}
\end{figure}

We can now provide the definitions of the weakest pre-conditions of the commands \inhale{} and \exhale.
In Figure~\ref{fig:wp:chalice}, we present the weakest pre-conditions of commands in Chalice from~\cite{LeinoM09}. We write $\wpCh{C,\phi}$ for the weakest pre-condition of the command $C$ given the post-condition $\phi$.  Chalice models the inhaling and exhaling of permissions by mutating the permission mask variable.   To exhale an equality (or any formula not mentioning the permission mask) we simply assert that it must be true.  This does not need to modify the permission mask.
To exhale $p*q$, first we exhale $p$ and then $q$.
When an accessibility predicate is exhaled, first we check that the permission mask contains sufficient permission, and then we remove the permission from the mask.

To inhale an equality is simply the same as assuming it. To inhale $p*q$, we first inhale $p$ and then $q$.  There are two cases for inhaling a permission: (1) we don't currently have any permission to that location; and (2) we do currently have permission to that location.  The first case proceeds by adding the permission, and then havocing the contents of that location; that is, making sure any previous value of the variable has been forgotten.  The second case simply adds the permission to the permission mask.


\subsection{Relationship}
In the rest of this section, we show that the verification conditions (VCs) generated by Chalice are equivalent to those generated by $\SLE$.  We focus on the \inhale\ and \exhale\ commands, as these represent the semantics of the Chalice assertion language.  By showing the equivalence, we show that our model of \SLE\ is also a model for Chalice.

\newcommand{\wpSL}[1]{wp_{\mathrm{sl}}(#1)}
We write $\wpSL{C,A}$, to be the weakest pre-condition in \SLE\ of the formula $A$ with respect to the command $C$. We treat \inhale\ and \exhale\ as the multiplicative versions of \assume\ and \assert\ (see \S\ref{sec:assumeassert}), and thus have the following weakest pre-conditions:
\[
\begin{array}{l}
\wpSL {\exhale(A_1), A_2 } = A_1 * A_2
\qquad
\wpSL{\inhale(A_1), A_2 } = A_1 \wand A_2
\end{array}
\]

\newcommand{\equivvc}[1]{equiv(#1)}
\noindent Our core result is to show that both $\inhale$ and $\exhale$ have equivalent VCs in the two approaches.

\newcommand{\interp}[1]{\mathsf{interp}(#1)}

First we must extend the first order logic we are considering to additionally contain a proposition to represent separation logic assertions.
\begin{definition}[$\interp{A}$]\label{defn:interp} We extend the many sorted first order logic with an additional atomic proposition $\interp{A}$, which represents the interpretation of an arbitrary TPL formula in first order logic.
\[
\Env,\heapvar\mapsto \Heap, \permvar \mapsto \Perm \modelsFO \interp{A}
\quad \iff \quad  \Heap,\Perm,\Env \modelsISL A
\]
\end{definition}\vskip5 pt

Note, this definition is not required by Chalice, but it allows us to express our proof by induction on the structure of the formula, by providing a single logic in which we can describe both the Chalice VCs and the TPL judgements.

\begin{definition}[$\equivvc{C}$]
We define the VCs of a command as equivalent in both systems, $\equivvc{C}$, iff for every \SLE\ assertion $A$, we have
\[
\modelsFO \interp{\wpSL{C,A}}
	\iff \wpCh{C, \interp{A}}
\]
\end{definition}


The key to showing that our semantics for \SLE{} correctly reflects that of Chalice is to show that the VCs generated for the \inhale\ and \exhale\ commands are equivalent.  The \exhale\ is straightforward.
\begin{lemma}\label{lemma:exhale:vcequiv}
$\forall p.\;\equivvc{\exhale\;p}$
\begin{proof}
By induction on $p$.
\end{proof}
\end{lemma}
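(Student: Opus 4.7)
The proof is by structural induction on the Chalice formula $p$. For every such $p$ we must show $\modelsFO \interp{p * A} \iff \wpCh{\exhale(p), \interp{A}}$ uniformly in $A$, so the inductive hypothesis can be re-instantiated with different $A$'s in the inductive cases. A useful preliminary observation is that Chalice's exhale-wpCh only builds formulas via conjunction, implication and syntactic substitution into its post-condition argument, so any semantic equivalence we prove between two FO formulas may be freely substituted under any surrounding $\wpCh$.

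The two base cases are handled by direct unfolding. For $p \equiv B$, purity of $B$ lets us split $\Perm = \emptyset * \Perm$ with $B$ judged at the empty mask, so $\interp{B * A}$ collapses to $\interp{B} \land \interp{A}$; the definitions align $\interp{B}$ with $\toSO{B}$ via the expression translation. For $p \equiv \acc(E.f, \pi)$, the ``only-if'' direction takes any witnessing split $\Perm_1 * \Perm_2 = \Perm$ of $\acc(E.f, \pi) * A$ and uses Proposition~\ref{prop:weakening} to transfer $A$ from $\Perm_2$ to the larger remainder $\Perm$ with exactly $\pi$ subtracted at $(\semCE{E}{\Heap,\Env},f)$, which is precisely the mask appearing in the wpCh after the $\permvar$-update substitution; the ``if'' direction witnesses the split by this same pointwise-minimal $\Perm_1$.

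The $p_1 * p_2$ case turns on associativity of $*$: we rewrite $\wpCh{\exhale(p_1), \wpCh{\exhale(p_2), \interp{A}}}$ by the IH on $p_2$ (instantiated with $A$) into $\wpCh{\exhale(p_1), \interp{p_2 * A}}$, then by the IH on $p_1$ (instantiated with $p_2 * A$) into $\interp{p_1 * (p_2 * A)}$, which is \SLE-equivalent to $\interp{(p_1 * p_2) * A}$.

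The main obstacle is the $B \imp p'$ case, because Chalice's VC $(\toSO{B} \imp \wpCh{\exhale(p'),\interp{A}}) \land (\lnot\toSO{B} \imp \interp{A})$ performs a Boolean case split on $B$, whereas the \SLE-side formula $(B \imp p') * A$ is a separating conjunction of an implication. I would bridge this using Lemma~\ref{lemma:boolean} (pure implications behave like Boolean conditionals) to case-split on whether $B$ holds in the ambient environment: when $B$ is true, $(B \imp p') \isequiv p'$, so $(B \imp p') * A$ reduces to $p' * A$ and matches the first conjunct via the IH on $p'$; when $B$ is false, no minimal extension makes $B$ true so $B \imp p'$ is vacuously satisfied at any mask, and taking $\Perm_1 = \emptyset$ in the split together with Proposition~\ref{prop:weakening} shows $(B \imp p') * A \isequiv A$, matching the second conjunct. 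Purity of $B$ is essential here, since otherwise the truth value of $B$ would not be stable across the alternative mask-splits arising from $*$, and no uniform case split would be available.
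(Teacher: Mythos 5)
Your proof is correct and takes exactly the route the paper intends: the paper's own proof is just ``By induction on $p$,'' and your argument is the natural elaboration of that induction, correctly generalising over the postcondition $A$ (needed for the $p_1 * p_2$ case), using Proposition~\ref{prop:weakening} for the mask bookkeeping in the $\acc$ case, and using purity of $B$ (via Lemma~\ref{lemma:boolean}) to match Chalice's Boolean case split in the $B \imp p'$ case. Nothing essential is missing beyond the routine alignment of $\toSO{B}$/$\toSOE{E}$ with the \SLE{} semantics of $B$/$E$, which is at the same level of implicitness as the paper itself.
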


The proof for \inhale\ is more involved.  This depends on the inhaled formula being syntactically self-framing. \newcommand{\isframed}[1]{\texttt{assertFrm}(#1)}
\newcommand{\asssl}[1]{\texttt{assertTPL}(#1)}
We define two Boogie commands, $\asssl{A}$ to assert that a TPL assertion must hold at this point in the execution, and $\isframed{p}$ to assert that a Chalice formula will be framed if it is inhaled in the current state.
\begin{definition}\mbox{}
\begin{iteMize}{$\bullet$}
\item $\asssl{A} = \assert(\interp{A})$
\item $\isframed{p} = \asssl{\framed{p}}$
\end{iteMize}
\end{definition}

We lift the ability to curry and uncurry $\wand$ into the VC world.  This is required to allow us to prove that an inhale of a $*$ can be replaced by two inhales, as Chalice does.
\begin{lemma}\mbox{}\label{lemma:vc:rearrange}
\[
\wpCh{\isframed{p_1},\interp{p_1 \wand (p_2 \wand A)}}
\iff
\wpCh{\isframed{p_1},\interp{(p_1 * p_2) \wand A}}
\]
\end{lemma}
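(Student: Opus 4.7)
The plan is to reduce the weakest-precondition statement to a direct consequence of Lemma~\ref{lem:framed:approx}. First I would unfold the left-hand side by expanding the definitions in sequence: $\isframed{p_1} = \asssl{\framed{p_1}} = \assert(\interp{\framed{p_1}})$, and then applying the wp rule for $\assert$ from Figure~\ref{fig:wp:chalice}:
\[
\wpCh{\isframed{p_1},\phi} \;=\; \interp{\framed{p_1}} \land \phi\text{.}
\]
Doing this to both sides of the desired biconditional reduces the goal to
\[
\bigl(\interp{\framed{p_1}} \land \interp{p_1 \wand (p_2 \wand A)}\bigr) \iff \bigl(\interp{\framed{p_1}} \land \interp{(p_1 * p_2) \wand A}\bigr)\text{.}
\]

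Next I would fix an arbitrary interpretation $\Env, \heapvar \mapsto \Heap, \permvar \mapsto \Perm$ and use Definition~\ref{defn:interp} to translate the $\interp{\cdot}$ propositions into \SLE{} satisfaction judgements. The biconditional then becomes the claim that, in any state $(\Heap,\Perm,\Env)$ where $\Heap,\Perm,\Env\modelsISL \framed{p_1}$, we have
\[
\Heap,\Perm,\Env \modelsISL p_1 \wand (p_2 \wand A) \iff \Heap,\Perm,\Env \modelsISL (p_1 * p_2) \wand A\text{.}
\]
The forward direction is exactly Lemma~\ref{lem:framed:approx}(2) applied pointwise, and the backward direction is exactly Lemma~\ref{lem:framed:approx}(1) applied pointwise. (Recall that $\models_{\scriptscriptstyle\SLE}$ on the left of a turnstile is interpreted as set inclusion of models; so \emph{a fortiori} these pointwise consequences hold.)

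I don't anticipate any significant obstacle: the unfolding of $\wpCh{\isframed{p_1},\cdot}$ is purely by definition, the translation via $\interp{\cdot}$ is mechanical, and the remaining content is already packaged in Lemma~\ref{lem:framed:approx}. The only mild care is to ensure that both directions of Lemma~\ref{lem:framed:approx} are invoked, since the biconditional in the goal does not follow from either direction alone.
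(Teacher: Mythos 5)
Your proposal is correct and follows essentially the same route as the paper: both unfold $\isframed{p_1}$ to an assert of $\interp{\framed{p_1}}$, observe that the wp then carries $\interp{\framed{p_1}}$ as a conjunct, and discharge the two directions of the biconditional with the two parts of Lemma~\ref{lem:framed:approx} (your assignment of part (2) to the forward direction and part (1) to the backward direction is the right one). The only cosmetic difference is that the paper rewrites inside the $\interp{\cdot}$ proposition under the wp, whereas you unfold the wp first and argue pointwise via Definition~\ref{defn:interp}; these are interchangeable.
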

\begin{proof}
The left to right direction follows using Lemma~\ref{lem:framed:approx}:
\[
\begin{array}{@{\Rightarrow\;}l@{}}
\multicolumn{1}{@{}l@{}}{ \wpCh{\isframed{p_1}, \interp{p_1 \wand (p_2 \wand A)}}}
\\
\wpCh{\isframed{p_1},\interp{\framed{p_1}} \land \interp{p_1 \wand (p_2 \wand A)}}
\\
\wpCh{\isframed{p_1},\interp{\framed{p_1} \land (p_1 \wand (p_2 \wand A))}}
\\
\wpCh{\isframed{p_1},\interp{\framed{p_1} \land ((p_1 * p_2) \wand A)}}
\end{array}
\]
The reverse direction follows similarly.
\end{proof}

We want to show that if $p$ is syntactically self-framing, then $\inhale\ p$ is equivalent in both approaches.  However, we need to prove a stronger fact that accounts for the permissions we may have inhaled so far.  In particular, as $\inhale\ p_1*p_2$ is implemented by first inhaling $p_1$ and then $p_2$, when we consider inhaling $p_2$ it need not be self-framing.  However, the context will have inhaled sufficient permissions that it is framed in that context.  We prove that the VCs are equivalent in a context in which the inhale is framed.

 We consider states in which, if we extend the environment to satisfy $p$, then $p$ will be framed. In these states, asserting the formula $p \wand A$ and then inhaling $p$, is equivalent to inhaling $p$, and then asserting that $A$ must hold.
\begin{lemma}
\label{lemma:inhale:exchange}
\label{lemma:inhale:vcequiv}
\[
\begin{array}{l}
\wpCh{\isframed{p};\asssl{p \wand A}; \inhale\ p, \phi}
\\\qquad\quad{}\iff
\wpCh{\isframed{p};\inhale\ p; \asssl{A}, \phi}

\end{array}
\]
\end{lemma}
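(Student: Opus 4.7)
The plan is to prove this by induction on the structure of the Chalice formula $p$, unfolding $\wpCh{\inhale\ p, \cdot}$ and $\wpCh{\exhale\ p, \cdot}$ on both sides according to Figure~\ref{fig:wp:chalice}, and using the semantics of $\wand$ (Definition~\ref{defn:seplogictotalsemantics}) in conjunction with $\interp{\cdot}$ (Definition~\ref{defn:interp}) to relate the TPL assertion $p \wand A$ with Chalice's operational encoding of $\inhale\ p$. The role of $\isframed{p}$ is to carry the guarantee, through both sides, that the Chalice formula $p$ will be properly framed after inhaling; this is exactly what is needed to make the subformulas it contains (in particular heap-dependent expressions inside $E.f$) robust under the havoc introduced by $\inhale$, via Lemma~\ref{lem:sframed:basic}.

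The base case $p \equiv B$ is essentially trivial: $\wpCh{\inhale\ B, \phi} = \toSO{B} \imp \phi$, and since $B$ is pure, Lemma~\ref{lemma:boolean} yields $\interp{B \wand A} \iff (\interp{B} \imp \interp{A})$, so both $\wpCh{\cdot}$ computations reduce to the same formula. For $p \equiv B \imp p'$, one unfolds $\inhale(B \imp p')$ into a nondeterministic choice on $\toSO{B}$; in the $\toSO{B}$ branch, $(B \imp p') \wand A$ reduces semantically to $p' \wand A$ (using that $B$ is pure, $\framed{B}$ holds throughout, and Lemmas~\ref{lemma:boolean}, \ref{lem:sframed:basic}), and the induction hypothesis applied to $p'$ finishes the branch; in the $\lnot\toSO{B}$ branch both sides reduce to $\interp{A}$ directly since no state changes happen and $(B \imp p')$ is vacuously true.

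For the separating-conjunction case $p \equiv p_1 * p_2$, unfolding $\inhale(p_1 * p_2) = \inhale\ p_1; \inhale\ p_2$ and $\framed{p_1 * p_2} = \framed{p_1} \land (p_1 \wand \framed{p_2})$ lets us reduce the LHS via Lemma~\ref{lemma:vc:rearrange} (curry/uncurry) to a form with $\interp{p_1 \wand (p_2 \wand A)}$ before $\inhale\ p_1$. Applying the induction hypothesis for $p_1$ moves that assertion across $\inhale\ p_1$ to obtain $\interp{p_2 \wand A}$ between the two inhales, and a second application of the induction hypothesis for $p_2$ (using the framing information now implied by having inhaled $p_1$, via the $p_1 \wand \framed{p_2}$ part of $\isframed{p_1 * p_2}$) moves it across $\inhale\ p_2$ to give exactly the RHS.

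The genuine obstacle is the accessibility-predicate case $p \equiv \acc(E.f,\pi)$. Here the RHS VC branches on the two clauses of Chalice's $\inhale(\acc(E.f,\pi))$: one where $\permvar[E,f]=\none$ and the heap location is havoced after adding $\pi$, and one where $\none<\permvar[E,f]\leq \full-\pi$ and only permission is added. These two branches must be shown to correspond exactly to the universal quantification over $\localDisjExts{\Heap,\Perm,\Env}$ in the definition of $\Heap,\Perm,\Env \modelsISL \acc(E.f,\pi) \wand A$: the no-prior-permission branch realises the case in which the extension makes $(E,f)$ newly readable (so $\Heap' \agrees{\rds{\Perm}\cup\overline{\rds{\Perm'}}} \Heap$ does not constrain the value at $(E,f)$), and the prior-permission branch realises the case where $(E,f)$ was already readable and so its value is preserved. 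The assumption $\isframed{\acc(E.f,\pi)} = \isframed{\framed{E}}$ lets us appeal to Lemma~\ref{lem:sframed:basic}(1) to ensure that the receiver object $\semCE{E}{\Heap,\Env}$ is determined independently of the havoced values, so that the $(\iota,f)$ location targeted in the Chalice update coincides with the one on which the TPL semantics of $\wand$ quantifies. Having matched up the two cases in this way, the equivalence of the two VC expressions reduces to the defining clauses of $\assume$/$\assert$ and the semantics of $\wand$, giving the desired biconditional.
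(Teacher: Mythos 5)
Your proposal is correct and follows essentially the same route as the paper's proof: induction on $p$, with the separating-conjunction case handled by currying via Lemma~\ref{lemma:vc:rearrange} and two applications of the induction hypothesis, the implication case by a case split on $\toSO{B}$, and the accessibility-predicate case by matching Chalice's two inhale branches against the quantification in the $\wand$ semantics (the paper additionally notes a third, trivially-true branch when more than $\full-\pi$ permission is already held, which your sketch leaves implicit). The only stray detail is the mention of $\exhale$ in your opening paragraph, which plays no role in this lemma.
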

\proof
We abbreviate $\isframed{p}$ to
\renewcommand{\isframed}[1]{\texttt{assFrm}(#1)}
$\isframed{p}$, and $\asssl{A}$ to
\renewcommand{\asssl}[1]{\texttt{assTPL}(#1)}
$\asssl{A}$.

By induction on $p$. We first consider the $*$ case:
\[
\begin{array}{@{\iff}l@{}}
\multicolumn{1}{@{}l@{}}{
  \wpCh{\isframed{p_1*p_2};\asssl{(p_1 * p_2) \wand A}; \inhale\ p_1*p_2, \phi}
}
\\
  \wpCh{\isframed{p_1 * p_2};\isframed{p_1};\asssl{(p_1 * p_2) \wand A};
	      \inhale\ p_1* p_2, \phi}
\\
  \wpCh{\isframed{p_1*p_2};\isframed{p_1};\asssl{p_1 \wand p_2 \wand A};
	      \inhale\ p_1; \inhale\ p_2, \phi}
\\
  \wpCh{\isframed{p_1}; \asssl{p_1\wand \framed{p_2}};
	      \inhale\ p_1; \asssl{p_2 \wand A};\inhale\ p_2, \phi}
\\
  \wpCh{\isframed{p_1};
	      \inhale\ p_1; \isframed{p_2};\asssl{p_2 \wand A};\inhale\ p_2, \phi}
\\
  \wpCh{\isframed{p_1};
	      \inhale\ p_1; \isframed{p_2};\inhale\ p_2;\asssl{A}, \phi}
\\
  \wpCh{\isframed{p_1}; \asssl{p_1\wand \framed{p_2}};
	      \inhale\ p_1; \inhale\ p_2;\asssl{A}, \phi}
\\
\wpCh{\isframed{p_1*p_2}; \inhale\ p_1 * p_2; \asssl{A}, \phi}
\\
\end{array}
\]

For the access permission case, we can subdivide this into three further cases, (1) where the model contains no permission for $E.f$; (2) where the model contains more than $\none$, and less than or equal to $\full - \pi$ permission $E.f$; and (3) where the model contains more that $\full - \pi$ permission for $E.f$.  The third case is trivial, so we just present the first two.  First we consider the case $ \permvar[\toSOE{E}.f]=\none$ :
\[
\begin{array}{@{\iff}l@{}}
\multicolumn{1}{@{}l@{}}{
  \wpCh{\asssl{\acc(E.f,\pi) \wand A};
		\inhale\ \acc(E.f,\pi),
             \phi}}
\\
  \wpCh{
		\asssl{\acc(E.f,\pi) \wand A};
		\permvar[\toSOE{E},f] = \pi; \havoc\ \heapvar[\toSOE{E},f],
             \phi}
\\
\wpCh{\inhale\ \acc(E.f,\pi); \asssl{A}, \phi}
\\
\end{array}
\]
Second, we consider the case: $0<\permvar[\toSOE{E}.f]\leq 1-\pi$,
\[
\begin{array}{@{\iff}l@{}}
\multicolumn{1}{@{}l@{}}{
  \wpCh{\asssl{\acc(E.f,\pi) \wand A};
		\inhale\ \acc(E.f,\pi),
             \phi}}
\\
  \wpCh{
		\asssl{\acc(E.f,\pi) \wand A};
		\permvar[\toSOE{E},f] {{+}{=}} \pi,
             \phi}
\\
\wpCh{\inhale\ \acc(E.f,\pi); \asssl{A}, \phi}
\\
\end{array}
\]

Finally, we present the implication case.  We split this into two cases depending on whether the left of the implication holds.  First we assume $\toSO{B}$ holds in the current model:
\[
\begin{array}{@{\Leftrightarrow\;}l@{}}
\multicolumn{1}{@{}l}{\wpCh{\isframed{B\imp p}; \asssl{(B\imp p) \wand A}; \inhale(B \imp p),\phi}}\\

\wpCh{\isframed{B\imp p}; \asssl{ p \wand A}; \inhale\ p,\phi}\\

\wpCh{\isframed{B\imp p};\asssl{B \imp \framed{p}}; \asssl{p \wand A}; \inhale\ p,\phi}\\

\wpCh{\isframed{B \imp p};\isframed{p}; \asssl{p \wand A}; \inhale(p),\phi}\\

\wpCh{\isframed{B \imp p};\isframed{p};  \inhale(p); \asssl{A},\phi}\\

\wpCh{\isframed{B \imp p};  \inhale(p); \asssl{A},\phi}\\

\wpCh{\isframed{B \imp p};  \inhale(B \imp p); \asssl{A},\phi}\\

\end{array}
\]
and secondly, we assume $\toSO{B}$ does not hold in the model:
\[
\begin{array}{@{\Leftrightarrow\;}l@{}}
\multicolumn{1}{@{}l}{\wpCh{\isframed{B\imp p}; \asssl{(B\imp p) \wand A}; \inhale(B \imp p),\phi}}\\

\wpCh{\isframed{B\imp p}; \asssl{\textsf{true} \wand A}; \inhale(\textsf{true}),\phi}\\

\wpCh{\isframed{B\imp p}; \asssl{\textsf{true} \wand A},\phi}\\

\wpCh{\isframed{B\imp p}; \asssl{A},\phi}\\

\wpCh{\isframed{B\imp p}; \inhale(B \imp p); \asssl{A},\phi}
\rlap{\hbox to 105 pt{\hfill\qEd}}\\
\end{array}
\]

\begin{corollary}\label{corollary:inhale}
If $p$ is syntactically self-framing, then
$\equivvc{\inhale\;p}$.
\end{corollary}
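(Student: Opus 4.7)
The plan is to derive the corollary directly from Lemma~\ref{lemma:inhale:exchange} by specialising its postcondition to $\mathsf{True}$ and then using the syntactic self-framing assumption to discard the $\isframed{p}$ prefix on both sides.

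Unfolding definitions, I need to show that $\interp{p \wand A}$ is first-order equivalent to $\wpCh{\inhale\;p,\interp{A}}$, since $\wpSL{\inhale\;p, A} = p\wand A$. The idea is to instantiate Lemma~\ref{lemma:inhale:exchange} with $\phi = \mathsf{True}$, which gives
\[
\wpCh{\isframed{p};\asssl{p \wand A}; \inhale\;p,\;\mathsf{True}}
\;\iff\;
\wpCh{\isframed{p};\inhale\;p; \asssl{A},\;\mathsf{True}}\text{.}
\]
Computing weakest pre-conditions from right to left on the LHS, $\wpCh{\inhale\;p,\mathsf{True}} = \mathsf{True}$ (since $\mathsf{True}$ is preserved under havoc and assignment substitutions), then $\wpCh{\asssl{p\wand A}, \mathsf{True}} = \interp{p\wand A}$, and finally $\wpCh{\isframed{p}, \interp{p\wand A}} = \interp{\framed{p}} \land \interp{p \wand A}$. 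Analogously, the RHS simplifies to $\interp{\framed{p}} \land \wpCh{\inhale\;p,\interp{A}}$. Thus the lemma instance reduces to
\[
\interp{\framed{p}} \land \interp{p \wand A}
\;\iff\;
\interp{\framed{p}} \land \wpCh{\inhale\;p,\interp{A}}\text{.}
\]

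Since $p$ is syntactically self-framing, $\modelsISL \framed{p}$, so by Definition~\ref{defn:interp}, $\interp{\framed{p}}$ is a valid first-order formula. Hence the conjunct $\interp{\framed{p}}$ can be dropped from both sides without changing the set of models, yielding precisely $\interp{p \wand A} \iff \wpCh{\inhale\;p,\interp{A}}$, which is exactly $\equivvc{\inhale\;p}$.

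I do not anticipate any significant obstacle: all the heavy lifting (in particular, the currying/uncurrying via Lemma~\ref{lemma:vc:rearrange} and the case analysis on the structure of $p$) was already carried out in the proof of Lemma~\ref{lemma:inhale:exchange}. The only things to be careful about are (i) checking that $\wpCh{\inhale\;p,\mathsf{True}}$ really reduces to $\mathsf{True}$ (which follows by structural induction on $p$ from the definitions in Figure~\ref{fig:wp:chalice}, since each clause only substitutes into the postcondition), and (ii) noting that validity of $\interp{\framed{p}}$ justifies the removal of that conjunct in the final step. Neither of these should require more than a line to state.
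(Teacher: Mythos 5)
Your proposal is correct and follows essentially the same route as the paper's own proof: instantiate Lemma~\ref{lemma:inhale:exchange} at $\phi=\mathsf{True}$, simplify using $\wpCh{\inhale\ p,\mathsf{True}}\equiv\mathsf{True}$ and the $\assert$ clause of the weakest pre-condition calculus, and then discharge the $\isframed{p}$ prefix using the validity of $\interp{\framed{p}}$ given by syntactic self-framing. The only cosmetic difference is that you unfold $\wpCh{\isframed{p},\cdot}$ explicitly into the conjunct $\interp{\framed{p}}\land\cdot$ before dropping it, where the paper leaves it in command form.
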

\begin{proof}
By the previous lemma, we know:
\[
\begin{array}{l}
\wpCh{\isframed{p};\asssl{p \wand A}; \inhale\ p, \textsf{true}}
\\\qquad\iff
\wpCh{\isframed{p}; \inhale\ p; \asssl{A}, \textsf{true}}
\end{array}
\]
As $\wpCh{\inhale\ p,\textsf{true}}\equiv\textsf{true}$ and $\wpCh{\asssl{A},\textsf{true}} \equiv \interp{A}$, we know
\[
\wpCh{\isframed{p}, \interp{p \wand A}}
\iff
\wpCh{\isframed{p}; \inhale\ p, \interp{A}}
\]
As $p$ is syntactically self-framing we have
\[
\interp{p \wand A}
\iff
\wpCh{\inhale\ p, \interp{A}}
\]
By the definition of separation logic weakest pre-conditions, we have
\[
\interp{\wpSL{\inhale\ p, A}}
\iff
\wpCh{\inhale\ p, \interp{A}}
\]
as required.
\end{proof}

\begin{remark}
Without the syntactic self-framing requirement on inhales, it would be unsound to break $\inhale\ A _1* A_2$ into $\inhale\ A_1;\inhale\ A_2$.  In particular, in the Chalice semantics, the behaviour of
$
\inhale(A_1 * A_2)
$
and
$
\inhale(A_2 * A_1)
$
are different.  For instance, consider $\inhale(x.f = 3 * \acc(x.f))$ and $\inhale(\acc(x.f) * x.f = 3)$.
\[
\begin{array}{l}
\wpCh{\inhale(x.f = 3 * \acc(x.f)), \toSO{x.f=3}}
\iff
x.f \neq 3
\\
\wpCh{\inhale(\acc(x.f) * x.f = 3), \toSO{x.f=3}}
\iff
\mathsf{true}
\end{array}
\]
The translation given by Smans \emph{et al.}~\cite{Smans'09} does not suffer this problem as it does the analogue of $\inhale$ in a single step. However, it checks self-framing in a similar way, and thus would also rule out the first \inhale.
\end{remark}

\section{Mapping Separation Logic into Implicit Dynamic Frames}\label{sec:mapping}
We are now in a position to draw together our various results, and show that SL-based verification can be simulated using IDF and Chalice. The overall approach is to show that, if one calculates weakest pre-conditions for a program using SL specifications, then there is a corresponding translated program in which one uses IDF specifications, and can calculate Chalice weakest pre-conditions which turn out to be equivalent to the SL ones. Just as in Section \ref{sec:tsl}, we can use the projection of a total heap down to a permissions mask, to relate the evaluation of the two resulting assertions.

Just as in the preceding section, we focus our attention on inhale and exhale statements, since all commands which deal with changes to the footprint/permissions held in the state (e.g., method calls, fork/join of threads, acquire/release of locks) can be de-sugared down to these (other commands such as variable assignment can be treated uniformly in both worlds). Therefore, we aim to prove that the two different ways of calculating weakest pre-conditions produce equivalent results for both inhale and exhale statements.

Firstly, we state a few simple results which distribute equivalent assertions over various constructions.
\begin{lemma}[Distributing Equivalences]\label{lemma:distribute}
For all \SLE{} assertions $A_1$ and $A_2$ such that $A_1\isequiv A_2$, we have: \begin{enumerate}[\em(1)]
\item\label{lemma:distribute:parti} For all \SLE{} assertions $A_3$: $\wpSL{\inhale(A_1), A_3 } \isequiv \wpSL{\inhale(A_2), A_3 }$ and similarly $\wpSL{\exhale(A_1), A_3 } \isequiv \wpSL{\exhale(A_2), A_3 }$.
\item\label{lemma:distribute:partii} The first-order assertions $\interp{A_1}$ and $\interp{A_2}$ are equivalent.
\item\label{lemma:distribute:partiv} For all \SLE{} assertions $A_3$, the first-order assertions $\wpCh{\inhale(A_3),\interp{A_1}}$ and $\wpCh{\inhale(A_3),\interp{A_2}}$ are equivalent (and analogously for $\exhale(A_3)$).
\item\label{lemma:distribute:partiii} For all \SLE{} assertions $A_3$, the first-order assertion $\interp{\wpSL{\inhale(A_1), A_3 }}$ is equivalent to $\interp{\wpSL{\inhale(A_2), A_3 }}$. Similarly $\interp{\wpSL{\exhale(A_1), A_3 }}$ is equivalent to $\interp{\wpSL{\exhale(A_2), A_3 }}$.
\end{enumerate}
\begin{proof}
The first three parts follow straightforwardly from the corresponding definitions (and the fact that the definitions for weakest pre-conditions never inspect the post-condition). The fourth part is simply a combination of the first two.
\end{proof}
\end{lemma}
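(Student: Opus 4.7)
My plan is to handle the four parts largely independently, observing that each reduces to a straightforward semantic or syntactic observation once the appropriate definition is unfolded.

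For part~(\ref{lemma:distribute:parti}), I would just unfold the weakest pre-condition definitions: $\wpSL{\inhale(A_i),A_3} = A_i \wand A_3$ and $\wpSL{\exhale(A_i),A_3} = A_i * A_3$. It then suffices to show $A_1 \wand A_3 \isequiv A_2 \wand A_3$ and $A_1 * A_3 \isequiv A_2 * A_3$. Both follow directly from the hypothesis $A_1 \isequiv A_2$ by inspection of the clauses for $*$ and $\wand$ in Definition~\ref{defn:seplogictotalsemantics}: in each clause the left operand $A_i$ appears only under the relation $\modelsISL$ (and inside the $\mathit{framed}$ predicate, which is itself defined through $\modelsISL$), so replacing $A_1$ by a semantically equivalent $A_2$ preserves the truth of the whole formula. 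Part~(\ref{lemma:distribute:partii}) is even more direct: by Definition~\ref{defn:interp}, $\interp{A_i}$ holds in a first-order model encoding $(\Heap,\Perm,\Env)$ iff $\Heap,\Perm,\Env \modelsISL A_i$, so $A_1\isequiv A_2$ gives that $\interp{A_1}$ and $\interp{A_2}$ hold in exactly the same first-order models.

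For part~(\ref{lemma:distribute:partiv}) I would proceed by induction on the syntactic structure of the inhale/exhale command, inspecting the clauses of Figure~\ref{fig:wp:chalice}. The key observation (already noted in the excerpt) is that $\wpCh{C,\phi}$ uses the post-condition $\phi$ only in the base cases $\assume/\assert$, where it appears as a conjunct or antecedent, and the other cases simply propagate $\phi$ through substitutions, conjunctions, and implications. Hence by induction, if $\phi_1$ and $\phi_2$ are equivalent first-order formulas, then $\wpCh{C,\phi_1}$ and $\wpCh{C,\phi_2}$ are equivalent; applying this with $\phi_i = \interp{A_i}$ and using part~(\ref{lemma:distribute:partii}) yields the required equivalence. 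Finally, part~(\ref{lemma:distribute:partiii}) is an immediate composition: by part~(\ref{lemma:distribute:parti}) we have $\wpSL{\inhale(A_1),A_3}\isequiv \wpSL{\inhale(A_2),A_3}$ (and similarly for $\exhale$), and then applying part~(\ref{lemma:distribute:partii}) to these two equivalent \SLE{} assertions gives the equivalence of their $\interp{\cdot}$ translations.

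There is really no hard step here; the only thing that requires any care is confirming in part~(\ref{lemma:distribute:parti}) that the clauses for $\wand$ and $*$ genuinely depend on $A_1$ only through $\modelsISL A_1$ (which they do, since both the $\mathit{framed}$ relation used in the $\wand$ clause and the quantification in the $*$ clause are defined purely via $\modelsISL$), and in part~(\ref{lemma:distribute:partiv}) that the post-condition really is propagated unchanged through the wp rules for $\inhale$ and $\exhale$. A quick scan of Figure~\ref{fig:wp:chalice} confirms the latter.
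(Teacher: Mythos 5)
Your proposal is correct and follows essentially the same route as the paper's (very terse) proof: parts (1)--(3) by unfolding the definitions of $\wpSL{\cdot}$, $\interp{\cdot}$ and $\wpCh{\cdot}$ (noting that the post-condition is only propagated, never inspected), and part (4) by composing the first two. The only nit is that the predicate appearing on the left of the $\wand$ clause is the minimal-permission-extension relation $\lhd$ rather than the $\mathit{framed}$ relation, but your observation that it is defined purely via $\modelsISL$ and hence respects $\isequiv$ is exactly what is needed.
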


We now need to identify the fragment of separation logic which can be encoded into the syntax of Chalice (cf. Definition \ref{defn:chalicesyntax}). This syntax roughly corresponds with the syntaxes supported by most separation-logic-based tools (in particular, no points-to predicates are permitted on the left of implications; a very common restriction which avoids needing to implement the full technical complexity of the connective; cf. Lemma \ref{lemma:boolean}). In order to avoid introducing further meta-variables to our notation, we will reuse the notation for full separation logic ($a$ for assertions, $e$ for expressions), but will clarify explicitly when we mean the restricted assertion syntax defined here.
\begin{definition}[Restricted Separation Logic]
Expressions $e$, boolean expressions $b$ and \emph{restricted separation logic assertions} $a$ are given by the following syntax definitions:
\[
  \begin{array}{rcl}
   e &\ ::=\ & x \mid \Null \mid n\\
   b & ::= & e=e \mid e \neq e \mid b * b \\
   a & ::= & b \mid \pointsto{e.f}{`p}{e} \mid a * a \mid  b \imp a
          {}\mid  \exists v.\; \pointsto{e.f}{`p}{v} * a \\
  \end{array}
\]\smallskip
\end{definition}

\noindent We allow a restricted form of existential in the syntax.  It requires that the existential is witnessed by a particular field in the heap.  This restriction is often implicit in tools for separation logic that support existentials. Without this restriction tools are typically incomplete.

We can represent the separation logic points-to predicate in terms of the Chalice accessibility predicate and a (heap-dependent) equality.
\begin{proposition}\label{prop:mapping}
For all $e$,$f$,$e'$,$`p$ we have $\pointsto{e.f}{`p}{e'} \isequiv \acc(e.f,`p) * e.f=e'$.
\begin{proof}
Directly from the semantics.
\end{proof}
\end{proposition}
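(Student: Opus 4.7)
The proof plan is a direct semantic unfolding, appealing to Definition~\ref{defn:seplogictotalsemantics}. Fix an arbitrary state $(\Heap,\Perm,\Env)$ and compare the truth conditions of both sides. The left-hand side $\Heap,\Perm,\Env\modelsISL\pointsto{e.f}{`p}{e'}$ unfolds to the conjunction $\Perm[\semCE{e}{\Env,\Heap},f]\geq`p \;\land\; \Heap[\semCE{e}{\Env,\Heap},f]=\semCE{e'}{\Env,\Heap}$. The right-hand side, after unfolding the separating conjunction, becomes the existence of $\Perm_1,\Perm_2$ with $\Perm=\Perm_1*\Perm_2$ such that $\Perm_1[\semCE{e}{\Env,\Heap},f]\geq`p$ and $\Heap[\semCE{e}{\Env,\Heap},f]=\semCE{e'}{\Env,\Heap}$ (the equality clause is evaluated under $\Perm_2$ but, being a pure heap-dependent equality, it does not actually inspect $\Perm_2$; and since $e,e'$ are separation logic expressions, by Lemma~\ref{lemma:expressions} their denotations agree in both worlds).

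For the $(\Leftarrow)$ direction, given such a split $\Perm = \Perm_1*\Perm_2$, monotonicity of permission addition yields $\Perm[\semCE{e}{\Env,\Heap},f] \geq \Perm_1[\semCE{e}{\Env,\Heap},f] \geq `p$, and the heap equality transfers unchanged, giving the points-to predicate.

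For the $(\Rightarrow)$ direction, take $\iota = \semCE{e}{\Env,\Heap}$, and define $\Perm_1$ to assign exactly $`p$ at $(\iota,f)$ and $\none$ everywhere else; set $\Perm_2 = \Perm - \Perm_1$, which is well-defined precisely because $\Perm[\iota,f]\geq`p$. Then $\Perm_1*\Perm_2=\Perm$ by construction, $\Perm_1$ witnesses $\acc(e.f,`p)$, and the heap equality holds independently of the split, giving the right-hand side.

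The only step worth a second thought is verifying that the equality clause's evaluation under $\Perm_2$ rather than $\Perm$ is benign; this is immediate since $e=e'$ is pure and $\semCE{\cdot}{\Env,\Heap}$ does not consult the permission mask. No obstacles beyond this bookkeeping arise, which is why the author's one-line justification ``directly from the semantics'' suffices.
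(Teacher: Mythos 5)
Your proposal is correct and is exactly the unfolding the paper intends by ``directly from the semantics'': both sides reduce to $\Perm[\semCE{e}{\Env,\Heap},f]\geq`p \land \Heap[\semCE{e}{\Env,\Heap},f]=\semCE{e'}{\Env,\Heap}$, with the permission split handled by the singleton mask $\Perm_1$ and the observation that the equality conjunct never consults its permission mask. (Minor slip: in your last paragraph you write ``$e=e'$ is pure'' where the conjunct is $e.f=e'$, but your first-paragraph unfolding already treats $e.f$ correctly, and heap-dependent equalities are indeed pure in the paper's sense.)
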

Thus, we define the obvious translation from restricted separation logic assertions to those of Chalice:
\begin{definition}[Mapping Restricted Separation Logic to Chalice]
We define a mapping \SLtoC{a} from restricted separation logic assertions to Chalice assertions (cf. Definition \ref{defn:chalicesyntax}), recursively as follows:
\[\begin{array}{rcl}
\SLtoC{b} &=& b\\
\SLtoC{\pointsto{e_1.f}{`p}{e_2}} &=& (\acc(e_1.f,`p) * e_1.f = e_2)\\
\SLtoC{a_1 * a_2} &=& \SLtoC{a_1} * \SLtoC{a_2}\\
\SLtoC{b \imp a} &=& b \imp \SLtoC{a}\\
\SLtoC{\exists v.\; \pointsto{e.f}{`p}{v} * a }
	&=& \acc(e.f,`p) * \SLtoC{a}[e.f/v]\\
\end{array}\]
\end{definition}
As the existential is witnessed by a particular heap location, in the translation to Chalice the existential can be eliminated by substituting the heap dependent expression.
The translation preserves the semantics of the original assertion, which is a simple generalisation of Proposition \ref{prop:mapping}:
\begin{lemma}[Mapping to Chalice Preserves Semantics]\label{lemma:fullmapping}
For all restricted separation logic assertions $a$, we have $a\isequiv \SLtoC{a}$.
\begin{proof}
By straightforward induction on the structure of $a$, using Definition \ref{defn:seplogictotalsemantics} and Proposition \ref{prop:mapping}.
The existential case uses Lemma~\ref{substitution}.
\end{proof}
\end{lemma}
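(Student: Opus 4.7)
The plan is to prove $a \isequiv \SLtoC{a}$ by structural induction on the restricted separation logic assertion $a$. The base cases are immediate: for a pure boolean $b$ the translation is the identity, and the points-to case $\pointsto{e_1.f}{\pi}{e_2}$ is exactly Proposition~\ref{prop:mapping}.

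For the inductive cases of separating conjunction $a_1 * a_2$ and implication $b \imp a'$, I would invoke the inductive hypothesis and then argue that $\isequiv$ is a congruence with respect to these connectives. For $*$ this is immediate from the compositional semantics in Definition~\ref{defn:seplogictotalsemantics}, which only quantifies over splittings of the permission mask (the heap and environment are unchanged). For $b \imp a'$, since the left-hand side $b$ is pure, I would apply Lemma~\ref{lemma:boolean} to reduce the semantics of the implication to a simple boolean conditional, after which the congruence is immediate from the inductive hypothesis applied to $a'$.

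The subtle case, and the main obstacle, is the restricted existential $\exists v.\; \pointsto{e.f}{\pi}{v} * a'$. Here I would first use Proposition~\ref{prop:mapping} to rewrite $\pointsto{e.f}{\pi}{v}$ as $\acc(e.f,\pi) * e.f = v$. Using commutativity and associativity of $*$, together with the fact that $v$ is not free in $\acc(e.f,\pi)$ or in $e$, I would pull $\acc(e.f,\pi)$ out of the scope of the existential, arriving at $\acc(e.f,\pi) * \exists v.\; (v = e.f * a')$ (rewriting $e.f = v$ as $v = e.f$ via symmetry of equality, which holds directly from the semantics of $=$). Applying the inductive hypothesis inside yields $\acc(e.f,\pi) * \exists v.\; (v = e.f * \SLtoC{a'})$, and Lemma~\ref{substitution} then collapses the existential to $\acc(e.f,\pi) * \SLtoC{a'}[e.f/v]$, which is exactly $\SLtoC{\exists v.\; \pointsto{e.f}{\pi}{v} * a'}$.

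The delicate point is the application of Lemma~\ref{substitution}, which requires $\SLtoC{a'}$ to be substitutable (only pure subformulas under any $\wand$ or $\imp$). I would discharge this by a short auxiliary induction showing that the translation always produces substitutable Chalice assertions; this uses the syntactic restriction that Chalice implications are of the form $B \imp p$ with a pure boolean $B$, and the fact that $\SLtoC{\cdot}$ introduces no $\wand$ at all. The other minor bookkeeping step is the formal justification that $\isequiv$ respects both $*$ and existential quantification, so that the IH can be applied underneath these binders; this is routine from the compositional clauses of Definition~\ref{defn:seplogictotalsemantics}, analogous to Lemma~\ref{lemma:distribute}.
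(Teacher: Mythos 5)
Your proof is correct and follows essentially the same route as the paper's: structural induction using Proposition~\ref{prop:mapping} for the points-to case and Lemma~\ref{substitution} to collapse the restricted existential. You supply one detail the paper leaves implicit—that $\SLtoC{a'}$ is substitutable (since the translation introduces no $\wand$ and only pure antecedents of $\imp$), which is indeed needed to invoke Lemma~\ref{substitution}.
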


We can now combine the results of this section to relate the two different notions of weakest pre-conditions in combination with the appropriate translation from one assertion syntax to the other.
\begin{theorem}[Weakest Pre-condition Calculations Equivalent]\label{theorem:wpsequiv}
For any restricted SL assertion $a$, and any \SLE{} assertion $A$, we have:
\[\begin{array}{rcl} \modelsFO \interp{\wpSL{\inhale(a),A}} &\iff& \wpCh{\inhale(\SLtoC{a}), \interp{A}}\\
&\textit{and}\\
\modelsFO \interp{\wpSL{\exhale(a),A}} &\iff& \wpCh{\exhale(\SLtoC{a}), \interp{A}}\\\end{array} \]
\begin{proof}
Consider the first case of the result (for inhale statements). By Lemma \ref{lemma:distribute}(\ref{lemma:distribute:partiii}) and Lemma \ref{lemma:fullmapping}, we have that:
\[\modelsFO \interp{\wpSL{\inhale(a),A}} \iff \interp{\wpSL{\inhale(\SLtoC{a}),A}} \]
By Corollary \ref{corollary:inhale} (noting that $\SLtoC{a}$ is a syntactically self-framing Chalice assertion), we also know that:
\[ \modelsFO \interp{\wpSL{\inhale(\SLtoC{a}),A}} \iff \wpCh{\inhale(\SLtoC{a}), \interp{A}} \]
Combining these two lines, we have the claimed result.

The case for exhale statements is analogous, using Lemma \ref{lemma:exhale:vcequiv} instead of Corollary \ref{corollary:inhale}.
\end{proof}
\end{theorem}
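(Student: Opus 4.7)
The plan is to handle both the inhale and exhale equivalences in parallel, by a common two-step reduction. The first step replaces the restricted-SL assertion $a$ inside the weakest pre-condition with its Chalice translation $\SLtoC{a}$, and the second step bridges the separation-logic style weakest pre-condition with Chalice's weakest pre-condition. Each step is already packaged in a lemma from the earlier part of the paper; the work lies in chaining them correctly and verifying the side conditions for the inhale case.

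For the first step I would use Lemma~\ref{lemma:fullmapping}, which gives $a \isequiv \SLtoC{a}$, together with Lemma~\ref{lemma:distribute}(\ref{lemma:distribute:partiii}): equivalent assertions behind an inhale (resp.\ exhale) produce wp-formulas whose interpretations $\interp{\cdot}$ are first-order equivalent. This yields
\[
\interp{\wpSL{\inhale(a),A}} \ \iff\ \interp{\wpSL{\inhale(\SLtoC{a}),A}}
\]
and analogously for exhale.

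For the second step I would invoke the VC-equivalence results of the preceding section. The exhale case is immediate: Lemma~\ref{lemma:exhale:vcequiv} asserts $\equivvc{\exhale\ p}$ for every Chalice formula $p$, so unfolding the definition of $\equivvc{\cdot}$ with $p := \SLtoC{a}$ gives the required equivalence with $\wpCh{\exhale(\SLtoC{a}),\interp{A}}$. For the inhale case, Corollary~\ref{corollary:inhale} gives the analogous equivalence, but only under the hypothesis that $\SLtoC{a}$ is syntactically self-framing. I would therefore record a separate lemma, proved by routine induction on $a$, stating that $\modelsISL \framed{\SLtoC{a}}$ for every restricted-SL $a$: the points-to case produces $\acc(e.f,\pi)*e.f=e'$ in which the leading $\acc$ frames the following equality; the $*$ case composes inductively (note that Chalice's $\framed{\cdot}$ is left-to-right, and the translation places permissions before their uses by construction); the boolean-implication case reduces to self-framedness of the pure boolean guard together with the inductive hypothesis; and the existential case is rewritten to $\acc(e.f,\pi)*\SLtoC{a}[e.f/v]$, where the leading $\acc(e.f,\pi)$ provides permission for every copy of $e.f$ introduced by the substitution.

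The main obstacle is the existential case of this side lemma. Substituting the heap-dependent expression $e.f$ for the logical variable $v$ turns occurrences of $v$ into heap dereferences, and one must check that Chalice's asymmetric, left-to-right $\framed{\cdot}$ predicate still succeeds. The fact that the translation places $\acc(e.f,\pi)$ as the leftmost conjunct ensures that permission to $e.f$ is always in scope when a substituted $e.f$ is subsequently encountered, so the check does go through; a careful structural argument (and, for the substitution itself, an appeal to Lemma~\ref{substitution}) closes the case. Once syntactic self-framing of $\SLtoC{a}$ is established, composing the two equivalences gives the theorem for both inhale and exhale simultaneously.
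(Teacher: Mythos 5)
Your proof follows essentially the same two-step reduction as the paper's: Lemma~\ref{lemma:fullmapping} combined with Lemma~\ref{lemma:distribute}(\ref{lemma:distribute:partiii}) to replace $a$ by $\SLtoC{a}$ inside the SL weakest pre-condition, then Corollary~\ref{corollary:inhale} (for inhale) and Lemma~\ref{lemma:exhale:vcequiv} (for exhale) to bridge to the Chalice weakest pre-condition. Your explicit side lemma that $\SLtoC{a}$ is syntactically self-framing — with the careful treatment of the existential case, where the substituted occurrences of $e.f$ are framed by the leading $\acc(e.f,\pi)$ under the left-to-right $\framed{\cdot}$ check — is exactly the condition the paper invokes only in a parenthetical remark, so your argument is if anything more complete.
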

Finally, we can draw together these results with the main result of Section \ref{sec:tsl}, to show the equivalence of the two overall approaches.
\begin{corollary}[Verifying Restricted Separation Logic in Chalice]
For any restricted SL assertion $a$, and any (unrestricted) SL assertion $a'$, and any environment $\Env$, total heap $\Heap$, and permission mask $\Perm$, we have both:
\[\begin{array}{rcl}
(\restrict{\Heap}{\Perm}), \Env &\modelsSL& \wpSL{\exhale(a),a'} \\
&\Leftrightarrow& \\
 \Env,\heapvar\mapsto \Heap, \permvar \mapsto \Perm &\modelsFO& \wpCh{\exhale(\SLtoC{a}), \interp{\SLtoC{a'}}}\\
 \end{array}
\] and: \[
\begin{array}{rcl}
(\restrict{\Heap}{\Perm}), \Env &\modelsSL& \wpSL{\inhale(a),a'} \\
&\Leftrightarrow& \\\
 \Env,\heapvar\mapsto \Heap, \permvar \mapsto \Perm &\modelsFO& \wpCh{\inhale(\SLtoC{a}), \interp{\SLtoC{a'}}}\\
\end{array} \]

\begin{proof}
By Theorem \ref{theorem:wpsequiv} and Definition \ref{defn:interp}, we have:
\[\begin{array}{c} \Heap,\Perm,\Env \modelsISL \wpSL{\exhale(a),a'} \ \Leftrightarrow\ \Env,\heapvar\mapsto \Heap, \permvar \mapsto \Perm \modelsFO \wpCh{\exhale(\SLtoC{a}), \interp{a'}}\\
\textit{and}\\
\Heap,\Perm,\Env \modelsISL \wpSL{\inhale(a),a'} \ \Leftrightarrow\ \Env,\heapvar\mapsto \Heap, \permvar \mapsto \Perm \modelsFO \wpCh{\inhale(\SLtoC{a}), \interp{a'}}\\\end{array} \]

By Lemmas \ref{lemma:fullmapping} and \ref{lemma:distribute}(\ref{lemma:distribute:partii}), we have that the two assertions $\interp{a'}$ and $\interp{\SLtoC{a'}}$ are equivalent.
Therefore, by Lemma \ref{lemma:distribute}(\ref{lemma:distribute:partiv}) and Theorem \ref{thm:correctness}, we obtain the two desired equivalences.
\end{proof}
\end{corollary}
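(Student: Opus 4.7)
The plan is to chain together three major results already established in the paper: Theorem \ref{thm:correctness} (partial-heap SL semantics coincides with our total-heap semantics on the SL fragment), Theorem \ref{theorem:wpsequiv} (SL-style and Chalice-style weakest pre-conditions agree after translation), and Lemma \ref{lemma:fullmapping} (the translation \SLtoC{\cdot} preserves semantics). Definition \ref{defn:interp} acts as the bridge that converts between the \SLE{} judgement $\modelsISL$ and the first-order judgement $\modelsFO$ on $\interp{\cdot}$.

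I will treat the exhale case first; the inhale case is completely parallel. Starting from $(\restrict{\Heap}{\Perm}), \Env \modelsSL \wpSL{\exhale(a),a'}$, I would first observe that $\wpSL{\exhale(a),a'} = a * a'$ lies in the SL fragment (restricted SL is a subset of SL, and $*$ is an SL connective), so Theorem \ref{thm:correctness} re-expresses the judgement as $\Heap,\Perm,\Env \modelsISL \wpSL{\exhale(a),a'}$. By Definition \ref{defn:interp} this is in turn equivalent to $\Env, \heapvar \mapsto \Heap, \permvar \mapsto \Perm \modelsFO \interp{\wpSL{\exhale(a),a'}}$.

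Next, Theorem \ref{theorem:wpsequiv} converts the \SLE{}-side weakest pre-condition into a Chalice one, yielding equivalence with $\wpCh{\exhale(\SLtoC{a}), \interp{a'}}$. The final step is to replace $\interp{a'}$ by $\interp{\SLtoC{a'}}$ inside the Chalice weakest pre-condition: Lemma \ref{lemma:fullmapping} gives $a' \isequiv \SLtoC{a'}$, so Lemma \ref{lemma:distribute}(\ref{lemma:distribute:partii}) makes $\interp{a'}$ and $\interp{\SLtoC{a'}}$ equivalent as first-order formulas, and Lemma \ref{lemma:distribute}(\ref{lemma:distribute:partiv}) then lifts this equivalence through the Chalice weakest pre-condition. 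Concatenating all of these equivalences yields the desired biconditional.

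There is no serious technical obstacle, since every step is a direct appeal to a previously proved result; the proof is essentially a bookkeeping exercise that threads equivalences together in the right order. The one point requiring care is ensuring the side-conditions line up: Theorem \ref{thm:correctness} demands that the assertion fed to it is pure SL, and Theorem \ref{theorem:wpsequiv} (via Corollary \ref{corollary:inhale}) demands syntactic self-framing of the Chalice assertion produced by \SLtoC{\cdot}, so one should verify that the image \SLtoC{a} of any restricted SL assertion is syntactically self-framing in the Chalice sense---an easy induction on the grammar of restricted SL using that every points-to is translated into a pair $\acc(e.f,\pi) * e.f = e'$ with the accessibility predicate placed before the equality.
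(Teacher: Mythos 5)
Your proposal is correct and follows essentially the same route as the paper: it chains Theorem \ref{thm:correctness}, Definition \ref{defn:interp}, Theorem \ref{theorem:wpsequiv}, and Lemmas \ref{lemma:fullmapping} and \ref{lemma:distribute} in the same way, differing only in the order in which the equivalences are concatenated. Your explicit remark that $\SLtoC{a}$ must be checked to be syntactically self-framing is a side-condition the paper discharges inside the proof of Theorem \ref{theorem:wpsequiv}, so noting it here is harmless and slightly more careful.
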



In this section, we have shown that the encoding of \inhale\ and \exhale\ into Boogie2 is equivalent to the separation logic weakest pre-condition semantics.  As a consequence, we have shown two things: (1) our model accurately reflects the semantics of Chalice's assertion language, and (2) a fragment of separation logic can be directly encoded into Chalice precisely preserving its semantics.

\section{Related Work}
\label{sec:relatedwork}
In this paper, we have provided a logic related to separation logic~\cite{IshtiaqOHearn'01,OHearnReynoldsYang'01}, which allows arbitrary expressions over the heap.  We have modified the standard presentation of an object-oriented heap for separation logic~\cite{parkinson_thesis} to separate the notion of access from value (and thereby also relate to implicit dynamic frames \cite{Smans_thesis}).
Most previous separation logics have combined these two concepts.  One notable exception is the separation logic for reasoning about Cminor~\cite{cminor}.   This logic also separates the ability to access memory, the mask, from the actual contents of the heap.  The choice in this work was to enable a reuse of a existing operational semantics for Cminor, rather than producing a new operational semantics involving partial states.   In the Cminor separation logic, they do not consider the definition of magic wand, or weakest pre-condition semantics, which is crucial for the connection with Chalice~\cite{LeinoM09}. Benton and Leperchey~\cite{benton-leperchey} also provide a logic for sequential program reasoning that uses total heaps and maps defining which locations can be accessed.

Smans' original presentation of IDF was implemented in a tool, VeriCool~\cite{Smans'09,Smans_thesis}. The results in this paper should also apply to the verification conditions generated by VeriCool.  In recent work, Smans \emph{et al.}~\cite{SmansJP10} describe an IDF approach as a separation logic. However, they do not present a model of the assertions, just the VCs of their analogues to inhale and exhale.  Hence, the work does not provide the strong connection between the VCs and the model of separation logic that we have provided. Vericool does however have a sound implementation of abstract predicates and pure functions (in fact, two different approaches; one for verification condition generation, as in \cite{Smans'09}, and one for symbolic execution, as in \cite{SmansJP10}). However, the approach for verification condition generation requires the formalisation of weakest pre-conditions in the presence of background axioms (used to define the meanings of predicates and pure methods). The use of these axioms cannot be summarised simply as part of a weakest pre-condition calculation, since the approach taken allows the prover to instantiate these axioms in an unbounded (and potentially non-terminating) way. Similarly, a comparison with a symbolic-execution-based approach would require more technical machinery than our current arguments based on first-order verification conditions. A comparison based on a runtime semantics for a language (as used to formalise soundness in \cite{Smans'09}) might work better than one dealing with a verification semantics, but this is beyond the scope of our work.

There have been many other approaches based on dynamic frames~\cite{Kassios'06,Kassios_thesis} to enable automated verification with standard verification tool chains; for instance,  Dafny~\cite{Rustan10dafny:an} and Region Logic~\cite{Banerjee'08}. Like Chalice, both also encode into Boogie2.  The connection between these logics and separation logic is less clear.  They explicitly talk about the footprint of an assertion, rather than implicitly. However, our new separation logic might facilitate future comparisons.

\section{Extensions and Applications}
\label{sec:extensions}
In this section we highlight the potential impact of our connection between separation logic and the implicit dynamic frames of Chalice, by explaining several ways in which ideas from one world can be transferred to the other.

\subsubsection*{Supporting Extra Connectives}
%

Our \ESL{} logic supports many more connectives than have previously existed in implicit dynamic frames logics. For example, the support for a ``magic wand'' in the logic (or indeed an unrestricted logical implication) is a novel contribution, which paves the way for investigating how to extend Chalice to support this much-richer assertion language. While a formal semantics for the magic wand does not immediately tell us how to implement inhaling and exhaling such assertions correctly, it provides us with a means of formally evaluating such a proposal. Furthermore, our direct semantics for the assertion logic of Chalice provides a means of judging whether a particular implementation is faithful to the intended logical semantics.

In addition, while the notions of minimal permission extensions and locally-havoced heap extensions are technically complex, it seems that the resulting semantics for the magic wand may actually simplify the problem of defining a suitable weakest-precondition semantics. This is because, whereas the classical semantics of the separation logic magic wand involves a quantification over states (which is problematic for encoding to a first-order prover), the semantics we present in this paper can, in the (common) case of supported assertions, eliminate the need for the quantifier altogether; we need only check the unique, minimal extension of the initial state to make the left-hand side true, if such an extension exists. Exploring the practical consequences of these observations will be interesting future work.

\subsubsection*{Evaluating the Chalice Implementation}
Various design decisions in the Chalice methodology can be evaluated using our formal semantics. For example, Chalice deals with potential interference from other threads by ``havocing'' heap locations whenever permission to the location is newly granted. An alternative design would be to ``havoc'' such locations whenever all permission to them was \emph{given up} in an exhale, instead. This would provide different weakest pre-conditions for Chalice commands, and it would be interesting to investigate what differences this design decision makes from a theoretical perspective. Our results provide the necessary basis for such investigations.

Separation logics typically feature recursive (abstract) predicates in their assertion language. The Chalice tool also includes an experimental implementation of recursive predicates (without arguments), along with the use of ``functions'' in specifications to describe properties of the state in a way which could support information hiding. In the course of investigating how to extend our results to handle predicates in the assertion logics, we discovered that the current approach to handling predicates/functions in Chalice is actually unsound in the presence of functions and the decision to havoc on inhales rather than exhales. We, along with other Chalice contributors, are now working on a redesign of Chalice predicates based on our findings. As above, the formal semantics and connections we have provided give us excellent tools for evaluating such a redesign.

\subsubsection*{Implementing Separation Logic}
One exciting outcome of the results we have presented is that a certain fragment of separation logic specifications can be directly represented in implicit dynamic frames and automatically verified using the Chalice tool. This is a consequence of three results:
\begin{enumerate}[(1)]
\item We have shown that our total heap semantics for separation logic coincides with its prior partial heaps semantics.
\item We have shown that we can replace all ``points-to'' predicates with logical primitives from implicit dynamic frames, preserving semantics.
\item We have shown that the Chalice weakest-pre-condition calculation agrees with the weakest pre-conditions used in separation logic verification.
\end{enumerate}
The critical aspect which is missing is the treatment of predicates - once we can extend our correspondence results to handle recursively-defined predicates in the logics (which are used in virtually all separation logic verification examples), then it will be possible to exploit our work to use Chalice to implement separation logic verification. This will open up many interesting practical areas of work, in comparing the performance and encodings of verification problems between Chalice and separation logic based tools.

\subsubsection*{Old Expressions}
We have also observed that the use of a total heap semantics seems to make it easy to support certain extra specification features in a separation logic assertion language. In particular, the use of ``old'' expressions in method contracts (allowing post-conditions to explicitly mention values of heap locations in the pre-state of the method call) is awkward to support in a partial heaps semantics, since it expresses relationships between partial heap fragments which may not have obviously-related domains. As a consequence, separation logic based tools typically do not support this feature, and typically use logical variables to connect old and new values of heap locations. However, with our total heap semantics it seems rather easy to evaluate old expressions by simply replacing our total heap with a copy of the pre-heap.  Consider the following two specifications, where the left one uses old expressions and the right one a logical variable $v$:
\[
\begin{array}{ll}
\begin{array}{l}
\mathsf{requires}\ \acc(x.f)\\
\mathsf{ensures}\ \acc(x.f) * x.f = old(x.f) + 1
\end{array}
&
\begin{array}{l}
\mathsf{requires}\ \acc(x.f) * x.f = v\\
\mathsf{ensures}\ \acc(x.f) * x.f = v + 1
\end{array}
\end{array}
\]
To use the logical variable specification, we must find a witness for the logical variable $v$, while with old expressions this witness is not required as it simply places a constraint on the possible old and new heaps. This is because the assertions describing the value relationship in the old expression specification only appears in the post-condition, which, from the caller's perspective, ends up as an assume. On the other hand, in the logical variable alternative, the variable appears both in the pre- and post-conditions, hence it also ends up used in an assert (when the caller exhales the pre-condition).
Moving to old expressions may have benefits for building tool support for separation logic.

\subsection*{Acknowledgements}

We thank Mike Dodds, David Naumann, Ioannis Kassios, Peter M\"uller, Sophia Drossopoulou and the anonymous ESOP and LMCS reviewers for feedback on this work.


\bibliographystyle{abbrv}


%
%

\bibliography{oo}
\newpage

\appendix

\section{Proofs}
\proofslist
\end{document}